\def\<{\langle}\def\>{\rangle}
\def\bb{\langle\!\langle}\def\kk{\rangle\!\rangle}
\def\ket#1{|#1\rangle}\def\bra#1{\langle #1|}
\def\Tr{\operatorname{Tr}}\def\:{\hbox{\bf
    :}}
\def\R{\mathbb R}
\def\C{\mathbb C}
\def\Z{\mathbb Z}
\def\Var{\mathsf{Var}}
\def\spc#1{\mathcal{#1}}
\newcommand{\ee}{\mathrm{e}}
\newcommand{\ii}{\mathrm{i}}
\newcommand{\dd}{\mathrm{d}}
\newcommand{\HH}{\mathcal{H}} 
\newcommand{\Comb}{\operatorname{Comb}}
\newcommand{\JJ}{\mathcal{J}}
\newcommand{\In}{\text{in}}
\newcommand{\out}{\text{out}}
\newtheorem{lemma}{Lemma}
\newtheorem{theorem}{Theorem}
\newtheorem{definition}{Definition}
\begin{document}
\title{Optimal quantum metrology under energy constraints}
 
\author{Longyun Chen}
\author{Yuxiang Yang}
\affiliation{ QICI Quantum Information and Computation Initiative, School of Computing and Data Science, The University of Hong Kong, Pokfulam Road, Hong Kong SAR, China}

\begin{abstract} 
The traditional framework of quantum metrology commonly assumes unlimited access to resources, overlooking resource constraints in realistic scenarios. As such, the optimal strategies therein can be infeasible in practice. Here, we investigate quantum metrology where the total energy consumption of the probe state preparation, intermediate control operations, and the final measurement is subject to a constraint. 
We establish a comprehensive theoretical framework for characterizing energy-constrained multi-step quantum processes, based on which we develop a general optimization method for energy-constrained quantum metrology that determines both the optimal precision and the corresponding strategy. 
Using the method, we determine the ultimate precision limit of energy-constrained phase estimation and identify a novel advantage of quantum superpositions of causal orders
in enhancing the energy efficiency of adaptive quantum estimation.


\end{abstract}
 
\maketitle

\noindent{\it Introduction.}  
Quantum metrology exploits quantum effects to achieve the ultimate precision in parameter estimation. The field has demonstrated the potential of sensing in the regime of the Heisenberg limit, where quantum-enhanced protocols attain a quadratic advantage in precision over classical counterparts \cite{giovannettiQuantumEnhancedMeasurementsBeating2004a,giovannettiAdvancesQuantumMetrology2011,helstromQuantumDetectionEstimation1969,holevoProbabilisticStatisticalAspects2011,giovannettiQuantumMetrology2006}.

Traditionally, quantum metrology has assumed unlimited access to resources, allowing for the preparation of arbitrary quantum states and the implementation of any unitary evolution. This idealization, while mathematically convenient, fails to capture the physical constraints present in realistic experimental scenarios. 
A more realistic consideration is to investigate the task with constrained resources, like energy. Energy constraints in quantum information processing have attracted significant attention across various research areas, including quantum computation~\cite{chiribellaFundamentalEnergyRequirement2021,meierEnergyConsumptionAdvantageQuantum2025} and quantum channel capacities~\cite{giovannettiBroadbandChannelCapacities2003,guhaCapacityBosonicWiretap2008,wildeEnergyconstrainedPrivateQuantum2018}.  For quantum metrology, researchers have examined the impact of energy uncertainty on quantum clock resolution~\cite{buzekOptimalQuantumClocks1999,faistTimeenergyUncertaintyRelation2023}, photon number constraints in optical systems~\cite{demkowicz-dobrzanskiQuantumLimitsOptical2015,knottPracticalQuantumMetrology2016,wolfgrammEntanglementenhancedProbingDelicate2013}, and energy-constrained frequency estimation~\cite{liuzzo-scorpoEnergyefficientQuantumFrequency2018,goreckiInterplayTimeEnergy2025}.

Despite these advances, existing studies have several limitations. Most investigations~\cite{giovannettiBroadbandChannelCapacities2003,guhaCapacityBosonicWiretap2008,wildeEnergyconstrainedPrivateQuantum2018,buzekOptimalQuantumClocks1999,demkowicz-dobrzanskiQuantumLimitsOptical2015,knottPracticalQuantumMetrology2016,wolfgrammEntanglementenhancedProbingDelicate2013,goreckiInterplayTimeEnergy2025,faistTimeenergyUncertaintyRelation2023} focus primarily on the energy cost of state preparation. However, choosing suitable measurements and intermediate control operations, including error correction~\cite{kesslerQuantumErrorCorrection2014,arradIncreasingSensingResolution2014,durImprovedQuantumMetrology2014,demkowicz-dobrzanskiAdaptiveQuantumMetrology2017,zhouAchievingHeisenbergLimit2018}, unitary inversion~\cite{yuanSequentialFeedbackScheme2016}, and dynamical decoupling~\cite{sekatskiDynamicalDecouplingLeads2016}, are essential for achieving the ultimate precision, whose energy cost cannot be overlooked. 
Crucially, to determine the ultimate precision achievable by an experimenter with limited energy, one has to regard the state preparation, control operations, and the measurement as a unified \emph{multi-step quantum process} and optimize over all such processes with energy constraints.
Such energy consumption of the whole process was discussed recently in Ref.~\cite{meierEnergyConsumptionAdvantageQuantum2025}, however, its framework does not allow for determining an explicit value of the energy consumed by a generic operation. Therefore, a comprehensive framework for characterizing energy-constrained quantum processes remains to be established. 

Here, we establish a framework to characterize general multi-step processes with energy constraints, built upon the quantum comb approach for quantum metrology~\cite{chiribellaQuantumCircuitsArchitecture2008,chiribellaTheoreticalFrameworkQuantum2009,chiribellaOptimalNetworksQuantum2012,yangMemoryEffectsQuantum2019,altherrQuantumMetrologyNonMarkovian2021,liuOptimalStrategiesQuantum2023,liuFullyOptimizedQuantumMetrology2024,kurdzialekUniversalBoundsQuantum2025}. We then use it to develop a general optimization method for energy-constrained quantum metrology. 
For phase estimation, while unlimited energy allows for arbitrarily high precision by increasing dimension~\cite{kitaevQuantumMeasurementsAbelian1995,nielsenQuantumComputationQuantum2012}, our framework reveals an ultimate precision scaling of $1/E^2$ with $E$ being the energy constraint.
Extending the framework beyond fixed causality, we show an advantage of quantum superposition of causal orders \cite{hardyQuantumGravityComputers2009,chiribellaQuantumComputationsDefinite2013} in energy-constrained quantum metrology, as well as a strict hierarchy differentiating strategies and battery models, i.e., different means of energy supply.


\smallskip

\noindent{\it Energy-constrained multi-step processes.} 
To begin with, let us consider one-step processes, i.e., quantum channels. For convenience, the ground state energy is assumed to be zero. To realize the energy change throughout the process, the operation needs to interact with an external energy source, referred to as a ``battery''.
For instance, preparing a quantum state $\rho$ requires energy from the battery equal to $\Tr[H\rho]$, with $H$ being the Hamiltonian. 
For a quantum channel $\mathcal{E}$, the energy needed from the battery is at least the energy gap $\Delta E = \Tr[H_1\mathcal{E}(\rho)] - \Tr[H_0\rho]$ between the output and input states, where $H_0$ ($H_1$) is the Hamiltonian of the input (output). The energy consumption of $\mathcal{E}$ is $E(\mathcal{E}) := \max_{\rho} (\Tr[H_1\mathcal{E}(\rho)] - \Tr[H_0\rho])$.

Next, we show how to generalize 
to multi-step processes and efficiently determine the energy consumption, by employing the framework of quantum combs~\cite{chiribellaQuantumCircuitsArchitecture2008,chiribellaTheoreticalFrameworkQuantum2009}.
As shown in Fig.~\ref{fig:battery_setting}, an $N$-step quantum process with memory interacts with the system for $N$ times, taking in an input state in $\HH_{2n-2}$ and yielding an output state in $\HH_{2n-1}$ at step $n$. Crucially, the process may utilize an underlying ancilla $\rho_C$, enabling control, information preservation, and correlation between different steps.
Mathematically, it corresponds to a completely positive trace-preserving (CPTP) map from the joint input space $\bigotimes_{n=1}^N \HH_{2n-2}$ to the output space $\bigotimes_{n=1}^N \HH_{2n-1}$. Consequently, its Choi operator, obtained by sending half of a maximally entangled state through the joint input, is a positive semidefinite operator on $\bigotimes_{n=0}^{2N-1} \HH_n$.
Let $\Comb[(\HH_0,\HH_1),\dots,(\HH_{2N-2},\HH_{2N-1})]$ denote the set of all such Choi operators representing valid $N$-step processes with memory. A positive semidefinite operator $C$ on $\bigotimes_{n=0}^{2N-1} \HH_n$ belongs to this set if and only if it satisfies the following recursive conditions \cite{chiribellaQuantumCircuitsArchitecture2008,chiribellaTheoreticalFrameworkQuantum2009}:
\begin{equation}\label{eq:comb_def}
    \begin{cases}
        \Tr_{2n-1}[C^{(n)}] = C^{(n-1)}\otimes I_{2n-2}, & n=2,\dots,N, \\
        \Tr_{1}[C^{(1)}] = I_{0},
    \end{cases}
\end{equation}
where $\Tr_n[\cdot]$ and $I_n$ denote the partial trace and the identity operator on $\HH_n$, respectively, and $C^{(n)}$ is the Choi operator of the sub-process comprising the first $n$ steps.
Eq.~(\ref{eq:comb_def}) characterizes the \emph{causality} of the process. Namely, the $N$ steps occur in a fixed order. Later, we will also discuss the consequences of removing these causal constraints.
Interlacing two quantum combs will result in a new quantum comb that represents their sequential composition, which can be formally obtained by the link product~\cite{chiribellaQuantumCircuitsArchitecture2008,chiribellaTheoreticalFrameworkQuantum2009}. For two quantum combs $C_1$ on $\bigotimes_{a \in A} \HH_a$ and $C_2$ on $\bigotimes_{b \in B} \HH_b$, their link product 
\begin{equation} \label{eq:link_product}
    C_1 * C_2 := \Tr_{A \cap B} [(C_1^{T_{A \cap B}} \otimes I_{B \backslash A}) (I_{A \backslash B} \otimes C_2)]
\end{equation}
yields a quantum comb on $\bigotimes_{c \in (A \cup B) \backslash (A \cap B)} \HH_c$. Here, $C^{T_{A \cap B}}$ means the partial transpose of the operator $C$ on the space $\bigotimes_{c \in A\cap B} \HH_c$. See Appendix.~\ref{sec:quantum_comb} for more details about quantum combs.

For a generic quantum comb $C$, we determine its energy consumption in a manner analogous to the energy consumption $E(\mathcal{E})$ for a channel $\mathcal{E}$. 
Notably, there are two possible models as shown in Fig.~\ref{fig:battery_setting}: a \emph{global battery} shared across all steps, or separate \emph{local batteries} for individual steps. Although these schemes are equivalent for single-step channels, they lead to different energy consumption in multi-step processes. 
\begin{figure}[t]
    \centering
    \includegraphics[width=0.9\linewidth]{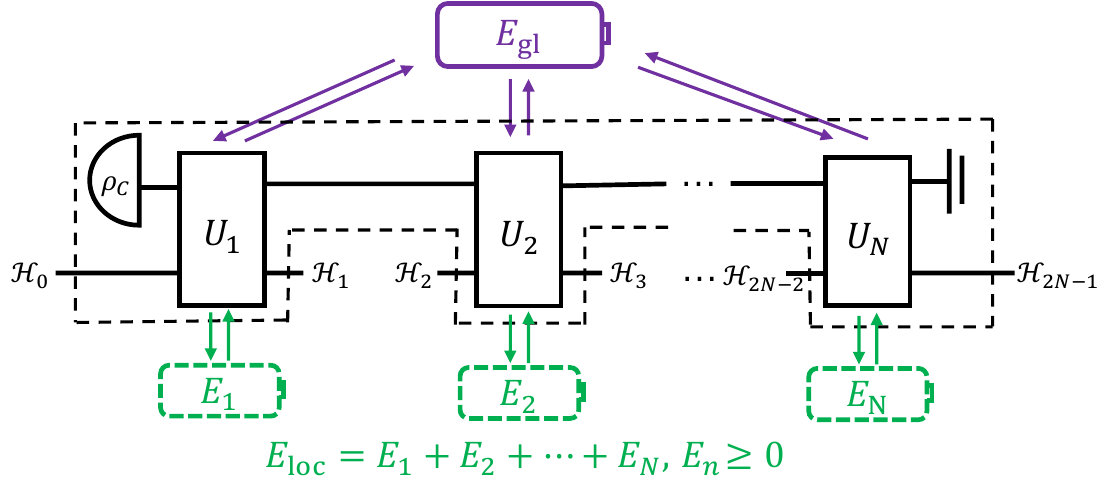}
    \caption{An $N$-step energy-constrained process, with a global battery with initial energy $E_{\text{gl}}$ shared across all steps (the purple solid lines) or $N$ isolated local batteries with initial energy $E_n \geq 0$ respectively, where the $n$-th battery exclusively interacts with the $n$-th step (the green dashed lines). Energy can be transferred bidirectionally. }
    \label{fig:battery_setting}
\end{figure}

\begin{definition}[Battery models]
    We say an $N$-step comb   
    is $E$-globally ($E$-locally) energy-constrained if there exists a global battery ($N$ local batteries) with initial (total) energy $E$: for any joint input state $\rho$ on $\bigotimes_{m=1}^N \HH_{2m-2}$ and any $n=1,\dots,N$, the total energy increase from the inputs to the outputs in the first $n$ steps does not exceed $E$ for the $E$-globally energy-constrained case, and the energy increase from the input to the output in the $n$-th step does not exceed $E_n \geq 0$ with $\sum_{n=1}^N E_n=E$ for the $E$-locally energy-constrained case.  
\end{definition}

In the above definition, we require $E_n\ge0$ for the local battery model to prevent energy released in one step from being transferred to other steps.
Note that we only consider the comb input and the comb output, while in Fig.~\ref{fig:battery_setting} the unitary operations $U_n$ act on both the input/output spaces and the control ancilla system. This simplification is justified because we can always find a unitary implementation that achieves the energy changes that are the same as the changes between the comb inputs and outputs, as will be shown in Thm.~\ref{thm:comb_implementation}.

For an $N$-step process, we first examine how energy flows within these two battery models. For a quantum comb $C$, denote the Hamiltonian of $\HH_{n}$ by $H_{n}$ and the Hamiltonian of $\HH_{n_1} \otimes \cdots \otimes \HH_{n_k}$ by $H_{n_1,\dots,n_k} = \sum_{j=1}^k H_{n_j} \otimes I$, with $I$ being the identity on a proper space. When we input a state $\rho$ from $\bigotimes_{n=1}^{N} \HH_{2n-2}$ to the comb, it yields an output state $\sigma = C * \rho$ determined by the link product in (\ref{eq:link_product}). The energy change in the $n$-th step can then be calculated as:
\begin{align}
    & \Tr[(H_{2n-1} \otimes I) \sigma] - \Tr[(H_{2n-2} \otimes I) \rho] \nonumber\\
    =& \Tr[(H_{2n-1} \otimes I) (C*\rho)] - \Tr[(H_{2n-2} \otimes I) \rho]  \label{eq:energy_change}\\
    =& \Tr\left[ \left( \Tr_{1,3,\dots,2N-1}[(H_{2n-1} \otimes I) C] - (H_{2n-2})^T \otimes I \right) \rho^T \right],\nonumber
    \end{align} 
where $I$ denotes the identity operator on a proper space. 

We denote by $E_{\text{gl/loc}}(C)$ the minimal energy consumption, i.e., the minimum of $E$ such that the $N$-step quantum comb $C$ is $E$-globally/locally energy-constrained, and by $\Comb^{\leq E}_{\text{gl/loc}}[(\HH_0,\HH_1),\dots,(\HH_{2N-2},\HH_{2N-1})]$ the set of $E$-globally/locally energy-constrained combs. When the spaces are clear from context, we simply write $\Comb^{\leq E}_{\text{gl/loc}}$. 
With (\ref{eq:energy_change}), we characterize the sets $\Comb^{\leq E}_{\text{gl}}$ and $\Comb^{\leq E}_{\text{loc}}$ for energy-constrained processes in Appendix~\ref{sec:proof_energy_constrained_comb}:
\begin{theorem}[Energy-constrained combs] \label{thm:energy_constrained_comb}
    For an $N$-step quantum comb $C$, $C \in \Comb^{\leq E}_{\text{gl/loc}}$ if and only if
        $E_{\text{gl/loc}}(C) \leq E$. Here, the energy consumption can be determined by $E_{\text{gl}}(C)=\max_{n\in[N]}\{\lambda_{\max}(O_n)\}$ and $E_{\text{loc}}(C)=\sum_{n=1}^N\max\{\lambda_{\max}(O'_n),0\}$, where  $O_n:=\Tr_{1,3,\dots,2N-1}[(H_{1,3,\dots,2n-1} \otimes I) C] - (H_{0,2,\dots,2n-2})^T \otimes I$, $O_n':=\Tr_{1,3,\dots,2N-1}[(H_{2n-1} \otimes I) C] - (H_{2n-2})^T \otimes I$, and $\lambda_{\max}(\cdot)$ denotes the maximal eigenvalue.
\end{theorem}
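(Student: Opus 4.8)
The plan is to reduce the two battery constraints to largest-eigenvalue problems by writing the relevant energy increases as linear functionals of the input state and then optimizing over all inputs.

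First I would record the basic energy-accounting identity. Summing the single-step relation in Eq.~\eqref{eq:energy_change} over the first $n$ steps and using $H^\out_{1,\dots,n}=\sum_{m=1}^n H^\out_m$ together with $H^\In_{1,\dots,n}=\sum_{m=1}^n H^\In_m$, the total energy increase from the comb inputs to the comb outputs across the first $n$ steps equals $\Tr[O_n\,\rho^T]$ for every input state $\rho$, while the increase in the single $n$-th step equals $\Tr[O'_n\,\rho^T]$. Thus each energy quantity of interest is a linear functional of $\rho^T$ determined by the fixed operators $O_n$ and $O'_n$.

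Next I would verify that $O_n$ and $O'_n$ are Hermitian, so the optimization is a genuine maximal-eigenvalue problem. The only non-obvious term is $\Tr_{\HH^\out_{1,\dots,N}}[(H^\out\otimes I)C]$: expanding the partial trace in an eigenbasis of the Hermitian output Hamiltonian writes it as a real-weighted sum of diagonal blocks of the positive operator $C$, hence Hermitian. Since $(H^\In)^T=\overline{H^\In}$ is Hermitian as well, so are $O_n$ and $O'_n$. Because $\rho\mapsto\rho^T$ is a bijection of the density-operator set onto itself (transpose preserves Hermiticity, spectrum, and trace), maximizing $\Tr[O_n\rho^T]$ over all input states $\rho$ gives exactly $\lambda_{\max}(O_n)$, and likewise for $O'_n$.

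The two battery models then follow directly. For the global battery, the energy drawn by time $n$ is the cumulative increase $\Tr[O_n\rho^T]$, and a battery of initial energy $E$ supports the process iff this never exceeds $E$ for every $n$ and every input; taking the supremum over inputs and then over $n$, this is equivalent to $E\geq\max_{n}\lambda_{\max}(O_n)$, identifying $E_{\text{global}}(C)=\max_{n}\lambda_{\max}(O_n)$ and the stated membership criterion. For the local model the $n$-th battery must alone cover the worst-case increase of the $n$-th step, forcing $E_n\geq\lambda_{\max}(O'_n)$, while the imposed non-negativity gives $E_n\geq0$; hence $E_n\geq\max\{\lambda_{\max}(O'_n),0\}$, and minimizing the total $\sum_n E_n$ under these independent constraints yields $E_{\text{local}}(C)=\sum_{n=1}^N\max\{\lambda_{\max}(O'_n),0\}$. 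In both cases the equivalence with $E_{\text{global/local}}(C)\leq E$ is immediate from the definition of minimal energy consumption, and the whole argument stays at the level of comb inputs and outputs, with the existence of a unitary dilation realizing these exact energy changes deferred to Thm.~\ref{thm:comb_implementation}. I expect the most delicate point to be the eigenvalue reduction: one must confirm that placing the transpose on $\rho$ rather than on $O_n$ is harmless, which is precisely the bijection argument, and that the non-commutativity of $H^\out\otimes I$ with $C$ does not spoil Hermiticity, which is handled by diagonalizing $H^\out$ before taking the partial trace. The remaining distinction between the models---why the local bound carries the $\max\{\cdot,0\}$ while the global bound does not---reflects that released energy may be reused within a shared battery but not across the isolated local batteries.
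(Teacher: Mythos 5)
Your proposal is correct and follows essentially the same route as the paper's proof: both reduce the battery conditions to the worst-case value of the linear functionals $\Tr[O_n\rho^T]$ and $\Tr[O'_n\rho^T]$ over input states and identify that worst case with the largest eigenvalue (the paper phrases this as the operator inequality $EI\succeq O_n$, you phrase it via the transpose bijection on density operators — the same step). Your explicit check that $O_n$ and $O'_n$ are Hermitian is a detail the paper leaves implicit, but it does not change the argument.
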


\smallskip

\noindent{\it Implementation of energy-constrained combs.}
A key question that follows is whether every energy-constrained quantum comb in $\Comb^{\leq E}_{\text{gl/loc}}$ can be physically realized.
Here, we give a positive answer, by showing how to build these processes using unitary operations and a control ancilla system while maintaining the prescribed energy constraints, as shown in Fig.~\ref{fig:battery_setting}.

Explicitly, we choose a control ancilla $\HH_C \simeq \HH_A^{\otimes 3}$, consisting of three identical systems $\HH_A$ satisfying $\dim \HH_A \geq \prod_{n=0}^{2N-1} \dim \HH_n $. We assume $\HH_{2n-2} \simeq \HH_{2n-1}$ (otherwise one can be extended to the same dimension as another).
With the above choice, we construct an implementation where the energy of the control ancilla remains constant throughout the entire process. Specifically, for each unitary operation $U_n$, the total energy change (encompassing both the comb input/output spaces and the control ancilla) precisely matches the energy gap between the comb input and output in the $n$-th step.

\begin{theorem}[Implementation of energy-constrained combs]
    \label{thm:comb_implementation}
    For any $C \in \Comb^{\leq E}_{\text{gl/loc}}$, there exists an implementation, as shown in Fig.~\ref{fig:battery_setting}, consisting of a sequence of unitary operations $\{U_n\}_{n=1}^N$ with $U_n$ being the operation of the $n$-th step, a control ancilla system initialized in a pure state $\rho_C$ whose energy remains constant throughout the whole process, and a global battery ($N$ local batteries) charged with energy $E$.
\end{theorem}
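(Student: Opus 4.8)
The plan is to prove the theorem constructively, by combining the standard unitary dilation of a quantum comb with an energy-bookkeeping argument that routes \emph{all} energy flux through the battery while pinning the control ancilla at constant energy. First I would invoke the dilation theorem for quantum combs \cite{chiribellaQuantumCircuitsArchitecture2008,chiribellaTheoreticalFrameworkQuantum2009}: any $C$ can be realized by a sequence of isometries carrying an internal memory which, after padding with a fixed pure ancilla $\ket{\phi_0}$, become unitaries $U_n$ acting on $\HH^\In_n$ together with the control ancilla $\HH_C$ and producing $\HH^\out_n$ together with the updated memory (these are genuine square unitaries because we assume $\HH^\In_n\simeq\HH^\out_n$). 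The generous dimension $\dim\HH_A \geq \prod_n (\dim\HH^\In_n \cdot \dim\HH^\out_n)$ and the three copies in $\HH_C \simeq \HH_A^{\otimes 3}$ give enough room to host the comb's entire running memory plus the scratch registers needed below.

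The core of the argument is to arrange the implementation so that the control ancilla's energy never changes. I would achieve this by choosing the ancilla Hamiltonian to act trivially on (equivalently, to keep the memory inside a fixed eigenspace of) the registers carrying the comb's internal memory and the dilation environment, so that loading and unloading information into the ancilla at each step is energy-neutral \emph{for the ancilla}. With this choice, the total energy change produced by $U_n$ on the joint space $\HH^\In_n \otimes \HH_C$ reduces, for \emph{every} input $\rho$, to the energy gap between the $n$-th comb output and the $n$-th comb input, i.e.\ to $\Tr[(H^\out_n \otimes I)(C*\rho)] - \Tr[(H^\In_n \otimes I)\rho]$, exactly the quantity computed in \eqref{eq:energy_change}. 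The battery is then coupled to the $n$-th step through an energy-conserving interaction that supplies precisely this gap, restoring the ancilla energy to its initial value after each step.

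The final step is the energy accounting, which invokes Thm~\ref{thm:energy_constrained_comb}. In the global model the cumulative energy drawn from the battery through step $n$ equals the total input-to-output energy increase over the first $n$ steps; its worst case over inputs is $\lambda_{\max}(O_n)\le E_{\text{global}}(C)\le E$, so a single battery initialized with energy $E$ never runs out and suffices for the input actually being processed. In the local model the energy drawn at step $n$ is the per-step gap, bounded over inputs by $\max\{\lambda_{\max}(O_n'),0\}$, so the $n$-th local battery with $E_n=\max\{\lambda_{\max}(O_n'),0\}$ covers it, and $\sum_n E_n = E_{\text{local}}(C)\le E$. Since $C$ is assumed $E$-globally (resp.\ $E$-locally) energy-constrained, these bounds hold and the construction closes.

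I expect the main obstacle to be the middle step: building the dilation so that the ancilla energy is \emph{exactly} constant for \emph{all} inputs while still reproducing $C$ and while letting the battery absorb a state-dependent amount of energy through a genuinely energy-conserving coupling. This is where the three ancilla copies are needed -- one to hold the running memory, the others to coherently record and return the energy-carrying part of the state, so that the energy-nonconserving action on the system can be embedded into an energy-conserving unitary on system-plus-battery. Verifying that this embedding preserves the comb's action on $\HH^\In_n,\HH^\out_n$ and that the battery's spectrum is rich enough to deliver the required coherent energy shifts is the delicate part; the preceding dilation and the subsequent bookkeeping are comparatively routine given Thm~\ref{thm:energy_constrained_comb}.
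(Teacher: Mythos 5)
There is a genuine gap at exactly the point you flag as the ``main obstacle'': you never construct the mechanism that keeps the ancilla energy constant, and the mechanism you sketch would not work as stated. Your proposal is to ``choose the ancilla Hamiltonian to act trivially'' or to ``keep the memory inside a fixed eigenspace'' of $H_C$. The first option changes the problem: the paper treats the ancilla Hamiltonian $H_A=\sum_j E_j^A\ket{e_j}\bra{e_j}$ as given and arbitrary, and the entire content of the implementation theorem is that constant ancilla energy can be arranged for a \emph{generic} $H_A$; declaring $H_C$ degenerate makes the statement vacuous rather than proving it. The second option is generally impossible: the running memory must store $\prod_n \dim\HH^\In_n\dim\HH^\out_n$ distinguishable configurations, which cannot all sit in one eigenspace of a nondegenerate $H_C$. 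The paper's actual construction is quite different: it places the Kraus/memory index $j$ into Fourier-transformed energy eigenstates, choosing $\ket{\phi_j}=\frac{1}{\sqrt M}\sum_k\omega_M^{jk}\,F\ket{e_j}\otimes F\ket{e_k}\otimes F\ket{e_k}$ with $F$ the Fourier transform on the eigenbasis of $H_A$. Because $|\bra{e_l}F\ket{e_j}|^2=1/d_A$ for all $l,j$, every reduced ancilla state --- input and output, for \emph{every} input $\rho$ --- has energy exactly $3\bar E$ with $\bar E=\sum_j E_j^A/d_A$; the duplicated third register is what forces the cross terms to cancel when computing the energies of the second and third copies, not a device for ``recording and returning the energy-carrying part of the state'' as you describe.

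A second, related omission is the sequential factorization. You invoke the comb dilation theorem to get isometries $U_n$ acting on $\HH^\In_n\otimes\HH_C$, but the dilation must be built \emph{compatibly} with the Fourier ancilla states above. The paper does this inductively: it uses the unitary equivalence of the Kraus representations of $\Tr_{\HH^\out_n}\circ\,\mathcal E^{(n)}$ and $\mathcal E^{(n-1)}\circ\Tr_{\HH^\In_n}$ to produce $U_n=\sum_k U_k^{(n)}\otimes F\ket{e_k}\bra{e_k}F^\dagger\otimes I_A$, a unitary controlled on the second ancilla register in the Fourier basis, and verifies $\hat U_n(I\otimes\ket{\phi_0})=(U_n\otimes I)(\hat U_{n-1}\otimes I)(I\otimes\ket{\phi_0})$. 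Your final energy-accounting paragraph (reducing the battery budget to $\lambda_{\max}(O_n)$ and $\max\{\lambda_{\max}(O_n'),0\}$ via Thm.~\ref{thm:energy_constrained_comb}) is correct and matches the paper, but it is the easy part and only becomes available once the constant-energy dilation has actually been exhibited.
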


The proof can be found in Appendix~\ref{sec:proof_comb_implementation}. 
Note that we exclude the energy cost of initializing the control ancilla system, since the control ancilla stays ``hidden'' throughout the process and never becomes part of the output, allowing its energy to be recycled at the end.
We also reason, in the End Matter, that $\{U_n\}$ can always be implemented with energy-preserving operations and a sufficiently large quantum battery system. This aligns with the quantum comb framework, where arbitrarily large ancillary systems are inherently considered in the theoretical construction, making our battery implementation naturally compatible with the existing comb formalism.

\smallskip

\noindent{\it Metrology with limited energy.}
We now apply the formalism to quantum metrology.
As depicted in Fig.~\ref{fig:estimation_process} and Fig.~\ref{fig:causal_superposition_circuit}, given a parameterized multi-step process $C_\theta$ with an unknown parameter $\theta$ (which may represent, e.g., a quantum channel, a multi-step Markovian process, or a non-Markovian process with an inaccessible environment), the goal is to estimate $\theta\in [0,2\pi]$ by interacting with it via a process $T$ that interleaves with $C_\theta$ \cite{yangMemoryEffectsQuantum2019}, which incorporates all actions (state preparation, control, and measurement) taken by the experimenter. $T$ generalizes the probe state preparation and is thus referred to as the probe (process). Since both $C_\theta$ and $T$ can be formulated as Choi operators, we use $C_\theta$ and $T$ to denote these Choi operators for simplicity. An estimate $\hat{\theta}(x)$ of $\theta$ can be obtained by applying $T$ to $C_{\theta}$, followed by a final measurement $\mathcal{M}$ and an estimator function $\hat{\theta}$ that post-processes the measurement outcome $x$. To properly account for the energy consumption of realizing energy-non-preserving measurements, we restrict $\mathcal{M}$ to be a projection in the energy eigenbasis, thereby delegating the measurement energy consumption to  $T$.  

\begin{figure}[t]
    \centering
    \includegraphics[width=0.9\linewidth]{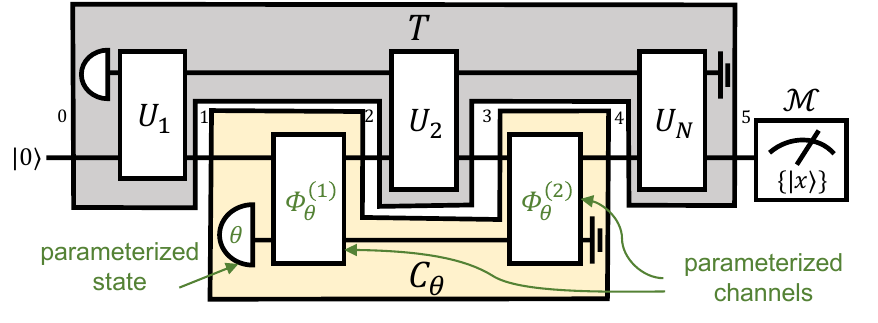}
    \caption{Our goal is to estimate $\theta$ from $C_\theta$ with an energy-constrained probe process $T$, consisting of state preparation and multi-step control, and a final measurement $\mathcal{M}$ in the energy eigenbasis $\{\ket{x}\}$.  }
    \label{fig:estimation_process}
\end{figure}

The performance of estimation is characterized by a cost function $c(\hat{\theta},\theta)$, e.g., the squared error $(\hat{\theta} - \theta)^2$ or the Holevo cost function $4\sin^2((\hat{\theta}-\theta)/2)$~\cite{holevoProbabilisticStatisticalAspects2011}. In this work, we mainly consider the Holevo cost function with a uniform prior $p(\theta) = \frac{1}{2\pi}$. Specifically, with an energy consumption of no more than $E \geq 0$, our goal is to minimize the average cost
\begin{equation}\label{eq:avg_cost}
    \begin{aligned}
        \bar{c}(T, \hat{\theta})&= \int_\theta \sum_x  4\sin^2((\hat{\theta}(x)-\theta)/2) p(x|\theta) p(\theta)\, \dd\theta 
    \end{aligned}
\end{equation}
by choosing an appropriate probe $T \in \Comb^{\leq E}_{\text{gl/loc}}$, where $p(x|\theta)$ denotes the probability of obtaining the final measurement outcome $x$.

Next, we simplify (\ref{eq:avg_cost}) to find the optimal choice of the estimator $\hat{\theta}$ for a given $T$. We combine the initial input $\ket{0}$ into $T$ and regard the space $\HH_0$ as the trivial space $\emptyset$. With quantum instrument $\{T_x\}$, the probe process $T = \sum_x T_x \otimes \ket{x}\bra{x}$ with $\{\ket{x}\}$ being energy eigenbasis. The probability can determined as $p(x|\theta) = T_x * C_\theta = \Tr[T_x C_\theta^T]$ [cf.~Eq.~(\ref{eq:link_product})]. Meanwhile, note that $4\sin^2((\hat{\theta}(x)-\theta)/2) = 2 - 2 \Re[\ee^{\ii(\hat{\theta}(x)-\theta)}]$.
Defining a task operator $\JJ := \int_{\theta} \ee^{-\ii\theta} C_{\theta}^T p(\theta) \dd\theta$,
we can rewrite \eqref{eq:avg_cost} as
\begin{equation}\label{eq:avg_cost_simplified}
        \bar{c}(T, \hat{\theta}) = 2 - 2\sum_x \Re[ \ee^{\ii \hat{\theta}(x)} \Tr[T_x \JJ]].
\end{equation}
Since our goal is to minimize $\bar{c}(T, \hat{\theta})$, the optimal estimator can be immediately obtained as $\hat{\theta}_{\rm opt}(x) = -\arg(\Tr[T_x \JJ])$. Therefore, the average cost can be defined in a form only dependent on the probe $T$ as $\bar{c}(T) := \min_{\hat{\theta}}\bar{c}(T,\hat{\theta})$.

For qubit combs, i.e., $\spc{H}_i\simeq \mathbb{C}^2$ for every $i = 1,\dots,2N-1$, the form of the optimal estimator can be made more explicit. We show that if the qubit combs $\{C_\theta\}$ satisfy $E_{\text{gl/loc}}(C_\theta) \leq 0$ and covariance, i.e., taking two arbitrary $C_\theta$ and $C_{\theta'}$, concatenating each output $\spc{H}_{2n}$ of $C_\theta$ with the input $\spc{H}_{2n-1}$ of $C_{\theta'}$ yields a new comb isomorphic to $C_{\theta+\theta'}$, then there is an optimal estimator that yields either $0$ or $\pi$. The condition $E_{\text{gl/loc}}(C_\theta) \leq 0$ holds for non-energy-generating processes, such as phase channels and amplitude-damping channels defined on the energy eigenbasis. This result is formally obtained as Thm.~\ref{thm:qubit_optimal_estimator} in the End Matter. It simplifies the problem for qubit systems, which will be helpful for our later discussion. This result does not generalize to the qudit case (see the End Matter), in contrast to the energy-unconstrained case, where the range of the optimal estimator is always $\{2\pi k/d\}_{k=0}^{d-1}$ for any $d$ when $\{C_\theta\}$ is covariant~\cite{damOptimalQuantumCircuits2007a,chiribellaOptimalEstimationGroup2005,damOptimalPhaseEstimation2007}. 

While the above analysis primarily focuses on the Bayesian approach (with a uniform prior), this framework also accommodates non-uniform prior cases as well as the frequentist approach with the Fisher information being the figure of merit. A detailed discussion can be found in the End Matter.

\smallskip

\noindent{\it Phase estimation.}
As a practical example, we apply the general results to phase estimation. We mainly focus on the global battery scenario. A similar analysis can be performed for the local battery scenario.

For phase estimation, given a $d$-dimensional system with equal-spaced Hamiltonian $H_d = \sum_{n=0}^{d-1} n \ket{n}\bra{n}$, our task is to estimate the parameter $\theta$ encoded in a phase channel diagonal in the energy eigenbasis defined as $\ket{n} \mapsto \ee^{\ii n \theta}\ket{n}$. 
In this case, the energy consumption of the probe $T$ incorporates both the cost of preparing the probe state and the cost of implementing the measurement. 
We denote by $\bar{c}_{\text{phase}}(E, d)$ the average cost $\bar{c}(T)$ optimized over $E$-globally energy-constrained probe combs $T \in \Comb^{\leq E}_{\text{gl}}$ for the $d$-dimensional phase channel.

For $d=2$, by the previous result about the optimal estimator, we can assume the range of the optimal estimator to be $\{0,\pi\}$ and exactly determine not only the minimum cost but also the optimal strategy (see Appendix~\ref{sec:proof_phase_estimation} for details). 
As $d$ increases, the minimum average cost decreases as the corresponding energy consumption increases.  
For $d\gg 1$, we consider the scaling of $\bar{c}_{\text{phase}}(E,\infty) := \lim_{d\rightarrow\infty} \bar{c}_{\text{phase}}(E,d)$ with respect to $E$. 
Using our technique, we obtain the scaling $\bar{c}_{\text{phase}}(E,\infty) \sim 1/E^2$ (see Appendix~\ref{sec:proof_high_dimension} for the proof). 
This result reveals a fundamental limitation in energy-constrained quantum metrology. Compared to the energy-unconstrained scenarios where the average cost can vanish by increasing system dimension \cite{nielsenQuantumComputationQuantum2012,kitaevQuantumMeasurementsAbelian1995}, our result demonstrates that limited energy resources inherently hinder such dimensional advantages. This finding highlights the crucial role of energy in quantum sensing strategies. It suggests that practical quantum sensors must carefully balance energy consumption against the ideal precision gained without energy constraints. It is possible that a suboptimal strategy in the ideal case could be optimal in the energy-constrained case due to higher energy efficiency.

\smallskip

\noindent{\it Energy-constrained causal superposition strategies.}  
Consider estimating $\theta$ from an $(N-1)$-step parameterized Markovian process, each step being a parameterized quantum channel $\mathcal{E}^{(n)}_\theta$. We upgrade the probe process $T$ by lifting the causal constraint (\ref{eq:comb_def}), allowing it to query $\{\mathcal{E}^{(n)}_\theta\}_{n=1}^{N-1}$ in a quantum superposition of different orders~\cite{hardyQuantumGravityComputers2009,chiribellaQuantumComputationsDefinite2013}. We assume the spaces $\{\HH_{n}\}_{n=1}^{2N-2}$ to be isomorphic (otherwise the smaller ones can be extended).

\begin{figure}[t]
    \centering
    \includegraphics[width=0.45\textwidth]{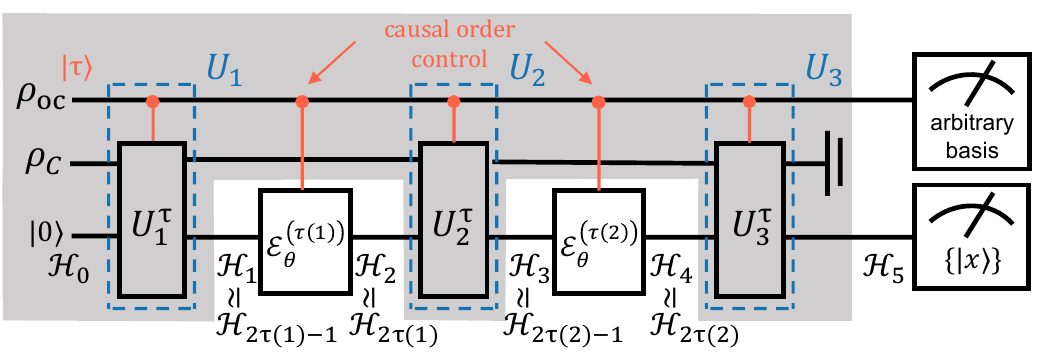}
    \caption{The quantum circuit to implement a causal superposition strategy with $N=3$.  A projective measurement in an arbitrary basis on the degenerate system $\HH_{\text{oc}}$ is allowed. The black (blue) part represents a definite causal order strategy without (with) an accessible ancilla $\HH_{\text{oc}}$. The orange part represents the strategy with quantum control of the causal order, which is a superposition of different causal orders $\tau \in S_{N-1}$.
    }
    \label{fig:causal_superposition_circuit}
\end{figure}


A causal superposition strategy is implemented by a quantum control of the causal order with an order control system $\HH_{\text{oc}}$~\cite{wechsQuantumCircuitsClassical2021}, which labels the current spacetime by giving the causal order $\tau \in S_{N-1}$ with the symmetric group $S_{N-1}$ of degree $N-1$, as Fig.~\ref{fig:causal_superposition_circuit}.
Notably, a measurement on $\HH_{\text{oc}}$ is allowed. Otherwise, the causal superposition can be simulated by a classical probabilistic mixture of the causal orders.

In the absence of energy constraints, the set of causal superposition processes, $\mathrm{Sup}$, is the collection of Choi operators on $(\bigotimes_{n=0}^{2N-1} \HH_{n}) \otimes \HH_{\text{oc}}$ satisfying 
\begin{equation} \label{eq:characterize_causal_superposition}
    \Tr_{\HH_{\text{oc}}}[T] = \sum_{\tau \in S_{N-1}} p_\tau T^\tau
\end{equation}
for an ensemble $\{p_\tau,T^\tau\}$, where $T^\tau \in \Comb^\tau:=\Comb[(\HH_0, \HH_{2\tau(1)-1}), \dots, (\HH_{2\tau(N-1)},\HH_{2N-1})]$ is a comb following the fixed causal order specified by $\tau$.

Now, we add the energy constraints. We assume the energy levels of $\HH_{\text{oc}}$ are degenerate, which means the changes in the order control system will not account for the energy consumption.
In the initialization phase of a causal superposition process, a battery is charged with energy $E$. After the spacetime branches, the battery likewise splits into multiple instances, one for each spacetime. 
If the energy flow between spacetime is not allowed, each spacetime possesses its own individual global (or local) battery, charged with energy $E$. The probe process in each spacetime must be achievable using only the energy $E$ in its respective battery. Regarding each spacetime as an individual definite causal order process, we can constrain the \emph{spacetime-individual battery} as:
\begin{equation}\label{eq:spacetime_individual_battery}
E_{\text{ind,gl/loc}}(T) := \max_{\tau\in S_{N-1}} E_{\text{gl/loc}}\left( T^\tau \right) \leq E.
\end{equation}
We denote by $\mathrm{Sup}^{\leq E}_{\text{ind,gl/loc}}$ the set of the energy-constrained causal superposition processes satisfying \eqref{eq:spacetime_individual_battery}. It is also possible to consider a model allowing the energy flow between spacetime. A detailed discussion on the distinct models can be found in Appendix~\ref {sec:separation_battery}.

\smallskip

\noindent{\it Energy efficiency advantage of causal superpositions.}
To determine the ultimate limit of energy-constrained causal superposition strategies, we replace the constraint $T \in \mathrm{Comb}^{\leq E}_{\text{gl/loc}}$ with $T \in \mathrm{Sup}^{\leq E}_{\text{ind,gl/loc}}$, and use the same method as before to minimize the average cost [cf.~\eqref{eq:avg_cost} and \eqref{eq:avg_cost_simplified}].
The preceding discussion introduces different types of strategies (definite causal order vs. causal superposition) and the battery models (global vs. local). We establish a hierarchy between them and show that the hierarchy can be made strict.

Let $\bar{c}_{\text{ind,gl}}^{\text{sup}}(E)$, $\bar{c}_{\text{gl}}^{\text{def}}(E)$, and $\bar{c}_{\text{loc}}^{\text{def}}(E)$ denote the optimal average cost for causal superposition strategies with spacetime-sharing global battery, and definite causal order strategies with global and local batteries, respectively, under an energy constraint $E$. The following hierarchy holds in general: 
\begin{equation}\label{eq:hierarchy}
\bar{c}_{\text{ind,gl}}^{\text{sup}}(E) \leq \bar{c}_{\text{gl}}^{\text{def}}(E) \leq \bar{c}_{\text{loc}}^{\text{def}}(E).
\end{equation} 
\begin{theorem}
[A strict hierarchy of energy-constrained strategies]
    There exist tasks of quantum metrology where all inequalities in (\ref{eq:hierarchy}) are strict.
\end{theorem}
\begin{figure}[t]
\centering
\includegraphics[width=0.95\linewidth]{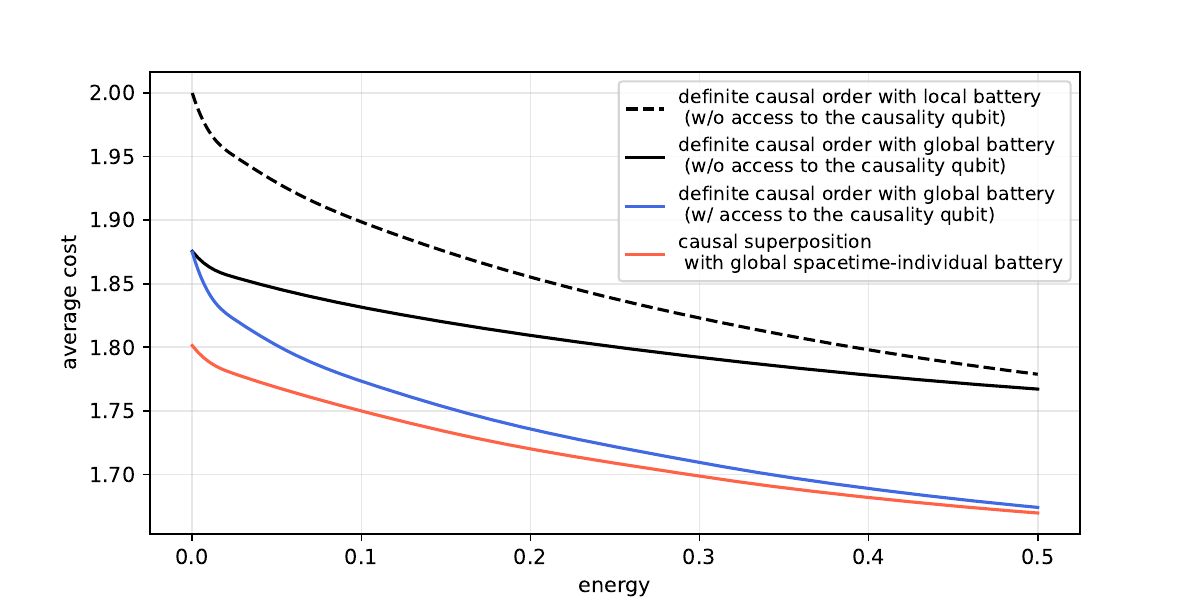}
\caption{The strict hierarchy of the optimal average cost $\bar{c}(E)$ for the different strategies and battery models.}
\label{fig:hierarchy}
\end{figure}
Recent works \cite{liuOptimalStrategiesQuantum2023,motheReassessingAdvantageIndefinite2024} discussed the advantage of causal superpositions but without energy constraints.
We prove this more general hierarchy by providing a concrete example: for the scheme in Fig.~\ref{fig:causal_superposition_circuit}, take $\mathcal{E}^{(1)}_\theta$ ($\mathcal{E}^{(2)}_\theta$) to be the phase channel $e^{-i\theta Z/2}$ ($e^{i\theta Z/2}$) preceded by a bit-flip channel $\rho \mapsto \frac{1}{2}\rho + \frac{1}{2}X\rho X$. The Hamiltonian $H = \ket{1}\bra{1}$. The minimum costs as functions of $E$ are shown in Fig. \ref{fig:hierarchy}.
To ensure a fair comparison, we also compare causal superposition strategies with definite causal order strategies with full and coherent access to a causality qubit — a resource not available in the original definite causal order setting. The gap persists, indicating that the higher energy efficiency stems from the causal superposition rather than additional degrees of freedom.

\smallskip

\noindent{\it Conclusion.}
We developed a comprehensive theoretical framework for characterizing energy-constrained quantum processes, using which we discovered several new phenomena in quantum metrology. 
For future directions, it is meaningful to investigate more  complex tasks, such as multi-parameter estimation~\cite{humphreysQuantumEnhancedMultiple2013,szczykulskaMultiparameterQuantumMetrology2016,hayashiFindingOptimalProbe2024,hayashiTightCramRRao2023,holevoProbabilisticStatisticalAspects2011,demkowicz-dobrzanskiMultiparameterEstimationQuantum2020,sidhuGeometricPerspectiveQuantum2020a} and sensing under non-Markovian noise~\cite{altherrQuantumMetrologyNonMarkovian2021,yangMemoryEffectsQuantum2019,chinQuantumMetrologyNonMarkovian2012}.
Our framework is not tailored for metrology or energy. Instead, it is a versatile general tool that can be used to investigate the ultimate tradeoff between performance and resources in generic quantum information processing tasks.

\smallskip

\noindent{\bf Note added.} After completion of this work, an independent work \cite{ahnefeldCoherenceResourcePhase2025} appeared on arXiv, which considered quantum metrology with limited coherence using the quantum comb technique. 

\smallskip

\noindent{\bf Acknowledgement.}  
We thank Yukuan Tao, Gerardo Adesso and Felix Ahnefeld for insightful discussions. 
This work is supported by the National Natural Science Foundation of China via the Excellent Young Scientists Fund (Hong Kong and Macau) Project 12322516, 
by the National Natural Science Foundation of China (NSFC)/Research Grants Council (RGC) Joint Research Scheme via Project N\underline{ }HKU7107/24,
by the Guangdong Provincial Quantum Science Strategic Initiative (GDZX2303007 and GDZX2403008),  
and by the Hong Kong Research Grant Council (RGC) through the General Research Fund (GRF) grant 17302724.

\smallskip 

\noindent{\bf End Matter on the physical implementation with quantum batteries.} 
From the perspective of quantum resource theory~\cite{chitambarQuantumResourceTheories2019,gourResourcesQuantumWorld2024}, the process given in Thm.~\ref{thm:comb_implementation} is still not yet guaranteed to be physically implementable: the unitary $U_n$ may not be free operations~\cite{chitambarQuantumResourceTheories2019,gourResourcesQuantumWorld2024} when energy is treated as a resource, since they are not necessarily strictly energy-preserving. What remains is to show that each $U_n$ acting on $\HH_{2n-2} \otimes \HH_C \simeq \HH_{2n-1} \otimes \HH_C$ can be implemented using an energy-preserving unitary $V_n$ that acts on both the system $\HH_{2n-2} \otimes \HH_C$ and a battery system $\HH_B$. This is to seek a free operation $V_n$ with a battery state $\beta$ in $\HH_B$, such that $\mathcal{U}_n(\cdot) = \Tr_{\HH_B} [\mathcal{V}_n(\cdot \otimes \beta)]$,
where $\mathcal{U}(\cdot) := U \cdot U^\dagger$ represents the unitary channel of $U$.
Such a physical implementation can be obtained by the construction presented in~\cite{abergCatalyticCoherence2014,chiribellaFundamentalEnergyRequirement2021}. For any unitary $U$ on a space $\HH$ with Hamiltonian $H$, if the energy levels are equally spaced, we can find a battery state $\beta$ with energy at most $\frac{\pi \|H\|}{2\sqrt{\varepsilon}}$ and an energy-preserving unitary $V$ on $\HH \otimes \HH_B$ that approximate $U$ with gate fidelity $1-\varepsilon$. A large battery then leads to $\epsilon\to 0$, enabling us to implement $U$ with arbitrarily high fidelity. This will not result in larger energy consumption, as the extra energy in the battery will not be dissipated and thus can be recycled at the end.

\smallskip

\noindent{\bf End Matter on qubit estimation.}
Here we provide a specific discussion for the optimal estimator for covariant qubit combs and the optimal strategy for qubit phase channel.

\smallskip

\noindent{\it The optimal estimator for covariant qubit combs.}
We obtain the explicit value of the optimal estimator for a covariant parameterized comb $C_\theta$ with non-positive energy consumption. We make a natural assumption that input spaces and the output spaces of $C_\theta$ are isomorphic and have the same Hamiltonian.

To clarify the meaning of the covariant combs, a parameterized comb $C_\theta$ is defined to be covariant if the concatenation $C_\theta \circ C_{\theta'}$ depicted in Fig.~\ref{fig:partial_link} is isomorphic to $C_{\theta+\theta'}$. Specifically, by the link product, for each $n \in \{1,\dots,N-1\}$, we can formally link the output space $\HH_{2n}$ of $C_\theta$ to the corresponding input space $\HH_{2n-1}$ of $C_{\theta'}$ to construct a new quantum comb $C_\theta \circ C_{\theta'}$. Here, for clarity, we use the notation $C_{\theta} \circ C_{\theta'}$ to denote this partially linked comb instead of the usual notation adopted for the link product. Since the Hilbert spaces considered here are isomorphic, the constructed comb $C_\theta \circ C_{\theta'}$ can be regarded as a comb in the set $\Comb[(\HH_1,\HH_2),\dots,(\HH_{2N-3},\HH_{2N-2})]$. Therefore, we can say if $C_\theta \circ C_{\theta'}$ coincides with $C_{\theta+\theta'}$ under the isomorphism.

\begin{figure}[htbp]
    \centering
    \includegraphics[width=0.2\textwidth]{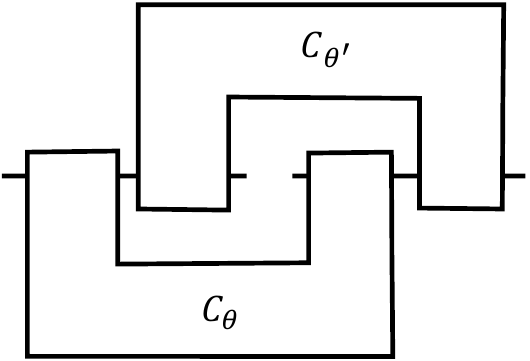}
    \caption{Given two quantum combs $C_\theta$ and $C_{\theta'}$, we can concatenate each output of $C_\theta$ with each input of $C_{\theta'}$ to form a new comb denoted by $C_\theta \circ C_{\theta'}$.}
    \label{fig:partial_link}
\end{figure}

\begin{definition}
    A parameterized comb $C_\theta$ is covariant if and only if the concatenation $C_\theta \circ C_{\theta'}$ is isomorphic to $C_{\theta+\theta'}$ for all $\theta,\theta'$.
\end{definition}

If a parameterized comb $C_\theta$ on qubit is covariant and has non-positive energy consumption, we prove that there always exists an optimal estimator with range in $\{0,\pi\}$, based on $\hat{\theta}_{\rm opt}(x) = -\arg(\Tr[T_x \JJ])$.
\begin{theorem}[Optimal estimator for covariant comb with non-positive energy consumption on qubit]\label{thm:qubit_optimal_estimator}
    Suppose the input spaces and the output spaces are isomorphic and have the same Hamiltonian. Let $C_\theta$ be a parameterized comb on a qubit. If $C_\theta$ is covariant and its energy consumption $E_{\text{gl/loc}}(C_\theta) \leq 0$, there always exists an optimal estimator with range in $\{0,\pi\}$ for the uniform-prior average Holevo cost as a figure of merit.
\end{theorem}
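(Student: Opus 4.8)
The plan is to reduce the claim to the single statement that the optimal probe can be chosen so that every overlap $\Tr[T_x\JJ]$ is real. Indeed, by Eq.~(\ref{eq:optimal_estimator}) the optimal estimator is $\hat\theta_{\rm opt}(x)=-\arg(\Tr[T_x\JJ])$, so a real overlap forces $\hat\theta_{\rm opt}(x)\in\{0,\pi\}$ (the two possible signs). Moreover, by Eq.~(\ref{eq:avg_cost_dependent_probe}) the cost $\bar c(T)=2-2\sum_x|\Tr[T_x\JJ]|$ depends on the overlaps only through their moduli, so the whole task is to exhibit an optimizer whose overlaps happen to be real.

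First I would convert the covariance hypothesis into a structural statement about $\JJ$. Covariance $C_\theta\circ C_{\theta'}\cong C_{\theta+\theta'}$, together with the isomorphic-spaces/same-Hamiltonian assumption, should yield a one-parameter-group form $C_\theta^{T}=e^{-\ii\theta K}\,C_0^{T}\,e^{\ii\theta K}$, where $K$ is the ``charge'' built from the leg Hamiltonians (for qubits $K=\tfrac12\sum_i H_i$, so its spectrum is half the Hamming weight). Substituting into the definition of $\JJ$ and carrying out the $\theta$-integral projects onto the charge-lowering-by-one component, giving $\JJ=\sum_k P_k\,C_0^{T}\,P_{k+1}$ with $P_k$ the projector onto the charge-$k$ eigenspace. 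In particular $\JJ$ is block-off-diagonal in the energy eigenbasis, connecting eigenspaces whose total energy differs by a fixed amount (two units of the single-qubit Hamiltonian). A short gauge argument—conjugation by diagonal unitaries commuting with $K$, which merely shifts $\theta$ and hence leaves the averaged cost invariant—should further let me take $C_0^{T}$, and therefore $\JJ$, to be real in the energy eigenbasis, i.e.\ $\bar\JJ=\JJ$.

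Next I would bring in the qubit dimension and the non-positive-energy hypothesis. Using Thm.~\ref{thm:energy_constrained_comb}, the condition $E_{\text{global/local}}(C_\theta)\le0$ constrains which charge sectors of $C_0$ can carry weight; combined with $\dim\HH_i=2$ this should collapse the block structure of $\JJ$ to essentially a single pair of adjacent charge sectors, so that $\Tr[T_x\JJ]$ reduces to a controlled set of coherences in a real, low-rank operator. This is the step where the qubit restriction is essential and, as noted in the main text, where the argument genuinely fails for qudits: for $d>2$ several sectors survive and the subsequent realification breaks.

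Finally, with $\JJ$ real and of this simplified form, I would produce an optimal probe with real overlaps. The plain conjugation symmetry $T\mapsto\bar T$ preserves the comb conditions in Eq.~(\ref{eq:comb_def}), positivity, the instrument structure (since $|x\rangle\langle x|$ is real), and—because the Hamiltonians are real—the energy consumption of Thm.~\ref{thm:energy_constrained_comb}, while sending each overlap to its complex conjugate; this shows the optimal value is attained on a conjugation-symmetric set. The remaining work is to upgrade this to a single optimizer with real overlaps, aligning the phases of the surviving coherences via the real singular-value structure of $\JJ$. I expect this last step to be the main obstacle: naive symmetrization $\tfrac12(T+\bar T)$ makes the overlaps real but can strictly shrink their moduli (since $|\Re z|\le|z|$) and thus need not preserve optimality, so the phase alignment must exploit the specific real two-sector structure forced by the qubit dimension and the non-positive-energy constraint rather than a generic averaging.
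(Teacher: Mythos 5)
There is a genuine gap, and you partly acknowledge it yourself: the final step of producing an optimizer with real overlaps is never carried out. Your conjugation symmetry $T\mapsto\bar T$ only shows the optimum is attained on a conjugation-closed set, and as you note, averaging $\tfrac12(T+\bar T)$ can strictly decrease $|\Tr[T_x\JJ]|$, so it does not preserve optimality. The promised ``phase alignment via the real two-sector structure'' is exactly the missing content, and without it the theorem is not proved. A second, earlier problem is that your structural claim $C_\theta^T=\ee^{-\ii\theta K}C_0^T\ee^{\ii\theta K}$ is not what the paper's covariance hypothesis says: covariance here is defined via \emph{concatenation} $C_\theta\circ C_{\theta'}\cong C_{\theta+\theta'}$ (linking the outputs of one comb to the inputs of the other), not via a unitary group action on the Choi operator, so the block decomposition $\JJ=\sum_k P_k C_0^T P_{k+1}$ and the subsequent realification of $\JJ$ do not follow from the stated hypotheses. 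Finally, the non-positive-energy hypothesis enters your argument only vaguely (``constrains which charge sectors carry weight''), whereas it plays a precise and indispensable role in the actual proof.

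The paper's route is much shorter and uses two facts you never invoke. First, by a composition-energy lemma, $E_{\text{global/local}}(T\circ C_{\theta_0})\leq E_{\text{global/local}}(T)+E_{\text{global/local}}(C_{\theta_0})\leq E_{\text{global/local}}(T)$, so replacing $T_x$ by $T_x\circ C_{\theta_0}$ yields a feasible probe; covariance then gives $\Tr[(T_x\circ C_{\theta_0})\JJ]=\ee^{\ii\theta_0}\Tr[T_x\JJ]$, i.e.\ a global phase shift of all overlaps at no cost. This is where both covariance and $E\leq 0$ are used, and it lets one assume $\hat\theta(0)=0$, i.e.\ $\Tr[T_0\JJ]\in\R_{\geq 0}$. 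Second, normalization of the output probabilities gives $\sum_x\Tr[T_x\JJ]=\frac{1}{\int\dd\theta}\int\ee^{-\ii\theta}\dd\theta=0$; for a qubit there are only two outcomes, so $\Tr[T_1\JJ]=-\Tr[T_0\JJ]$ is automatically real, forcing $\hat\theta(1)\in\{0,\pi\}$. No analysis of the internal structure of $\JJ$ and no realification of the probe is needed. I'd suggest you redirect your effort toward these two observations rather than the conjugation-symmetry route.
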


Its proof can be found in Appendix~\ref{sec:proof_qubit_optimal_estimator}. 
It is natural to ask whether this result can be extended to a general $d$-dimensional case for $d > 2$. Unfortunately, if we still assign the estimators with values in the set $\{\frac{2\pi n}{d}\}_{n=0}^{d-1}$, this could not be optimal. As Fig.~\ref{fig:fixed_estimator} shows, such a fixed estimator can even invalidate the advantage brought by higher dimensionality.

\begin{figure}[t]
    \centering
    \includegraphics[width=0.9\linewidth]{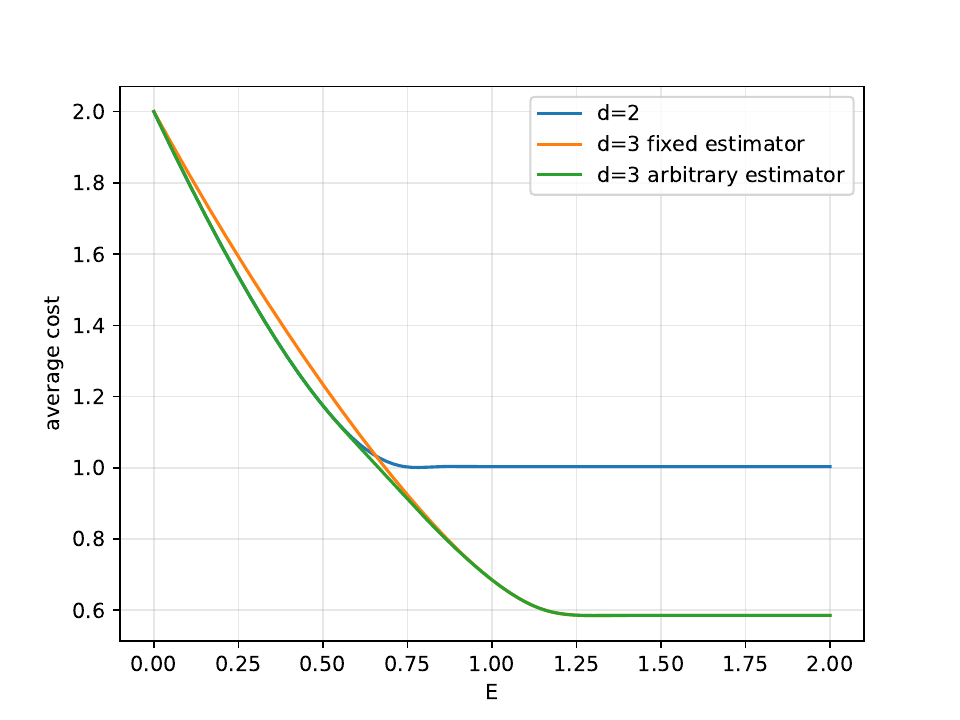}
    \caption{The optimal average cost for $E$-globally energy-constrained probe $T \in \Comb^{\leq E}_{\text{gl}}$ with different choices of estimator for a $d$-dimensional phase channel diagonal in the energy eigenbasis. The orange line is the optimal average cost with an estimator chosen in $\{0,\frac{2\pi}{3},\frac{4\pi}{3}\}$. The green line is the optimal average cost for an arbitrarily chosen estimator, which is obtained by numerically searching for the optimal one.}
    \label{fig:fixed_estimator}
\end{figure}





\smallskip

\noindent{\bf End Matter on local estimation.}
Here we show how to apply our energy-constrained framework to the local estimation. 

\smallskip

\noindent{\it Fisher information.}
In the standard quantum local estimation, the variance of any \emph{unbiased} estimator $\hat{\theta}$ for a single parameter $\theta$ obeys the quantum Cramér–Rao inequality:
\begin{equation}
    \operatorname{Var}(\hat{\theta} - \theta) \geq \frac{1}{\nu_{\text{rep}} \mathcal{I}(\rho_\theta)},
\end{equation}
where $\mathcal{I}(\rho_\theta)$ denotes the quantum Fisher information of the state $\rho_\theta = T*C_\theta$ and $\nu_{\text{rep}}$ is the repetition number of the estimation process~\cite{helstromQuantumDetectionEstimation1969,holevoProbabilisticStatisticalAspects2011,braunsteinStatisticalDistanceGeometry1994}. This bound can be asymptotically saturated for $\nu_{\text{rep}} \gg 1$.

Therefore, like the derivation for the Bayesian estimation, our goal is to maximize $\mathcal{I}(\rho_\theta)$ with $\rho_\theta = T*C_\theta$ for an energy-constrained probe $T$ satisfying $T = \sum_x T_x \otimes \ket{x}\bra{x}$, where $\{\ket{x}\}$ is the energy eigenbasis.

However, the requirement for an unbiased estimator may be too restrictive in energy-insufficient scenarios. Notably, when energy is insufficient, the optimal variance deteriorates rapidly. For instance, for the qubit phase channel, when the provided energy $E=0$, the only allowable strategy is to input a ground state into the phase channel, which captures no phase information and cannot yield an unbiased estimator. In this scenario, the variance tends to infinity when $E\to 0$ (see Appendix~\ref{sec:Fisher_information} for numerical verification). 

\smallskip

\noindent{\it Narrow prior distribution.}
An alternative approach is to treat the local estimation as a Bayesian estimation problem with a parameter drawn from a narrow prior distribution $p(\theta)$. It is worth noting that, in contrast to the Cram\`{e}r–Rao bound determined by the Fisher information, here we accommodate a broader class of estimators, including the \emph{biased} estimators. 

A possible way is to employ the optimal probe process $T$ that maximizes the Fisher information at the expected parameter value, while utilizing the prior knowledge to adapt the classical estimator $\hat{\theta}(x)$. The optimal average variance is bounded by the van Trees inequality~\cite{vantreesDetectionEstimationModulation2001,gillApplicationsVanTrees1995,elisabethVanTreesInequality2024}:
\begin{equation}
    \mathbb{E}_\theta [\mathrm{Var}(\hat{\theta} - \theta)] \geq \frac{1}{\mathcal{I}_q + \mathbb{E}_\theta [\mathcal{I}_c(\theta)]},
\end{equation}
where $\mathcal{I}_c(\theta)$ denotes the classical Fisher information of the final measurement outcome with respect to the parameter $\theta$, and $\mathcal{I}_q = \int_{p(\theta)>0} \frac{\dot{p}(\theta)^2}{p(\theta)}  d\theta$.

\begin{figure}[t]
    \centering
    \includegraphics[width=0.85\linewidth]{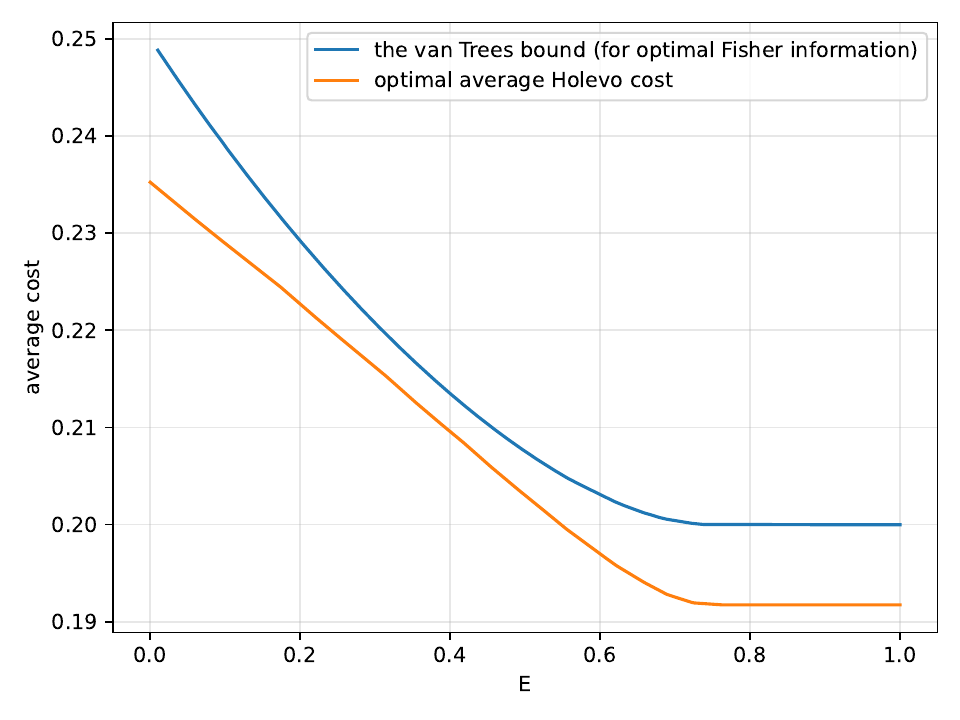}
    \caption{The optimal average cost for the energy-constrained qubit phase estimation with a narrow Gaussian prior $\theta \sim \mathcal{N}(\pi, 0.5)$. The probe process $T$ is in the $E$-globally constrained comb set $\Comb^{\leq E}_{\text{gl}}$.}
    \label{fig:narrow_distribution}
\end{figure}

Another way is to approximate the variance by the Holevo cost function. When the parameter $\theta$ lies within a small region, the variance can be approximated via the Holevo cost function as $(\theta - \theta')^2 \approx 4\sin^2\left(\frac{\theta - \theta'}{2}\right)$ for $\theta \approx \theta'$. By \eqref{eq:avg_cost_simplified}, our goal is to minimize
\begin{equation}
    2 - 2\sum_x \Re\left[\mathrm{e}^{\mathrm{i}\hat{\theta}(x)} \Tr\left[T_x \JJ\right]\right]
\end{equation}
with the task operator
$\JJ := \int_\theta \mathrm{e}^{-\mathrm{i}\theta} C_\theta^T \, p(\theta) \, \mathrm{d}\theta$
for an energy-constrained probe $T$.  

A numerical simulation is performed for the qubit phase estimation with Hamiltonian given by $H=\ket{1}\bra{1}$. Here, we assume the parameter $\theta$ is drawn from a Gaussian prior $\mathcal{N}(\pi,0.5)$ with the expectation value $\pi$ and the standard deviation $0.5$. The result is in Fig.~\ref{fig:narrow_distribution}. Although the unbiased estimator may vanish or perform poorly in the limited energy regime, it is still possible to utilize the prior knowledge to achieve an effective biased estimation.

\bibliographystyle{apsrev4-2} 
\bibliography{references}

\onecolumngrid

\appendix

\section{Introduction to Quantum Combs} \label{sec:quantum_comb}
Here, we provide a brief introduction to the quantum comb framework. 

\subsection{Quantum Combs with Definite Causal Orders}
The quantum comb framework is initially to represent multi-step quantum processes with a fixed causal structure~\cite{chiribellaQuantumCircuitsArchitecture2008,chiribellaTheoreticalFrameworkQuantum2009}. It generalizes the concept of a quantum channel to a circuit with multiple open slots (inputs and outputs), which can be plugged by other quantum operations. Graphically, a comb is often depicted as a circuit with a series of ``teeth''. A general $N$-step comb $C$ can be represented as shown in Fig.~\ref{fig:comb}, where each tooth corresponds to a step. The spaces between the teeth are the slots to plug other operations. The set of such combs with the given input and output spaces is denoted by $\Comb[(\HH_0,\HH_1),\dots,(\HH_{2N-2},\HH_{2N-1})]$.

\begin{figure}[h]
    \centering
    \includegraphics[width=0.4\linewidth]{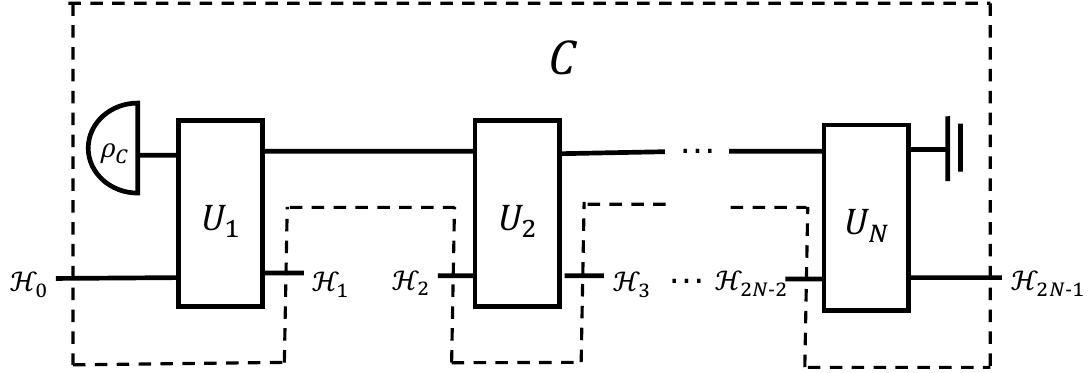}
    \caption{ A schematic representation of an $N$-step quantum comb $C$.}
    \label{fig:comb}
\end{figure}

Mathematically, the quantum comb $C$ is the Choi operator of the process, which is a positive semidefinite operator on the tensor product of all input and output Hilbert spaces. It is characterized by the recursive relation 
\begin{equation}\label{eq:causality_constraint}
    \begin{cases}
        \Tr_{2n-1}[C^{(n)}] = C^{(n-1)}\otimes I_{2n-2}, & n=2,\dots,N, \\
        \Tr_{1}[C^{(1)}] = I_{0},
    \end{cases}
\end{equation}
with $C^{(N)} = C$~\cite{chiribellaQuantumCircuitsArchitecture2008,chiribellaTheoreticalFrameworkQuantum2009}. Here, $\Tr_n[\cdot]$ and $I_n$ denote the partial trace and the identity operator on $\HH_n$, respectively. These causality constraints guarantee that it represents a quantum process with a definite causal order, as tracing out the output space of a Choi operator should yield an identity on the input space. This ensures the process is independent of the future at each step. 

By dropping the causality constraints in \eqref{eq:causality_constraint}, the quantum comb framework can also be extended to the representation of indefinite causal order processes. This will be discussed later in the next subsection. 

Operationally, tracing out the output of a step corresponds to discarding that output. The recursive structure of the comb ensures that by tracing out the outputs of the last $N-n$ steps, one obtains the comb $C^{(n)}$ describing the sub-process of the first $n$ steps. This allows any sub-comb within the process to be determined by sequentially discarding future outputs.

Furthermore, the causality condition implies the neutralization property of combs, that is, for an $N$-step comb, the state obtained by discarding the outputs of the steps $n+1,\dots,N$ from the joint output state of the entire process is identical to the output state of the first $n$ steps. Formally, for an $N$-step comb $C$ with a joint input $\rho$ on $\bigotimes_{m=1}^N \HH_{2m-2}$, if the output spaces $\{\HH_{2m-1}\}_{m=n+1}^N$ are traced out in the joint output $C*\rho$, the obtained state on the partial output system $\bigotimes_{m=1}^n \HH_{2m-1}$ equals
\begin{equation}
\begin{aligned}
    \Tr_{2n+1,2n+3,\dots,2N-1}[C*\rho] =& \Tr_{2n+1,2n+3,\dots,2N-1}[\Tr_{0,2\dots,2N-2}[C(\rho^T \otimes I_{1,3,\dots,2N-1})]]
    \\ =&  \Tr_{0,2\dots,2N-2}[\Tr_{2n+1,2n+3,\dots,2N-1}[C](\rho^T \otimes I_{1,3,\dots,2n-1})].
\end{aligned}
\end{equation}
By the causality constraints in \eqref{eq:causality_constraint}, we have $\Tr_{2n+1,2n+3,\dots,2N-1}[C] = C^{(n)} \otimes I_{2n,2n+2,\dots,2N-2}$. Substituting this into the above expression yields:
\begin{equation}
    \begin{aligned}
        \Tr_{2n+1,2n+3,\dots,2N-1}[C*\rho] =& \Tr_{0,2\dots,2N-2}[(C^{(n)} \otimes I_{2n,2n+2,\dots,2N-2})(\rho^T \otimes I_{1,3,\dots,2n-1})]
        \\ =& \Tr_{0,2\dots,2n-2}[C^{(n)}(\Tr_{2n,2n+2,\dots,2N-2}[\rho]^T \otimes I_{1,3,\dots,2n-1})]
        \\ =& C^{(n)} * \Tr_{2n,2n+2,\dots,2N-2}[\rho].
    \end{aligned}
\end{equation}
Since $\Tr_{2n,2n+2,\dots,2N-2}[\rho]$ constitutes the input state to the first $n$ steps on $\bigotimes_{m=1}^n \HH_{2m-2}$, this final result is precisely the output state of the first $n$ steps.

The Stinespring dilation theorem~\cite{stinespringPositiveFunctionsAlgebras1955}, stating that any quantum channel can be implemented by a unitary on a larger system with an ancilla, is generalized to quantum combs for the circuit implementation~\cite{chiribellaTheoreticalFrameworkQuantum2009}. To be specific, an $N$-step quantum comb can be implemented by a sequence of unitary operations $\{U_n\}_{n=1}^N$, with a control ancilla initialized in some state $\rho_C$ representing the quantum memory between the steps, where $U_n$ is the operation at the $n$-th step, as depicted in Fig.~\ref{fig:comb}. 

When operations are plugged into the slots of a quantum comb $C_1$, the entire set of plugged operations can itself be viewed as a multi-step process described by another quantum comb $C_2$. Given two quantum combs $C_1$ and $C_2$ defined on the Hilbert spaces $\bigotimes_{a \in A} \HH_a$ and $\bigotimes_{b \in B} \HH_b$ respectively, the overall process resulting from their sequential composition is then also represented by a new quantum comb, which is formally determined by their \emph{link product} defined as:
\begin{equation}
C_1 * C_2 := \Tr_{A \cap B} \left[ \left( C_1^{T_{A \cap B}} \otimes I_{B \backslash A} \right) \left( I_{A \backslash B} \otimes C_2 \right) \right],
\end{equation}
where $C^{T_{A \cap B}}$ denotes the partial transpose over the subsystems in the intersection $A \cap B$~\cite{chiribellaQuantumCircuitsArchitecture2008,chiribellaTheoreticalFrameworkQuantum2009}. The result is a valid quantum comb acting on the Hilbert space $\bigotimes_{c \in (A \cup B) \backslash (A \cap B)} \HH_c$.

\subsection{Quantum Combs Extended to Indefinite Causal Order Processes}\label{sec:ICO_comb}
The quantum comb, as introduced before, describes the processes with a definite causal order. By relaxing the causality constraints [cf. Eq.~\eqref{eq:causality_constraint}], it can be extended to describe the processes where the causal order between the operations is not definite, known as indefinite causal order (ICO) processes~\cite{hardyQuantumGravityComputers2009,chiribellaQuantumComputationsDefinite2013,wechsQuantumCircuitsClassical2021}.

An ICO process, represented by its Choi operator $C$, can execute a sequence of operations $\{C_n\}_{n=1}^{N-1}$ (where $C_n$ is the Choi operator of an operation that maps from $\HH_{2n-1}$ to $\HH_{2n}$) in an indefinite causal orders. In the most general formulation, relaxing the causality constraints in \eqref{eq:causality_constraint}, we only require that for any valid set of input operations $\{C_n\}$ and an initial input state $\rho$ on $\HH_0$, the final output $C * \left(\rho \otimes (\bigotimes_{n} C_n)\right)$ on $\HH_{2N-1}$ constitutes a valid quantum state. It is crucial to note, however, that not every Choi operator $C$ satisfying this condition corresponds to a physically implementable process~\cite{wechsQuantumCircuitsClassical2021,purvesQuantumTheoryCannot2021}.

Therefore, this work focuses on \emph{causal superposition} processes, where a set of operations is executed in a quantum superposition of multiple causal orders~\cite{chiribellaQuantumComputationsDefinite2013,hardyQuantumGravityComputers2009}.

Such a process can be physically realized by quantum control with an order control system $\HH_{\text{oc}}$ of dimension $(N-1)!$~\cite{wechsQuantumCircuitsClassical2021}. Let $\tau \in S_{N-1}$ be a permutation specifying an order of applying the operations $\{C_n\}_{n=1}^{N-1}$, where $S_{N-1}$ denotes the symmetric group of degree $N-1$. When the control system is in the state $\ket{\tau}$, the operations are applied in the sequence $C_{\tau(1)} \to C_{\tau(2)} \to \cdots \to C_{\tau(N-1)}$. A causal superposition is then achieved by preparing $\HH_{\text{oc}}$ in a superposition of these permutation states. 

In this way, if we discard the system $\HH_{\text{oc}}$, the causal superposition will become a classical probabilistic mixture of the causal orders. Thus, for the Choi operator $C$ of a causal superposition process on $(\bigotimes_{n=0}^{2N-1} \HH_{n}) \otimes \HH_{\text{oc}}$, it can be characterized by tracing out $\HH_{\text{oc}}$ yields a classical probabilistic mixture of quantum combs in different causal orders:
\begin{equation}
    \Tr_{\HH_{\text{oc}}}[C] = \sum_{\tau \in S_{N-1}} p_\tau C^\tau,
\end{equation}
with the causality quantum comb 
\begin{equation}
    C^\tau \in \Comb[(\HH_0, \HH_{2\tau(1)-1}),\dots,(\HH_{2\tau(n)-1},\HH_{2\tau(n+1)}),\dots,(\HH_{2\tau(N-1)},\HH_{2N-1})]
\end{equation} 
corresponding to the causal order $\tau$ and the proportion $p_\tau \geq 0$ such that $\sum_{\tau} p_\tau = 1$~\cite{wechsQuantumCircuitsClassical2021,liuOptimalStrategiesQuantum2023}.

\section{Proof of Theorem~\ref{thm:energy_constrained_comb}} \label{sec:proof_energy_constrained_comb}
First, by the definition, it is straightforward that $C \in \Comb^{\leq E}_{\text{gl/loc}}$ if and only if $E_{\text{gl/loc}}(C) \leq E$. Therefore, it suffices to determine the value of $E_{\text{gl/loc}}(C)$.

For the global battery case, to support the energy changes during the process, the initial battery energy must be no less than the sum of the energy changes in any first $n_0$ steps with $n_0 = 1,\dots,N$. 
By the energy change in the step $n$, which can be determined as [cf.~Eq.~\eqref{eq:energy_change} in the main text]
\begin{equation} \label{eq:energy_change_appendix}
     \Tr[(H_{2n-1} \otimes I) (C*\rho)] - \Tr[(H_{2n-2} \otimes I) \rho]  =
     \Tr\left[ \left( \Tr_{1,3,\dots,2N-1}[(H_{2n-1} \otimes I) C] - (H_{2n-2})^T \otimes I \right) \rho^T \right],
\end{equation} 
we have
\begin{equation}\label{eq:global_energy_consumption}
    E_{\text{gl}}(C) = \max_{1\leq n_0 \leq N}\max_{\rho} \sum_{n=1}^{n_0} \Tr\left[ \left( \Tr_{1,3,\dots,2N-1}[(H_{2n-1} \otimes I) C] - (H_{2n-2})^T \otimes I \right) \rho^T \right].
\end{equation}
Since $H_{n_1,\dots,n_k} := \sum_{j=1}^k H_{n_j} \otimes I$, this is equivalent to $E_\text{gl}(C)$ being the minimal value satisfying the matrix inequalities:
\begin{equation}
    E_{\text{gl}}(C) I \succeq  \Tr_{1,3,\dots,2N-1}[(H_{1,3,\dots,2n-1} \otimes I) C] - (H_{0,2,\dots,2n-2})^T \otimes I,
    \quad n=1,\dots,N.
\end{equation}
This precisely yields the expression in terms of the largest eigenvalue stated in the theorem.

For the local battery case, the energy $E_n$ in the $n$-th battery must be at least the maximum possible energy change in the $n$-th step. By \eqref{eq:energy_change_appendix}, this implies
\begin{equation}
E_n(C) = \max\left\{\max_{\rho} \Tr\left[ \left( \Tr_{1,3,\dots,2N-1}[(H_{2n-1} \otimes I) C] - (H_{2n-2})^T \otimes I \right) \rho^T \right],0\right\}.
\end{equation}
Since $\Tr[\rho^T] = 1$, $E_n(C)$ is the minimal nonnegative value satisfying the operator inequality
\begin{equation}
    E_n(C) I \succeq \Tr_{1,3,\dots,2N-1}[(H_{2n-1} \otimes I) C] - (H_{2n-2})^T \otimes I,
\end{equation}
which is equivalent to
\begin{equation}\label{eq:local_energy_consumption}
E_n(C) = \max\left\{\lambda_{\max} \left( \Tr_{1,3,\dots,2N-1}[(H_{2n-1} \otimes I) C] - (H_{2n-2})^T \otimes I \right),0\right\}.
\end{equation}
Therefore, $E_{\text{loc}}(C) = \sum_{n=1}^N E_n(C)$ gives the expression stated in the theorem.

\section{Proof of Theorem~\ref{thm:comb_implementation}} \label{sec:proof_comb_implementation}
 
For an $N$-step energy-constrained quantum comb $C \in \Comb^{\leq E}_{\text{gl/loc}}[(\HH_0,\HH_1),\dots,(\HH_{2N-2},\HH_{2N-1})]$ with $\HH_{2n-2} \simeq \HH_{2n-1}$ ($n=1,\dots,N$), Thm.~\ref{thm:comb_implementation} contains three key statements:
\begin{itemize}
    \item The comb $C$ can be implemented using a sequence of operations $\{U_n\}_{n=1}^N$ with a control ancilla $\HH_C$ initialized in a pure state $\ket{\phi_0}$, where $U_n$ is the operation at the $n$-th step acting on the joint system of the comb $n$-th input and the control ancilla.
    \item For the sequence of operations $\{U_n\}_{n=1}^N$, the energy of the control ancilla remains constant throughout the process.
    \item The implementation constructed from the above operations and control ancilla can be realized with a global battery (or $N$ local batteries) charged with energy $E$.
\end{itemize} 

Here, the control ancilla system $\HH_C = \HH_A^{\otimes 3}$ is composed of three identical auxiliary systems $\HH_A$ with a sufficiently large dimension $d_A := \dim \HH_A \geq \prod_{n=0}^{2N-1} \dim \HH_n$.

The third point for the battery energy requirement is a direct corollary of the first two points. Specifically, the energy-preserved control ancilla guarantees it will not contribute to the energy change in each step. As a result, the energy change due to the unitary operations is equivalent to the energy difference between the inputs and outputs of the comb. This aligns with the characterization of the energy-constrained comb set $\Comb^{\leq E}_{\text{gl/loc}}$ [cf.~Eq.~\eqref{eq:energy_change}]. 
Therefore, we only need to prove the first two points, which are equivalent to the following lemma:
\begin{lemma}\label{lem:comb_implementation_eq}
     Let $C$ be an $N$-step quantum comb with $\HH_{2n-2} \simeq \HH_{2n-1}$. There always exists an operation sequence $\{U_n\}_{n=1}^N$ to implement $C$, where the isometry $U_n: \HH_{2n-2} \otimes \HH_C \to \HH_{2n-1} \otimes \HH_C$ is the operation at the $n$-th step, such that the energy of the control ancilla $\HH_C$, initialized in a pure state $\ket{\phi_0}$, remains constant throughout the process.
\end{lemma}

Notably, since the input space $\HH_{2n-2}$ and the output space $\HH_{2n-1}$ at each step are isomorphic, the isometry $U_n$ can be regarded as a unitary operation. We prove this lemma in the remainder of this section.

For the first $n$ steps of the comb ($1 \leq n \leq N$), the comb $C^{(n)}$ of this sub-process can be obtained by the recursive relation $\Tr_{2n+1}[C^{(n+1)}] = C^{(n)} \otimes I_{2n}$ for $n<N$ with $C^{(N)} = C$ (see Appendix.~\ref{sec:quantum_comb} for a detailed introduction).  
Consider the spectral decomposition
\begin{equation}
C^{(n)} = \sum_{j=0}^{M-1} |(K^{(n)}_j)^T\kk \bb (K^{(n)}_j)^T|,
\end{equation}
where $|\cdot\kk$ means the vectorization of an operator and $K^{(n)}_j: \HH_{0,2,\dots,2n-2} \to \HH_{1,3,\dots,2n-1}$. Without loss of generality, we assume $M=\prod_{n=0}^{2N-1} \dim \HH_n$, as any extra indices can be filled with $K^{(n)}_j=0$. Since $C^{(n)}$ is a Choi operator, tracing out the output space $\HH_{1,3,\dots,2n-1}$ yields an identity on the input space $\HH_{0,2,\dots,2n-2}$, which leads to $\sum_j {K^{(n)}_j}^\dagger K^{(n)}_j = I_{0,2,\dots,2n-2}$. For an operation sequence $\{U_n\}_{n=1}^N$, if the composition of the first $n$ operations $\hat{U}_n : \HH_{0,2,\dots,2n-2} \otimes \HH_C \rightarrow \HH_{1,3,\dots,2n-1} \otimes \HH_C$ given by 
\begin{equation}
\begin{cases}
    \hat{U}_1 = U_1, \\
    \hat{U}_n = (U_n \otimes I_{1,3,\dots,2n-3}) (\hat{U}_{n-1} \otimes I_{2n-2}),\, n=2,\dots,N,
\end{cases}
\end{equation}
such that
\begin{equation}\label{eq:hat_Un}
    \hat{U}_n (I_{0,2,\dots,2n-2} \otimes \ket{\phi_0}) = \sum_j K_j^{(n)} \otimes \ket{\phi_j}
\end{equation}
with an orthonormal set $\{\ket{\phi_j}\}$ on $\HH_C$ for all $n=1,\dots, N$, then these operations implement the quantum comb $C$. This is because the Choi operator of this process is
\begin{equation}
\Tr_{\HH_C}[(I \otimes \hat{U}_n) (|I\kk \otimes \ket{\phi_0}) (\bb I| \otimes \bra{\phi_0}) (I \otimes \hat{U}_n^\dagger)] = \sum_j (I \otimes  K_j^{(n)}) |I\kk \bb I| (I \otimes K_j^{(n)})^\dagger,
\end{equation}
which equals to $C^{(n)}$ as $(I \otimes  K_j^{(n)}) |I\kk = |(K^{(n)}_j)^T\kk$. Here, since $\sum_j {K^{(n)}_j}^\dagger K^{(n)}_j = I_{0,2,\dots,2n-2}$, Eq.~\eqref{eq:hat_Un} is always consistent with the conditions for unitary operators.

Before we proceed to the construction of $\{U_n\}_{n=1}^N$, we first show in Lem.~\ref{lem:energy_constant_cond} that, with a specific choice of the states $\{\ket{\phi_j}\}$, if we can construct $\{U_n\}_{n=1}^N$ satisfying \eqref{eq:hat_Un} for $n=1,\dots,N$ then the energy of the control ancilla will be guaranteed to be constant throughout the process.

Suppose the Hamiltonian of the system  $\HH_A$ is $H_A = \sum_{j=0}^{d_A-1} E_j^A\ket{e_j}\bra{e_j}$ with $d_A := \dim \HH_A$. Choose 
\begin{equation}\label{eq:phi_val}
    \ket{\phi_j} = \frac{1}{\sqrt{M}} \sum_{k=0}^{M-1} \omega_M^{jk} F\ket{e_j} \otimes F\ket{e_k} \otimes F\ket{e_k} \in \HH_C
\end{equation}
with $F$ being the quantum Fourier transform on $\{\ket{e_j}\}_{j=0}^{d_A-1}$ and $\omega_M$ being the $M$-th root of unity.
\begin{lemma}\label{lem:energy_constant_cond}
    Let $\{U_n\}_{n=1}^N$ be a sequence of operations implementing the comb $C$ with $U_n : \HH_{2n-2} \otimes \HH_C \to \HH_{2n-1} \otimes \HH_C$. If the composition $\hat{U}_n : \HH_{0,2,\dots,2n-2} \otimes \HH_C \rightarrow \HH_{1,3,\dots,2n-1} \otimes \HH_C$ of the first $n$ operations $U_1,\dots,U_n$ satisfies \eqref{eq:hat_Un} for $n=1,\dots,N$, then the state on $\HH_C$, initialized as $\ket{\phi_0}$, has a constant energy $3\bar{E}$ with $\bar{E} := \sum_j E_j^A / d_A$ before and after each operation. 
\end{lemma}
\begin{proof}
    First, it can be checked that the energy of the initial state $\ket{\phi_0}$ is $3\bar{E}$ by direct calculation. Next, we show that the energy on $\HH_C$ is also $3\bar{E}$ after each operation.

    Suppose the process of the comb $C$ is sequentially composed with another process represented by a comb $C_0 \in \Comb[(\emptyset, \HH_0),(\HH_1, \HH_2), \dots, (\HH_{2N-3},\HH_{2N-2})]$, where $\emptyset$ refers to the trivial space. 
    
    Let $C_0^{(n)}$ be the comb of the first $n$ steps of $C_0$. The comb of the first $n$ steps of $C$ without discarding $\HH_C$ is 
    \begin{equation}
    (I \otimes \hat{U}_n) (|I\kk \otimes \ket{\phi_0}) (\bb I| \otimes \bra{\phi_0}) (I \otimes \hat{U}_n^\dagger) = \sum_{j,j'} |(K^{(n)}_j)^T\kk \bb (K^{(n)}_{j'})^T| \otimes \ket{\phi_j}\bra{\phi_{j'}}.
    \end{equation}
    The state on $\HH_C$ after the $n$-th operation $U_n$ can be determined by the link product between these two combs as
    \begin{equation}
    \begin{aligned}
        \rho_C^{(n)} := &\Tr_{0,\dots,2n-1} \left[ \left(\sum_{j,j'} |(K^{(n)}_j)^T\kk \bb (K^{(n)}_{j'})^T| \otimes \ket{\phi_j}\bra{\phi_{j'}}\right) \left((C^{(n)}_0)^T \otimes I_{\HH_{2n-1}, \HH_C} \right) \right] \\
        =& \sum_{j,j'} \Tr\left[ |(K^{(n)}_j)^T\kk \bb (K^{(n)}_{j'})^T| \left((C^{(n)}_0)^T \otimes I_{2n-1} \right) \right] \ket{\phi_j}\bra{\phi_{j'}} \\
        =& \sum_{j,j'} \varrho_{j,j'} \ket{\phi_j}\bra{\phi_{j'}}.
    \end{aligned}
    \end{equation}
    The energy of $\rho_C^{(n)}$ should be
    \begin{equation}\label{eq:energy_rho_n}
    \Tr[(H_A \otimes I\otimes I + I\otimes H_A\otimes I + I\otimes I\otimes H_A)\rho_C^{(n)}].
    \end{equation}
    With $H_A = \sum_{j=0}^{d_A-1} E_j^A\ket{e_j}\bra{e_j}$ and the property of the $M$-th root of unity
        \begin{equation}
        \sum_{k=0}^{M-1}\omega_M^{(j-j')k}=\begin{cases}
            0, & j\neq j', \\
            M, & j=j',
        \end{cases}
        \end{equation}
    the first term of \eqref{eq:energy_rho_n} can be determined:
    \begin{equation}
        \begin{aligned}
            \Tr[(H_A \otimes I\otimes I)\rho_C^{(n)}]
            &= \frac{1}{M} \sum_{j,j'=0}^{M-1} \sum_{k=0}^{M-1} \sum_{l=0}^{d_A-1} \varrho_{j,j'} E_l^A  \omega_M^{(j-j')k} \bra{e_l}F\ket{e_j}\bra{e_{j'}}F^\dagger\ket{e_l} \\
            &= \sum_{j=0}^{M-1} \sum_{l=0}^{d_A - 1} \varrho_{j,j} E_l^A \bra{e_l}F\ket{e_j}\bra{e_j}F^\dagger\ket{e_l} \\
            &= \sum_{j=0}^{M-1} \sum_{l=0}^{d_A - 1} \varrho_{j,j} E_l^A  \frac{1}{d_A} \\
            &= \sum_{j=0}^{M-1} \varrho_{j,j} \bar{E}.
        \end{aligned}
    \end{equation}
    Noticing that $\sum_j \varrho_{j,j} = \Tr\left[\sum_j |(K^{(n)}_j)^T\kk \bb (K^{(n)}_{j})^T| \left((C^{(n)}_0)^T \otimes I_{2n-1} \right) \right] = \Tr[C^{(n)} * C_0^{(n)}] = 1$ is the trace of the output state on $\HH_{2n-1}$, we can obtain $\Tr[(H_A \otimes I\otimes I)\rho_C^{(n)}] = \bar{E}$.
    Additionally, by $\sum_j \varrho_{j,j}=1$, the second term and the third term of \eqref{eq:energy_rho_n} can be derived similarly:
    \begin{equation}
        \begin{aligned}
            \Tr[(I\otimes H_A \otimes I)\rho_C^{(n)}]=\Tr[(I\otimes I\otimes H_A)\rho_C^{(n)}] 
            &=\frac{1}{M}\sum_{j=0}^{M-1}\sum_{k=0}^{M-1} \sum_{l=0}^{d_A - 1} \varrho_{j,j} E_{l}^A  \left|\bra{e_l}F\ket{e_k}\right|^2 \\
            &= \frac{1}{M}\sum_{k=0}^{M-1} \sum_{l=0}^{d_A - 1} E_{l}^A |\bra{e_l}F\ket{e_k}|^2 \\
            &= \frac{1}{M}\sum_{k=0}^{M-1} \sum_{l=0}^{d_A - 1} E_{l}^A \frac{1}{d_A} \\
            &= \frac{1}{M}\sum_{k=0}^{M-1} \bar{E} \\
            &= \bar{E}.
        \end{aligned}
    \end{equation}
    Hence, the energy of $\rho^{(n)}_C$ is always $3\bar{E}$ regardless of the composed comb $C_0$.

    Since the above derivation holds for arbitrary $1\leq n \leq N$, the energy on $\HH_C$ remains as $3\bar{E}$ throughout the process.
\end{proof}

Finally, we conclude the proof by giving a construction of $\{U_n\}_{n=1}^N$ satisfying \eqref{eq:hat_Un}.

\begin{lemma}
    For an $N$-step quantum comb $C$, there exists a sequence of isometries $\{U_n\}_{n=1}^N$ satisfying \eqref{eq:hat_Un} with $\{\ket{\phi_j}\}$ given in Lem.~\ref{lem:energy_constant_cond}.
\end{lemma}
\begin{proof}
    For $N=1$, we can just pick $U_1 = \hat{U}_1$. For $N>1$, with $U_1 = \hat{U}_1$, we construct $U_2,\dots,U_N$ in the following part. 
    
    In fact, if we input a quantum state on the input space $\HH_{0,2,\dots,2n-2}$ instead of interacting with a general quantum comb, the process $C^{(n)}$ of the first $n$ steps can be regarded as a quantum channel $\mathcal{E}_n$ from the input space $\HH_{0,2,\dots,2n-2}$ to the output space $\HH_{1,3,\dots,2n-1}$. Then the operators $\{K_j^{(n)}\}_{j=0}^{M-1}$ become the Kraus operators of $\mathcal{E}_n$.

    
    For $n>1$, by the relation $\Tr_{2n-1}[C^{(n)}] = C^{(n-1)} \otimes I_{2n-2}$ of the quantum comb, we have $\Tr_{2n-1}[\mathcal{E}_n(\cdot)]=\mathcal{E}_{n-1}(\Tr_{2n-2}[\cdot])$, which implies both sides are the same channel.  Thus, with the energy eigenbasis $\{\ket{l}_{2n-1}\}_l$ for $\HH_{2n-1}$ and the energy eigenbasis $\{\ket{l'}_{2n-2}\}_{l'}$ for $\HH_{2n-2}$, their Kraus operators $\{(I_{1,3,\dots,2n-1} \otimes \bra{l}_{2n-1})K_j^{(n)}\}_{l,j}$ and $\{K_k^{(n-1)}\otimes \bra{l'}_{2n-2}\}_{l',k}$ are equivalent up to some isometry. That is, there exists an isometry $V=\sum_{l,l',j,k} V_{l j,l' k} \ket{l}_{2n-1} \bra{l'}_{2n-2}\otimes \ket{\varphi_j}_A \bra{\varphi_k}_A$, such that
    	\begin{equation}
    	    (I_{1,3,\dots,2n-3}\otimes\bra{l}_{2n-1})K_j^{(n)}=\sum_{l',k}V_{l j,l' k}K_k^{(n-1)}\otimes \bra{l'}_{2n-2}.
    	\end{equation}
    	Here, $\{\ket{\varphi_j}_A\}_j$ is an orthonormal basis of $\HH_A$ to be determined later. Then, by $\sum_l (I_{1,3,\dots,2n-3}\otimes \ket{l}_{2n-1}\bra{l}_{2n-1}) = I_{1,3,\dots,2n-1}$,
        \begin{equation}
    	\begin{aligned}
    		K_j^{(n)}
            &= \sum_{l,l',k} V_{l j,l' k} K_k^{(n-1)} \otimes  \ket{l'}_{2n-1} \bra{l}_{2n-2} \\
            &= \sum_k K_k^{(n-1)} \otimes \left[(I_{2n-1} \otimes \bra{\varphi_j}_A) V (I_{2n-2} \otimes \ket{\varphi_k}_A) \right] \\
    		&= (I_{1,3,\dots,2n-1} \otimes \bra{\varphi_j}_A) (V \otimes  I_{1,3,\dots,2n-3}) \sum_{k}K_k^{(n-1)} \otimes I_{2n-2} \otimes \ket{\varphi_k}_A,
    	\end{aligned}
        \end{equation}
    	or equivalently,
    	\begin{equation}
        \sum_{j} K_j^{(n)} \otimes \ket{\varphi_j}_A = (V\otimes I_{1,3\dots,2n-3})\sum_{j} K_j^{(n-1)} \otimes I_{2n-2} \otimes \ket{\varphi_j}_A.
        \end{equation}
    	Let $\ket{\varphi_j}_A=\omega_M^{jk} F\ket{e_j}$ with $k=0,\dots,N-1$, it follows that
    \begin{equation}\label{eq:Kraus_relation}
        \sum_{j} K_j^{(n)} \otimes  \omega_M^{jk}F\ket{e_j} = (U^{(n)}_k \otimes I_{1,3\dots,2n-3}) \sum_{j} K_j^{(n-1)}\otimes I_{2n-2} \otimes \omega_M^{jk}F\ket{e_j}
    \end{equation}
    	for some isometry $U_k^{(n)} : \HH_{2n-2} \otimes \HH_A \rightarrow \HH_{2n-1} \otimes \HH_A$. Since $\HH_C \simeq \HH_A^{\otimes 3}$, we can construct $U_n: \HH_{2n-2} \otimes \HH_C \rightarrow \HH_{2n-1} \otimes \HH_C$ as 
    	\begin{equation}
        U_n=\sum_{k=0}^{M-1} U_k^{(n)} \otimes F\ket{e_k}\bra{e_k}F^\dagger \otimes I_{\HH_A}.
        \end{equation}

    We then show \eqref{eq:hat_Un} is satisfied with this construction by induction. Since we choose $U_1 = \hat{U}_1$, it holds for $n=1$. If it holds for $n-1$ with $n>1$, we have
    \begin{equation}
    \hat{U}_{n-1} (I_{1,3,\dots,2n-3} \otimes \ket{\phi_0}) = \sum_j K_j^{(n-1)} \otimes \ket{\phi_j}.
    \end{equation}
    Using the relation $\hat{U}_n = (U_n \otimes I_{1,3,\dots,2n-3}) (\hat{U}_{n-1} \otimes I_{2n-2})$ and \eqref{eq:Kraus_relation}, we can obtain
    \begin{equation}
    \begin{aligned}
        \hat{U}_n (I_{0,2,\dots,2n-2} \otimes \ket{\phi_0})
        =& (U_n \otimes I_{1,3,\dots,2n-3}) (\hat{U}_{n-1} \otimes I_{2n-2}) (I_{0,2,\dots,2n-2} \otimes \ket{\phi_0}) \\
        =& (U_n \otimes I_{1,3,\dots,2n-3}) \sum_{j} K_j^{(n-1)} \otimes I_{2n-2} \otimes \ket{\phi_j}.
    \end{aligned}
    \end{equation}
    Now substitute the expression for $\{\ket{\phi_j}\}$ from \eqref{eq:phi_val} into the above equation:
    \begin{equation}
    \begin{aligned}
        \hat{U}_n (I_{0,2,\dots,2n-2} \otimes \ket{\phi_0})
        =& (U_n \otimes I_{1,3,\dots,2n-3}) \sum_{j} K_j^{(n-1)} \otimes I_{2n-2} \otimes \left( \frac{1}{\sqrt{M}} \sum_{k} \omega_M^{jk} F\ket{e_j} \otimes F\ket{e_k} \otimes F\ket{e_k} \right) \\
        =& \frac{1}{\sqrt{M}} \sum_{k} \left[ (U_k^{(n)} \otimes I_{1,3,\dots,2n-3}) \sum_j K_j^{(n-1)} \otimes I_{2n-2} \otimes \omega_{M}^{jk} F\ket{e_j} \right] \otimes F\ket{e_k} \otimes F\ket{e_k} .
    \end{aligned}
    \end{equation}
    Applying \eqref{eq:Kraus_relation} to the above equation finally yields \eqref{eq:hat_Un} for $n$:
    \begin{equation}
    \begin{aligned}
        \hat{U}_n (I_{0,2,\dots,2n-2} \otimes \ket{\phi_0})
        =& \frac{1}{\sqrt{M}} \sum_{k} \left[ \sum_j K_j^{(n)} \otimes \omega_{M}^{jk} F\ket{e_j} \right] \otimes F\ket{e_k} \otimes F\ket{e_k} \\
        =& \sum_j K_j^{(n)} \otimes \ket{\phi_j} .
    \end{aligned}
    \end{equation}
\end{proof}

\section{Proof of Theorem~\ref{thm:qubit_optimal_estimator}} \label{sec:proof_qubit_optimal_estimator}
To demonstrate that the range of an optimal estimator can be restricted to $\{0,\pi\}$ when $\{C_\theta\}$ is covariant and $E_{\text{gl/loc}}(C_\theta) \leq 0$, we begin by establishing an auxiliary lemma concerning the energy consumption of a sequentially composed comb.
\begin{lemma}
\label{lem:compose_energy}
    Let $C_1 \in \Comb[(\HH_1,\HH_2'), \dots, (\HH_{2N-1}, \HH_{2N}')]$ and $C_2 \in \Comb[(\HH_2',\HH_2), \dots, (\HH_{2N}', \HH_{2N})]$ be $N$-step quantum combs with general spaces $\HH_1,\dots,\HH_{2N}$ and $\HH_2',\dots,\HH'_{2N}$. If we concatenate the output spaces of $C_1$ with the input spaces of $C_2$ to obtain an $N$-step comb $C_1 * C_2$, the energy consumption of the concatenation $C_1 * C_2$ is bounded by the sum of the energy consumptions of these two combs, i.e., 
    \[
    E_{\text{gl/loc}}(C_1 * C_2) \leq E_{\text{gl/loc}}(C_1) + E_{\text{gl/loc}}(C_2).
    \]
\end{lemma}
\begin{proof}
    Denote by $H_{n_1,\dots,n_k}$ ($H_{n_1,\dots,n_k}'$) the Hamiltonian of $\HH_{n_1} \otimes \cdots \otimes \HH_{n_k}$ ($\HH_{n_1}' \otimes \cdots \otimes \HH_{n_k}'$). The composition is
    \begin{equation}
        C := C_1 \circ C_2 \in \Comb[(\HH_1,\HH_2), \dots, (\HH_{2N-1}, \HH_{2N})].
    \end{equation}
    By the expression of the energy consumption given in \eqref{eq:global_energy_consumption} and \eqref{eq:local_energy_consumption}, we have
    \begin{equation}
        E_{\text{gl}}(C) = \max_{n=1,\dots,N} \lambda_{\max} \left( \Tr_{2,4,\dots,2N}[(H_{2,4,\dots,2n} \otimes I) C] - (H_{1,3,\dots,2n-1})^T \otimes I  \right)
    \end{equation}
    and
    \begin{equation}
        E_{\text{loc}}(C) = \sum_{n=1}^N \max\left\{\lambda_{\max} \left( \Tr_{2,4,\dots,2N}[(H_{2n} \otimes I) C] - (H_{2n-1})^T \otimes I \right),0\right\}.
    \end{equation}
    For $C_1$ and $C_2$, their energy consumption can be similarly obtained by correspondingly replacing $\HH_{2n}$ with $\HH_{2n}'$ and $H_{2n}$ by $H_{2n}'$. 

    For convenience, we denote both the term $H_{2,4,\dots,2n} \otimes I_{2(n+1),\dots,2N}$ ($H_{1,3,\dots,2n-1} \otimes I_{2n+1,\dots,2N-1}$) for the global battery and the term $H_{2n} \otimes I_{2,\dots,2n-2,2n+2,\dots,2N}$ ($H_{2n-1} \otimes I_{1,\dots,2n-3,2n+1,\dots,2N-1}$) for the local battery in the above equations by a unified notation $H^{(n)}_\out$ ($H^{(n)}_\In$). Similarly, $H'^{(n)}$ represents either $H'_{2,4,\dots,2n} \otimes I_{\HH'_{2n+2},\dots,\HH_{2N}}$ or $H'_{2n} \otimes I_{\HH'_{2},\dots,\HH_{2n-2},\HH_{2n+2},\dots,\HH_{2N}}$ correspondingly. With these notations, the energy consumption can be rewritten as
    \begin{equation}
        E_{\text{gl}}(C) = \max_{n=1,\dots,N} \lambda_{\max} \left( \Tr_{2,4,\dots,2N}[(H^{(n)}_\out \otimes I) C] - (H^{(n)}_\In)^T  \right)
    \end{equation}
    for the global battery and
    \begin{equation}
        E_{\text{loc}}(C) = \sum_{n=1}^N \max\left\{\lambda_{\max} \left( \Tr_{2,4,\dots,2N}[(H^{(n)}_\out \otimes I) C] - (H^{(n)}_\In)^T\right),0\right\}.
    \end{equation}
    for the local battery, where the specific value of $H^{(n)}$ depends on the type of battery model we are considering. Therefore, to prove $E_{\text{gl/loc}}(C_1 * C_2) \leq E_{\text{gl/loc}}(C_1) + E_{\text{gl/loc}}(C_2)$, with the energy consumption of the first $n$ steps given by
    \begin{equation}
    E^{(n)}(C) = \lambda_{\max}\left( \Tr_{2,4,\dots,2N}[(H^{(n)}_\out \otimes I) C] - (H^{(n)}_\In)^T \right),
    \end{equation}
    it suffices to show $E^{(n)}(C) \leq E^{(n)}(C_1) + E^{(n)}(C_2)$. Here, $E^{(n)}(C_1)$ and $E^{(n)}(C_2)$ represent the corresponding values determined by the input and output spaces associated with the combs $C_1$ and $C_2$, respectively.
    
    To see this, we begin with
    \begin{equation}
        \begin{aligned}
             \Tr_{2,4,\dots,2N}[(H^{(n)}_\out \otimes I) C] 
            = \Tr_{2,4,\dots,2N}[(I_{1,3,\dots,2N-1} \otimes H^{(n)}_\out) (C_1 \circ C_2)].
        \end{aligned}
    \end{equation}
    Let $\HH' := \HH_2' \otimes \HH_4' \otimes \cdots \otimes \HH_{2N}'$. By the link product, the concatenation can be determined as
    \begin{equation}
    C_1 \circ C_2 = \Tr_{\HH'}[(C_1^{T_{\HH'}} \otimes I_{2,4,\dots,2N})(I_{1,3,\dots,2N-1} \otimes C_2)].
    \end{equation}
    Then, we have
    \begin{equation}
        \begin{aligned}
            & \Tr_{2,4,\dots,2N}[(H^{(n)}_\out \otimes I) C] \\
            =& \Tr_{2,4,\dots,2N}\left[(I_{1,3,\dots,2N-1} \otimes H^{(n)}_\out) \Tr_{\HH'}[(C_1^{T_{\HH'}} \otimes I_{2,4,\dots,2N})(I_{1,3,\dots,2N-1} \otimes C_2)] \right] \\
            =&  \Tr_{\HH'}\left[C_1^{T_{\HH'}}\left(I_{1,3,\dots,2N-1} \otimes  \Tr_{2,4,\dots,2N}[(I_{\HH'} \otimes H^{(n)}_\out) C_2)] \right) \right] \\
            =& \Tr_{\HH'}\left[C_1\left(I_{1,3,\dots,2N-1} \otimes  \Tr_{2,4,\dots,2N}[(I_{\HH'} \otimes H^{(n)}_\out) C_2)] \right)^T \right].
        \end{aligned}
    \end{equation}
    Observing that $E^{(n)}(C_2) = \lambda_{\max}\left( \Tr_{2,4,\dots,2N}[(I_{\HH'} \otimes H^{(n)}_\out) C_2)] - (H'^{(n)})^T \right)$, we obtain the operator inequality $\Tr_{2,4,\dots,2N}[(I_{\HH'} \otimes H^{(n)}_\out) C_2)] \preceq (H'^{(n)})^T + E^{(n)}(C_2) I_{\HH'}$. This leads to
    \begin{equation}
        \begin{aligned}
            & \Tr_{2,4,\dots,2N}[(H^{(n)}_\out \otimes I) C] \\
            \preceq& \Tr_{\HH'}\left[C_1\left(I_{1,3,\dots,2N-1} \otimes \left((H'^{(n)})^T + E^{(n)}(C_2) I_{\HH'}\right)  \right)^T \right] \\
            =&  \Tr_{\HH'}\left[(I_{1,3,\dots,2N-1} \otimes  H'^{(n)})C_1 \right] + E^{(n)}(C_2) \Tr_{\HH'}[C_1]. 
        \end{aligned}
    \end{equation}
    For the first term, we have $\Tr_{\HH'}\left[(I_{1,3,\dots,2N-1} \otimes  H'^{(n)})C_1 \right] \preceq (H^{(n)}_\In)^T + E^{(n)}(C_1) I_{1,3,\dots,2N-1}$. For the second term, since tracing out the output space of a Choi operator leads to the identity on the input space, we have $\Tr_{\HH'}[C_1] = I_{1,3,\dots,2N-1}$. It follows that
    \begin{equation}
        \begin{aligned}
            \Tr_{2,4,\dots,2N}[(H^{(n)}_\out \otimes I) C] \preceq (H^{(n)}_\In)^T + E^{(n)}(C_1) I_{1,3,\dots,2N-1} + E^{(n)}(C_2) I_{1,3,\dots,2N-1}.
        \end{aligned}
    \end{equation}
    Thus,
    \begin{equation}
    E^{(n)}(C) := \lambda_{\max}\left( \Tr_{2,4,\dots,2N}[(H^{(n)}_\out \otimes I) C] - (H^{(n)}_\In)^T \right) \leq E^{(n)}(C_1) + E^{(n)}(C_2).
    \end{equation}
\end{proof}

Then, we derive a result implying that we can always shift the values of the estimator without any precision loss. 
\begin{lemma}
\label{lem:phase_shift}
Let $\theta_0 \in [0,2\pi]$ be an arbitrarily given shift value. Assume that all input and output spaces are isomorphic and share the same Hamiltonian. Suppose $\{C_\theta\}$ is a covariant comb with non-positive energy consumption, satisfying $E_{\text{gl/loc}}(C_\theta) \leq 0$. For any probe instrument $\{T_x\}$ with an optimal estimator $\hat{\theta}$, there always exists another probe instrument $\{T_x'\}$ with an estimator $\hat{\theta}'$ defined by $\hat{\theta}'(x) = (\hat{\theta}(x) - \theta_0) \mod 2\pi$, which achieves the same average cost, i.e., $\bar{c}(T, \hat{\theta}) = \bar{c}(T', \hat{\theta}')$.
\end{lemma}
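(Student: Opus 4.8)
The plan is to realize the estimator shift by \emph{physically} absorbing an inverse parameter shift into the probe: I would compose the original probe with the covariant comb $C_{-\theta_0}$ and then verify that this composition preserves both the energy budget and the average cost. Concretely, define the new instrument elements $T_x' := T_x * C_{-\theta_0}$, where the link product acts on exactly the input/output slots through which the probe interlaces with $C_\theta$, so that $\{T_x'\}$ is again a probe instrument feeding into $C_\theta$ in the same causal arrangement as $\{T_x\}$.

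First I would confirm that $T'$ is a legitimate energy-constrained probe. That $T' = \sum_x T_x'$ is a valid comb follows from the closure of the set of combs under the link product. For the energy constraint, I would invoke Lemma~\ref{lem:compose_energy} to get $E_{\text{global/local}}(T') = E_{\text{global/local}}(T * C_{-\theta_0}) \le E_{\text{global/local}}(T) + E_{\text{global/local}}(C_{-\theta_0})$. Since $-\theta_0$ is simply another value of the parameter (modulo $2\pi$), the hypothesis $E_{\text{global/local}}(C_\theta)\le 0$ gives $E_{\text{global/local}}(C_{-\theta_0}) \le 0$, whence $E_{\text{global/local}}(T') \le E_{\text{global/local}}(T) \le E$ and $T'$ obeys the same budget.

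Next I would convert the composition into a statement about the outcome statistics through covariance. Using associativity of the link product together with the defining relation $C_{-\theta_0}\circ C_\theta \cong C_{\theta-\theta_0}$, I obtain $T_x' * C_\theta = T_x * (C_{-\theta_0}\circ C_\theta) = T_x * C_{\theta-\theta_0}$, that is $p'(x|\theta) = p(x|\theta-\theta_0)$. Substituting this into the definition of the task operator $\JJ$ and changing variables by $\psi = \theta-\theta_0$ in the integral (valid since the prior is uniform on the circle and every object is $2\pi$-periodic) yields the clean identity $\Tr[T_x'\JJ] = \ee^{-\ii\theta_0}\,\Tr[T_x\JJ]$.

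The conclusion then drops out of the simplified cost~\eqref{eq:avg_cost_simplified}: with $\hat\theta'(x) = \hat\theta(x)+\theta_0$ each summand becomes $\Re[\ee^{\ii\hat\theta'(x)}\Tr[T_x'\JJ]] = \Re[\ee^{\ii(\hat\theta(x)+\theta_0)}\ee^{-\ii\theta_0}\Tr[T_x\JJ]] = \Re[\ee^{\ii\hat\theta(x)}\Tr[T_x\JJ]]$, so the shift cancels term by term and $\bar c(T',\hat\theta') = \bar c(T,\hat\theta)$. I expect the main obstacle to be bookkeeping rather than ideas: one must set up the link product over the correct interlacing slots so that $T'$ genuinely interacts with $C_\theta$ in the same pattern as $T$, and check that the covariance isomorphism is compatible with that slot assignment. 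If one prefers not to assume the budget hypothesis holds for every $\theta$, the needed $E_{\text{global/local}}(C_{-\theta_0})\le 0$ can instead be deduced from the isomorphism underlying covariance together with the energy formulas of Theorem~\ref{thm:energy_constrained_comb}, which depend only on the Choi operator and the (shift-invariant) Hamiltonians.
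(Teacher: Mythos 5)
Your proposal is correct and follows essentially the same route as the paper: compose the probe with a shifted copy of the covariant comb, invoke Lemma~\ref{lem:compose_energy} to preserve the energy budget, and use the change of variables in the definition of $\JJ$ to get $\Tr[T_x'\JJ]=\ee^{\mp\ii\theta_0}\Tr[T_x\JJ]$. The only difference is that you compose with $C_{-\theta_0}$ where the paper uses $C_{\theta_0}$ (your sign choice actually matches the stated $\hat{\theta}'(x)=\hat{\theta}(x)+\theta_0$ more directly), and you verify the cost equality term by term rather than via the formula \eqref{eq:optimal_estimator} for the optimal estimator; both are immaterial.
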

\begin{proof}
First, by extending the process of $C_\theta$ with an identity map on the final output space $\HH_{2N-1}$ of the probe $T$, we can ensure that the combs $C_\theta$ and $T$ both are $N$-step combs. Therefore, we can concatenate the output of $T$ with the input of $C_\theta$ to obtain $T' = T \circ C_{\theta_0}$, which is also an $N$-step probe comb with an isomorphism. And, Lem.~\ref{lem:compose_energy} implies that the energy consumption of $T'$ is at most the same as that of $T$, so it is also a feasible probe. Meanwhile, note that $T'*C_\theta = (T \circ C_{\theta_0}) * C_\theta = T * (C_{\theta_0} \circ C_\theta) = T * C_{\theta+\theta_0}$ holds by the associative law for the link product~\cite{chiribellaTheoreticalFrameworkQuantum2009}, as the concatenation $\circ$ is a special form of the link product $*$. This results in the following relationship:
\begin{equation}
\begin{aligned}
    \Tr[T_x' \JJ]
    &= \frac{1}{\int_\theta \dd\theta} \int_\theta \ee^{-\ii\theta} \Tr[T_x' C_\theta^T] \dd\theta \\
    &= \frac{1}{\int_\theta \dd\theta} \int_\theta \ee^{-\ii\theta} \Tr[T_x C_{\theta+\theta_0}^T] \dd\theta \\
    &= \frac{1}{\int_\theta \dd\theta} \ee^{\ii\theta_0} \int_\theta \ee^{-\ii\theta} \Tr[T_x C_\theta^T] \dd\theta \\
    &= \ee^{\ii\theta_0} \Tr[T_x \JJ].
\end{aligned}
\end{equation}
The last equation is from the definition of the task operator $\JJ := \frac{1}{\int_\theta \dd\theta} \int_\theta \mathrm{e}^{-\mathrm{i}\theta}C_\theta^T\mathrm{d}\theta$. Since the optimal estimator for $T$ can be determined as $\hat{\theta}(x) = -\arg(\Tr[T_x \JJ])$, the optimal estimator for $T'$ is $\hat{\theta}'(x) = -\arg(\Tr[T_x' \JJ]) = (\hat{\theta}(x) - \theta_0) \mod 2\pi$.  
\end{proof}
This conclusion implies that we can assume $\hat{\theta}(0) =0 $ without loss of generality. In the qubit case, the value of $\hat{\theta}(1)$ can then also be fixed.
\begin{proof}[Proof of Thm.~\ref{thm:qubit_optimal_estimator}]
    By Lem.~\ref{lem:phase_shift}, we assume that $\hat{\theta}(0)=0$ is optimal without loss of generality. For any probe comb $T$ with $T=\sum_x T_x \otimes \ket{x}\bra{x}$, by the definition of the link product, we have
    \begin{equation}
        \begin{aligned}
            \sum_x \Tr[T_x \JJ]
            =& \int_\theta \ee^{-\ii\theta} \sum_x \Tr[T_x C_\theta^T] \dd\theta \\
            =& \int_\theta \ee^{-\ii\theta} \Tr[T * C_\theta] \dd\theta \\
            =& \int_\theta \ee^{-\ii\theta} \dd\theta \\
            =& 0,
        \end{aligned}
    \end{equation}
where $\Tr[T * C_\theta]=1$ holds because $T*C_\theta$ is the final output state on $\HH_{2N-1}$.
For a qubit, it holds that $x \in \{0,1\}$, which leads to $\Tr[T_1\JJ] = -\Tr[T_0\JJ]$. Since the optimal value of $\hat{\theta}(0)$ is $- \arg(\Tr[T_0\JJ])$, based on our assumption $\hat{\theta}(0)=0$, we can obtain $\Tr[T_0\JJ] \in \R$. Therefore, the optimal value of $\hat{\theta}(1)$ is $- \arg(\Tr[T_1\JJ]) = -\arg(-\Tr[T_0\JJ])$, which should be either $0$ or $\pi$ modulo $2\pi$.
\end{proof}

\section{Energy-constrained Qubit Phase Estimation} \label{sec:proof_phase_estimation}

The main result of this section is summarized in the following theorem, whose proof is provided in the remaining part of the section.

\begin{theorem}[Optimal qubit phase estimation]\label{thm:optimal_qubit_phase}
    When the probe is $E$-globally energy-constrained, the optimal average cost for the qubit phase channel is
    \[
       \bar{c}_{\text{phase}}(E,2) = \begin{cases}
            2 - 4(1 - \sqrt{1-E}) \sqrt{1-E}, & E \leq 3/4, \\
            1, & E > 3/4.
        \end{cases}
    \]
For $E \leq 3/4$, the quantum circuit corresponding to the optimal strategy is shown in Fig.~\ref{fig:phase_strategy_circuit}. With $\kappa := 1- \sqrt{1-E}$, in Fig.~\ref{fig:phase_strategy_circuit}, 
\[
\rho_0 = \sqrt{1 - \kappa} \ket{0}\ket{\phi_0} + \sqrt{\kappa} \ket{1}\ket{\phi_1},
\]
and $U$ is defined as
\begin{gather*}
    U \ket{0}\ket{\phi_0} = \sqrt{1 - \kappa} \ket{0}\ket{\phi_0} + \sqrt{\kappa} \ket{1}\ket{\phi_1},  \\
      U \ket{1}\ket{\phi_1} = \sqrt{\kappa} \ket{0}\ket{\phi_0} - \sqrt{1 - \kappa} \ket{1}\ket{\phi_1},  \\
    U \ket{1}\ket{\phi_0} = \ket{1}\ket{\phi_0},\, 
    U \ket{0}\ket{\phi_1} = \ket{0}\ket{\phi_1},
\end{gather*}
where $\ket{\phi_0}$ and $\ket{\phi_1}$ are two orthonormal states of the control ancilla given in \eqref{eq:phi_val}.
\end{theorem} 
\begin{figure}[htbp]
    \centering
    \includegraphics[width=0.35\linewidth]{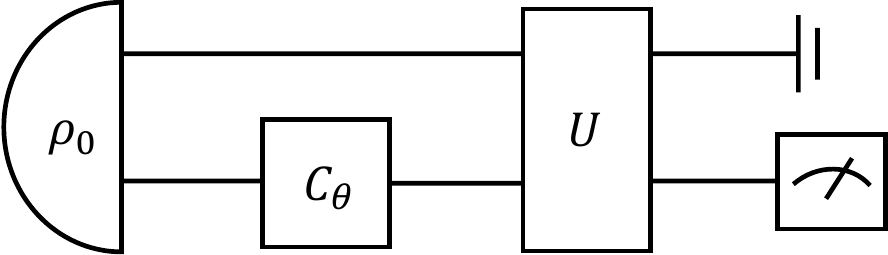}
    \caption{Optimal strategy for the qubit phase channel}
    \label{fig:phase_strategy_circuit}
\end{figure}

Our construction of $\ket{\phi_0}$ and $\ket{\phi_1}$ in \eqref{eq:phi_val} ensures the energy of the control ancilla remains unchanged throughout the process, as stated in Theorem~\ref{thm:comb_implementation}.
Consequently, the ancilla consumes no energy after initialization, and the energy it carries can be recycled at the end of the process. 

We then give the proof of the above result. For a qubit phase channel diagonal in the energy eigenbasis with Hamiltonian $H=\ket{1}\bra{1}$, we have $C_\theta = \sum_{i,j=0}^1 \ket{i}\bra{j} \otimes \ee^{\ii(i-j)\theta} \ket{i}\bra{j}$, and thus by the definition of the operator $\JJ$,   
\begin{equation}
    \JJ := \frac{1}{2\pi}\int_{\theta \in [0,2\pi]} \ee^{-\ii \theta} C_{\theta}^T \dd\theta = \ket{0}\bra{1} \otimes \ket{0}\bra{1}.
\end{equation}
By Thm.~\ref{thm:qubit_optimal_estimator} 
in the main text, we can assume $\hat{\theta}(0)=0, \hat{\theta}(1)=\pi$. Therefore, the average cost is 
\begin{equation}\label{eq:average_cost_phase}
\bar{c}(T, \hat{\theta}) := 2 - 2 \sum_{x=0}^{1} \Re[\ee^{\ii\hat{\theta}(x)} \Tr[T_x \JJ]] = 2 - 2 \Re[\Tr[T_0 \JJ]] + 2 \Re[\Tr[T_1 \JJ]].
\end{equation}
Since 
\begin{equation}
    \Tr[T_0 \JJ] + \Tr[T_1 \JJ] = \int_\theta \ee^{-\ii\theta} \Tr[T * C_\theta] \dd\theta = \int_\theta \ee^{-\ii\theta} \dd\theta = 0,
\end{equation}
the average cost determined in \eqref{eq:average_cost_phase} can be further simplified as
\begin{equation}
\bar{c}(T, \hat{\theta}) = 2 - 4 \Re[\Tr[T_0 \JJ]].
\end{equation}
Hence, our goal is to minimize the above objective function with $T \in \Comb_{\text{gl}}^{\leq E}$. Expanding the constraints for $\Comb_{\text{gl}}^{\leq E}$ in an explicit form by Thm.~\ref{thm:energy_constrained_comb} 
and the characterization of quantum combs
\begin{equation}
    \begin{cases}
        \Tr_{3}[T] = T^{(1)} \otimes I_{2}, \\
        \Tr[T^{(1)}] = 1,
    \end{cases}
\end{equation}
the problem can be formulated as an optimization:
\begin{align}
		\bar{c}_{\text{phase}}(E,2) = \min_{T_x} \quad & 2 - 4\Re[ \Tr[T_0 \JJ]] \\
		\mathrm{s.t.} \quad & T_0 + T_1 = T^{(1)} \otimes I \label{eq:sum_Tx} \\
        & \Tr[T^{(1)}] = 1, \\
        & \Tr[HT^{(1)}] \leq E, \label{eq:energy_rho} \\
        & \Tr_{1,3}[(H \otimes I \otimes I + I \otimes I \otimes H) T] \preceq H^T + EI, \label{eq:energy_T} \\
		& T = \sum_{x=0}^1 T_x \otimes \ket{x}\bra{x}, \label{eq:T} \\
		& T \succeq 0 .
\end{align}
We divide the proof into two parts: $E \geq 3/4$ and $E < 3/4$.
\begin{lemma}
    \label{lem:phase_bound_general}
    It holds that $\bar{c}_{\text{phase}}(E,2) \geq 1$ for any $E \geq 0$. Moreover, this lower bound can be saturated when $E \geq 3/4$.
\end{lemma}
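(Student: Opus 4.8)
The plan is to prove the two assertions separately: the universal lower bound $\bar c_{\text{phase}}(E,2)\ge 1$ and its saturation for $E\ge 3/4$. Since the preceding reduction rewrites the objective as $\bar c(T,\hat\theta)=2-4\Re[\Tr[T_0\JJ]]$ with $\JJ=\ket{1}\bra{0}\otimes\ket{1}\bra{0}=\ket{11}\bra{00}$, the lower bound is equivalent to the operator-level estimate $\Re[\Tr[T_0\JJ]]=\Re[\bra{00}T_0\ket{11}]\le 1/4$ for every feasible probe. Working in the ordered basis $\{\ket{00},\ket{01},\ket{10},\ket{11}\}$ of $\HH_1\otimes\HH_2$ and writing $(T_x)_{ab,cd}:=\bra{ab}T_x\ket{cd}$, the first thing I would record is the identity $\bra{00}(\rho\otimes I)\ket{11}=\rho_{01}\langle 0|1\rangle=0$. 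Combined with the comb-normalization constraint \eqref{eq:sum_Tx}, $T_0+T_1=\rho\otimes I$, this yields the cancellation $\bra{00}T_0\ket{11}=-\bra{00}T_1\ket{11}$, so the quantity to be bounded has the same magnitude whether computed from $T_0$ or from $T_1$.

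Next I would exploit positivity of \emph{both} instrument elements. Since $T_0\succeq 0$ and $T_1\succeq 0$, Cauchy--Schwarz on the $\{\ket{00},\ket{11}\}$ principal submatrices gives $|\bra{00}T_0\ket{11}|\le\sqrt{(T_0)_{00,00}(T_0)_{11,11}}$ and $|\bra{00}T_1\ket{11}|\le\sqrt{(T_1)_{00,00}(T_1)_{11,11}}$. Setting $m:=|\bra{00}T_0\ket{11}|=|\bra{00}T_1\ket{11}|$ and using the diagonal identities $(T_0)_{00,00}+(T_1)_{00,00}=\rho_{00}$ and $(T_0)_{11,11}+(T_1)_{11,11}=\rho_{11}$ (again from \eqref{eq:sum_Tx}), I would multiply the two Cauchy--Schwarz inequalities and apply AM--GM to each pair of diagonal entries: $m^2\le\sqrt{(T_0)_{00,00}(T_1)_{00,00}}\,\sqrt{(T_0)_{11,11}(T_1)_{11,11}}\le(\rho_{00}/2)(\rho_{11}/2)$, whence $m\le\tfrac12\sqrt{\rho_{00}\rho_{11}}\le 1/4$ using $\rho_{00}+\rho_{11}=\Tr[\rho]=1$. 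This gives $\Re[\Tr[T_0\JJ]]\le m\le 1/4$ and hence $\bar c\ge 1$. The point to flag is that positivity of a \emph{single} $T_x$ only yields the weaker bound $1/2$; the improvement to $1/4$ genuinely requires the positivity of both elements together with the cancellation identity, and it uses neither energy constraint, so the bound holds for all $E\ge 0$.

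For saturation I would exhibit an explicit feasible probe of energy exactly $3/4$. Tracing the equality conditions above forces $\rho_{00}=\rho_{11}=1/2$ and $(T_0)_{00,00}=(T_0)_{11,11}=(T_1)_{00,00}=(T_1)_{11,11}=1/4$ with $\bra{00}T_0\ket{11}=1/4=-\bra{00}T_1\ket{11}$; among the probes meeting these I would take the energy-minimizing one, namely $\rho=I/2$, $T_1=\tfrac12\ket{\psi^-}\bra{\psi^-}$ with $\ket{\psi^-}=(\ket{00}-\ket{11})/\sqrt 2$, and $T_0=\tfrac12 I-T_1$, which are manifestly positive and obey $T_0+T_1=\rho\otimes I$. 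It then remains to check the energy budget via Thm~\ref{thm:energy_constrained_comb}: one finds $O_1=\Tr[H\rho]I=\tfrac12 I$ and, using $\Tr_1[T_1]=\tfrac14 I$, $O_2=\rho_{11}I+\Tr_1[T_1]-H=\tfrac34 I-\ket{1}\bra{1}$, so $\lambda_{\max}(O_2)=3/4$ and $E_{\text{global}}(T)=3/4$. Hence this probe is $E$-globally energy-constrained for every $E\ge 3/4$ and attains $\bar c=1$, matching the lower bound. The main obstacle is exactly this last bookkeeping: one must select, among all minimizers of the cost, the configuration that simultaneously minimizes $\lambda_{\max}(O_2)$ (the measurement-delegated energy), since a naive cost-optimal choice such as $T_0=\ket{+}\bra{+}^{\otimes 2}$ still saturates the cost but consumes strictly more energy, giving the wrong threshold.
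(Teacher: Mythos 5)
Your proof is correct and follows essentially the same route as the paper's: you bound the corner element $\bra{00}T_0\ket{11}$ by applying positivity of both $T_0$ and $\rho\otimes I-T_0$ on the $\{\ket{00},\ket{11}\}$ principal submatrix and conclude $|t|\le 1/4$ (your AM--GM step replaces the paper's explicit optimization over the diagonal entries, but uses the same two ingredients), then saturate with an instrument having $\rho=I/2$ and corner $\pm 1/4$ at energy exactly $3/4$. The only cosmetic difference is your choice $T_1=\tfrac12\ket{\psi^-}\bra{\psi^-}$ versus the paper's matrix, both of which pass the same energy check via $\lambda_{\max}(O_2)=3/4$.
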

\begin{proof}
    We omit the energy constraints to conclude an optimal probe, and then back to check its energy consumption. 
     Let $T_x = \sum_{i,j,k,l=0}^1 T_x^{ijkl} \ket{i}\bra{j} \otimes \ket{k}\bra{l}$ and $T^{(1)} = \sum_{i,j=0}^1 T^{(1)}_{ij} \ket{i}\bra{j}$. Since $\Re[\Tr[T_0 \JJ]] = \Re[T_0^{1010}]$ as $\JJ = \ket{0}\bra{1} \otimes \ket{0}\bra{1}$, it suffices to obtain an upper bound for $\Re[T_0^{1010}]$.  
    
    For convenience, we denote $T_0^{1010}$ by $t$. The objective becomes 
    \begin{equation}
        \bar{c}_{\text{phase}}(E,2) = 2 - 4 \max_{T_0} \Re[t].
    \end{equation}
Additionally, suppose the diagonal elements of $T_0$ are $a,b,c,d$.

    With $T_0,T_1 \succeq 0$, \eqref{eq:sum_Tx} leads to $0 \preceq T_0 \preceq T^{(1)} \otimes I$. Thus, the diagonal elements should satisfy
\begin{equation} \label{eq:phase_positive_constraint}
    \begin{cases}
        0 \leq a,b \leq T^{(1)}_{00} = 1-T^{(1)}_{11}, \\
        0 \leq c,d \leq T^{(1)}_{11}.
    \end{cases}
\end{equation}
Meanwhile, consider the $2\times 2$ submatrices formed by the elements at the four corners of the matrices $T_0$ and $T^{(1)} \otimes I - T_0 \succeq 0$, both of which should also be positive semidefinite. It follows that
\begin{equation}
    \begin{cases}
        |t|^2 \leq ad, \\
        |t|^2 \leq (T^{(1)}_{00} - a)(T^{(1)}_{11} - d) = (1 - T^{(1)}_{11} - a)(T^{(1)}_{11} - d).
    \end{cases}
\end{equation}
This implies 
\begin{equation}\label{eq:max_t}
|t|^2 \leq \min \{ ad, (1 - T^{(1)}_{11} - a)(T^{(1)}_{11} - d)\}.
\end{equation}
To maximize the right-hand side under the constraint \eqref{eq:phase_positive_constraint}, by the monotonicity, the maximization will always be achieved when
\begin{equation}
    ad = (1-T^{(1)}_{11} - a)(T^{(1)}_{11} - d).
\end{equation}
With $T^{(1)}_{11} \leq \Tr[T^{(1)}] \leq 1$, this results in
\begin{equation}
    \min \{ ad, (1 - T^{(1)}_{11} - a)(T^{(1)}_{11} - d)\} = T^{(1)}_{11}a - \frac{T^{(1)}_{11}}{1-T^{(1)}_{11}} a^2 \leq \frac{T^{(1)}_{11} (1-T^{(1)}_{11})}{4} \leq \frac{1}{16} .
\end{equation}
Hence, we can obtain
\begin{equation}
    \bar{c}_{\text{phase}}(E,2) = 2 - 4 \max \Re[t] \geq 2 - 4 \max |t| \geq 2 - 4\sqrt{1/16} = 1 .
\end{equation}

As for the specific solution to attain this lower bound when $E\geq 3/4$, choose 
\begin{equation}
T_0 = \begin{pmatrix}
    \frac{1}{4} & & & \frac{1}{4} \\
    & 0 & & \\
    & & \frac{1}{2} & \\
    \frac{1}{4} & & & \frac{1}{4}
\end{pmatrix}, \,
T_1 = \begin{pmatrix}
    \frac{1}{4} & & & -\frac{1}{4} \\
    & \frac{1}{2} & & \\
    & & 0 & \\
    -\frac{1}{4} & & & \frac{1}{4}
\end{pmatrix}.
\end{equation}
For this construction, we have $t := T_0^{1010} = 1/4$, which leads to $\bar{c}_{\text{phase}}(E,2) = 2 - 4 \max \Re[t] = 1$. Additionally, we can obtain $T^{(1)} = I/2$ by \eqref{eq:sum_Tx}. This leads to $\Tr[HT^{(1)}] = 1/2 \leq 3/4$ and
\begin{equation}
\Tr_{1,3}[(H \otimes I \otimes I + I \otimes I \otimes H) T] = \frac{1}{2}H^T + \frac{3}{4}I \preceq H^T+ \frac{3}{4}I
\end{equation}
by \eqref{eq:T}. Therefore, the minimal $E$ to satisfy the energy constraints is $3/4$. This aligns with the quantum comb of the circuit in Fig.~\ref{fig:phase_strategy_circuit} when $E=3/4$.
    
\end{proof}

\begin{lemma}
    We have
    \[
       \bar{c}_{\text{phase}}(E,2) = \begin{cases}
            2 - 4\kappa (1-\kappa), & E \leq 3/4, \\
            1, & E > 3/4,
        \end{cases}
    \]
with $\kappa = 1-\sqrt{1-E}$.
\end{lemma}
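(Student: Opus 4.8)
The plan is to split on $E$. For $E>3/4$ the claim is immediate from Lem.~\ref{lem:phase_bound_general}, which already gives $\bar c_{\text{phase}}(E,2)\ge 1$ for all $E$ together with a saturating strategy once $E\ge 3/4$; so only $E\le 3/4$ needs work, and here I expect the energy budget to genuinely bind. I continue the primal formulation of the proof of Thm.~\ref{thm:optimal_qubit_phase}: with $\JJ=\ket1\bra0\otimes\ket1\bra0$ one has $\bar c_{\text{phase}}(E,2)=2-4\max_{T_0}\Re[t]$, where $t=T_0^{0101}$ is the $\ket{00}\bra{11}$ corner entry of $T_0$ and $(a,b,c,d)$ its diagonal. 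The goal is to show $\max\Re[t]=\kappa(1-\kappa)$ with $\kappa=1-\sqrt{1-E}$.

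For the lower bound on the cost (upper bound on $\Re[t]$), I would first turn the operator constraint \eqref{eq:energy_T} into a scalar one. Writing $T_1=\rho\otimes I-T_0$ and computing $\Tr_{1,3}[(H\otimes I\otimes I+I\otimes I\otimes H)T]=\rho_{11}I+\Tr_1[T_1]$ with $H=\ket1\bra1$ (so $H^T=H$) on the energy-eigenbasis outcome register, the inequality \eqref{eq:energy_T} becomes $\Tr_1[T_1]\preceq\ket1\bra1+(E-\rho_{11})I$; its $(0,0)$ component reads $1-a-c\le E-\rho_{11}$, and combined with $c\le\rho_{11}$ from \eqref{eq:phase_positive_constraint} this forces $a\ge 1-E$. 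Retaining the corner-positivity bounds $|t|^2\le ad$ and $|t|^2\le(\rho_{00}-a)(\rho_{11}-d)$ already established in Lem.~\ref{lem:phase_bound_general}, I bound $|t|^2$ by $\max\min\{ad,(\rho_{00}-a)(\rho_{11}-d)\}$ over the feasible region. Balancing the two terms in $d$ gives $d=\rho_{11}(\rho_{00}-a)/\rho_{00}$ with value $a\rho_{11}(\rho_{00}-a)/\rho_{00}$; because $a\ge1-E$ pins $a$ to the non-increasing branch of $a\mapsto a(\rho_{00}-a)$ in the relevant range, the optimum sits at $a=1-E$, after which a one-variable maximization over $\rho_{11}\in[0,E]$ has interior stationary point $\rho_{11}=\kappa$, yielding $|t|^2\le\kappa^2(1-\kappa)^2$. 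Hence $\bar c_{\text{phase}}(E,2)\ge 2-4\kappa(1-\kappa)$, and the transition at $E=3/4$, where $1-E$ meets $\rho_{00}/2$ and the constraint $a\ge1-E$ stops binding, reproduces exactly the threshold of Lem.~\ref{lem:phase_bound_general}.

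For achievability I would exhibit a saturating probe. Take $\rho=\diag(1-\kappa,\kappa)$, so $\rho_{11}=\kappa\le E$ and \eqref{eq:energy_rho} holds, together with the rank-one corner $\bigl(\begin{smallmatrix}a&t\\ t&d\end{smallmatrix}\bigr)=\bigl(\begin{smallmatrix}(1-\kappa)^2&\kappa(1-\kappa)\\ \kappa(1-\kappa)&\kappa^2\end{smallmatrix}\bigr)$, $c=\kappa$, and any $b\in[0,1-\kappa]$. One checks $T_0\succeq0$ and $\rho\otimes I-T_0\succeq0$ (each splits into a rank-one PSD corner block on $\{\ket{00},\ket{11}\}$ plus a nonnegative diagonal), and that $\Tr_1[T_1]=\diag(\kappa(1-\kappa),(1-\kappa)-b+\kappa(1-\kappa))\preceq\diag(\kappa(1-\kappa),1+\kappa(1-\kappa))$, so \eqref{eq:energy_T} is met (with equality in the $(0,0)$ slot). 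This achieves $\Re[t]=\kappa(1-\kappa)$ and hence $\bar c_{\text{phase}}(E,2)\le 2-4\kappa(1-\kappa)$; equivalently one may invoke the circuit of Fig.~\ref{fig:phase_strategy_circuit} via Thm.~\ref{thm:comb_implementation}. Together with the lower bound this proves the stated value for $E\le 3/4$.

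The step I expect to be the main obstacle is distilling the operator inequality \eqref{eq:energy_T} into the single usable scalar constraint $a\ge 1-E$: one must correctly assign the Hamiltonian $H=\ket1\bra1$ to the measurement-outcome register, carry out the partial trace $\Tr_1[T_1]$, and pair the resulting $(0,0)$ component with the positivity bound $c\le\rho_{11}$. A secondary subtlety is the constrained bilinear maximization, where the optimizer migrates from the interior point $a=\rho_{00}/2$ to the boundary $a=1-E$ precisely at $E=3/4$; the case split must be tracked to confirm continuity with Lem.~\ref{lem:phase_bound_general}.
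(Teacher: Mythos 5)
Your proposal is correct and follows essentially the same route as the paper: the same reduction of the energy constraint \eqref{eq:energy_T} to the scalar bound $a\ge 1-E$ via the $(0,0)$ component and $c\le\rho_{11}$, the same balancing of the two corner-positivity bounds in $d$, the same optimization over $a$ and then $\rho_{11}$ (yielding $\rho_{11}=\kappa$ and $|t|\le\kappa(1-\kappa)$), and the same achievable $T_0$. The one step you defer --- checking that on the branch $\rho_{11}<2E-1$, where the interior point $a=\rho_{00}/2$ is feasible, the resulting value $\rho_{11}(1-\rho_{11})/4$ never exceeds $(1-E)\kappa^2$ --- is carried out explicitly in the paper and does go through, so the case split you flag closes as expected.
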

\begin{proof}
    The part for $E>3/4$ is just the content of Lem.~\ref{lem:phase_bound_general}. Based on the proof of Lem.~\ref{lem:phase_bound_general}, we further consider the energy constraints for $E \leq 3/4$. Similarly, \eqref{eq:phase_positive_constraint} and \eqref{eq:max_t} still hold. Our goal is still to maximize the right hand side of \eqref{eq:max_t}.

    With $H=\ket{1}\bra{1}$, by \eqref{eq:sum_Tx} and \eqref{eq:T}, the left hand side of \eqref{eq:energy_T} can be simplified as
    \begin{equation}
        \Tr_1 [(H \otimes I)T^{(1)}] + \Tr_1[T_1] = T^{(1)}_{11} I + \Tr_1[T^{(1)} \otimes I - T_0] = (T^{(1)}_{11} + 1) I - \Tr_1[T_0].
    \end{equation}
Therefore, \eqref{eq:energy_T} is equivalent to
\begin{equation}
    \Tr_1[T_0] \succeq (T^{(1)}_{11} - E + 1)I - H^T.
\end{equation}
In this way, the diagonal elements on the two sides should satisfy
\begin{equation}
\begin{cases}
    a + c \geq T^{(1)}_{11} - E + 1, \\
    b + d \geq T^{(1)}_{11} - E .
\end{cases}
\end{equation}
By \eqref{eq:phase_positive_constraint}, we have $c \leq T^{(1)}_{11}$, which then leads to
\begin{equation}
        a \geq T^{(1)}_{11} - c - E + 1 \geq 1 - E .
\end{equation}
Meanwhile, \eqref{eq:energy_rho} requires $T^{(1)}_{11} \leq E$. For summarization, together with \eqref{eq:phase_positive_constraint}, the following conditions should hold:
\begin{equation}
\begin{cases}
    0 \leq T^{(1)}_{11} \leq E, \\
    1 - E \leq a \leq 1 - T^{(1)}_{11}, \\
    0 \leq d \leq T^{(1)}_{11}.
\end{cases}
\end{equation}

To maximize $\min\{ ad, (1 - T^{(1)}_{11} - a)(T^{(1)}_{11} - d)\}$ (the right hand side of \eqref{eq:max_t}) within the range determined by the above conditions, the monotonicity can still be utilized to pick
\begin{equation}
ad = (1 - T^{(1)}_{11} - a)(T^{(1)}_{11} - d)
\end{equation}
to achieve the maximization. In other words, we have
\begin{equation}\label{eq:substitution_eq}
\min \{ ad, (1 - T^{(1)}_{11} - a)(T^{(1)}_{11} - d)\} = T^{(1)}_{11}a - \frac{T^{(1)}_{11}}{1-T^{(1)}_{11}} a^2.
\end{equation}
Fixing $T^{(1)}_{11}$, with the condition $1-E \leq a \leq 1-T^{(1)}_{11}$, the maximization is attained by
\begin{equation}
a_{\max} = \begin{cases}
    1 - E, &    \frac{1 - T^{(1)}_{11}}{2} \leq 1 - E, \\
    \frac{1 - T^{(1)}_{11}}{2}, &     \frac{1 - T^{(1)}_{11}}{2} > 1 - E.
\end{cases}
\end{equation}

When $\frac{1 - T^{(1)}_{11}}{2} \leq 1 - E$, i.e., $T^{(1)}_{11} \geq 2E - 1$, substituting the value of $a$ into \eqref{eq:substitution_eq} yields
\begin{equation}
T^{(1)}_{11}a_{\max} - \frac{T^{(1)}_{11}}{1-T^{(1)}_{11}} a_{\max}^2 = (1 - E) \frac{E T^{(1)}_{11} - (T^{(1)}_{11})^2}{1 - T^{(1)}_{11}}.
\end{equation}
By its derivative, we can obtain the maximization is attained by $T^{(1)}_{11} = 1 - \sqrt{1-E}$. Since we are considering the case of $E \leq 3/4$, it can be checked that $T^{(1)}_{11} = 1 - \sqrt{1-E} \geq 2E-1$ satisfies the requirement of this case. Thus, we have 
\begin{equation}\label{eq:min_bound1}
\min \{ ad, (1 - T^{(1)}_{11} - a)(T^{(1)}_{11} - d)\} \leq (1-E) (1-\sqrt{1-E})^2
\end{equation}
for $\frac{1 - T^{(1)}_{11}}{2} \leq 1 - E$.

When $\frac{1 - T^{(1)}_{11}}{2} > 1 - E$, i.e., $T^{(1)}_{11} < 2E - 1$, correspondingly, we have
\begin{equation}
T^{(1)}_{11}a_{\max} - \frac{T^{(1)}_{11}}{1-T^{(1)}_{11}} a_{\max}^2 = \frac{T^{(1)}_{11} (1-T^{(1)}_{11})}{4}.
\end{equation}
As $E \leq 3/4$, we can only choose $T^{(1)}_{11} < 2E - 1 \leq 1/2$. Therefore,
\begin{equation}\label{eq:min_bound2}
\min \{ ad, (1 - T^{(1)}_{11} - a)(T^{(1)}_{11} - d)\} \leq \frac{(1-E)(2E-1)}{4}
\end{equation}
for $\frac{1 - T^{(1)}_{11}}{2} > 1 - E$.

To summarize the results of \eqref{eq:min_bound1} and \eqref{eq:min_bound2}, we notice that 
\begin{equation}
    \frac{(1-E)(2E-1)}{4} \leq (1-E) (1-\sqrt{1-E})^2
\end{equation}
always holds for $E \leq 1$. This means that when the maximization also takes over $T^{(1)}$, we have
\begin{equation}
|t|^2 \leq \min \{ ad, (1 - T^{(1)}_{11} - a)(T^{(1)}_{11} - d)\} \leq (1-E) (1-\sqrt{1-E})^2
\end{equation}
in general. Hence, for $E \leq 3/4$, with $\bar{c}_{\text{phase}}(E,2) = \min (2 - 4\Re[t])$, we can obtain
\begin{equation}
    \bar{c}_{\text{phase}}(E,2) \geq \min (2 - 4|t|) \geq 2 - 4\sqrt{1-E} (1-\sqrt{1-E})=2 - 4\kappa (1-\kappa)
\end{equation}
with $\kappa := 1 - \sqrt{1-E}$.

To attain this lower bound, the optimal solution can be constructed as
\begin{equation}
T_0 = \begin{pmatrix}
    (1-\kappa)^2 &&& \kappa(1-\kappa) \\
    & 0 && \\
    && \kappa & \\
    \kappa(1-\kappa) &&& \kappa^2
\end{pmatrix}, \, 
T_1 = \begin{pmatrix}
    \kappa(1-\kappa) &&& -\kappa(1-\kappa) \\
    & 1-\kappa && \\
    && 0 & \\
    -\kappa(1-\kappa) &&& \kappa(1-\kappa)
\end{pmatrix}.
\end{equation}
For this construction, we have $t := T_0^{1010} = \kappa(1-\kappa)$. Thus, with $\bar{c}_{\text{phase}}(E,2) = 2 - 4 \max \Re[t] = 2 - 4\kappa (1-\kappa)$, the lower bound is attained. Additionally, we can obtain $T^{(1)} = \begin{pmatrix}
    1-\kappa & \\
    & \kappa
\end{pmatrix}$ by \eqref{eq:sum_Tx}. It follows that $\Tr[HT^{(1)}] = \kappa \leq E$ and
\begin{equation}
\Tr_{1,3}[(H \otimes I \otimes I + I \otimes I \otimes H) T] = (1-\kappa)H^T+ (2\kappa-\kappa^2)I = \sqrt{1-E}H^T + EI \preceq H^T + EI
\end{equation}
by \eqref{eq:T}. Therefore, this solution satisfies the energy constraints \eqref{eq:energy_rho} and \eqref{eq:energy_T}.

\end{proof}

\section{Phase Estimation Precision Scaling with Energy Constraints for Large Dimensions} \label{sec:proof_high_dimension}

The result of this section can be formally stated as follows:

\begin{theorem}[Precision scaling with respect to energy for infinite dimension]\label{thm:high_dimension}
    For any $E \geq 0$, we have
    \[
        \bar{c}_{\text{phase}}(E,\infty) \geq \frac{E+1}{\left(\frac{3E+3}{2a}\right)^3 + \frac{3E+3}{2}} \sim \frac{8a^3}{27} \frac{1}{E^2}
    \]
    with a constant $a = -a_1/2^{1/3} \approx 1.85576$, where $a_1$ is the first negative root of the Airy function $\operatorname{Ai}(x)$. Moreover, the scaling of $1/E^2$ is attainable.
\end{theorem}

To prove the theorem, we need a lemma for the finite dimension case:
\begin{lemma}[Lower bound on the average cost for finite dimension]\label{lem:cost_lower_bound}
    For any $E \geq 0$ and a generic dimension $d \geq 2$, we have 
    $\bar{c}_{\text{phase}}(E,d) \geq 2 - \min_{\gamma \geq 0}  [E\gamma - \lambda_{\min}( H_d\gamma + 2A_d)]$,
    where $H_d = \sum_{n=0}^{d-1} n \ket{n}\bra{n}$, $A_d = \sum_{n=0}^{d-2} (\ket{n}\bra{n+1} + \ket{n+1}\bra{n})/2$, and  $\lambda_{\min}(\cdot)$ denotes the smallest eigenvalue.  
\end{lemma}

Its proof is in Section.~\ref{sec:proof_lower_bound}, which is derived by considering the dual problem of the cost optimization. With this lemma, we give the proof of Thm.~\ref{thm:high_dimension}, which is divided into two parts. The first part (Section.~\ref{sec:proof_lower_bound_inf_dim}) is to prove the lower bound for the average cost in the statement. The second part (Section.~\ref{sec:proof_attainability}) is to show the attainability of the scaling $1/E^2$.

\subsection{Proof of Lemma~\ref{lem:cost_lower_bound}} \label{sec:proof_lower_bound}
Recall that the problem is to minimize the average cost
\begin{equation}
\bar{c}(T, \hat{\theta}) = 2 - 2 \sum_{x=0}^{d-1} \Re[\ee^{\ii\hat{\theta}(x)}\Tr[T_x \JJ]]
\end{equation}
for $T \in \Comb^{\leq E}_{\text{gl}}$. And, $\bar{c}_{\text{phase}}(E,d)$ represents the optimal value of this minimization. For convenience, it is equivalent to consider maximize the second term $\Re[\ee^{\ii\hat{\theta}(x)}\Tr[T_x \JJ]]$ in the average cost. 

Fixing the estimator $\hat{\theta}$, let  
\begin{equation} \label{eq:program_norm}
	\begin{aligned}
		v_{\text{primal}}(E,\hat{\theta}) = \max_{T} \quad & \sum_x \Re[\ee^{\ii\hat{\theta}(x)} \Tr[T_x \JJ]] \\
		\mathrm{s.t.} \quad	& T = \sum_x T_x \otimes \ket{x}\bra{x}, \\
        & T \in \Comb^{\leq E}_{\text{gl}}.
	\end{aligned}
\end{equation}
The optimal average cost can be written as
\begin{equation} \label{eq:cost_to_primal}
    \bar{c}_{\text{phase}}(E,d) = 2 - 2 \max_{\hat{\theta}} v_{\text{primal}}(E,\hat{\theta}).
\end{equation}

Let 
\begin{equation}\label{eq:tilde_J}
    \hat{\JJ}(\hat{\theta}) := \sum_x \frac{\ee^{\ii\hat{\theta}(x)} \JJ + \ee^{-\ii\hat{\theta}(x)} \JJ^\dagger}{2} \otimes \ket{x}\bra{x}.
\end{equation}
By the characterization of quantum combs
\begin{equation}
    \begin{cases}
        \Tr_{3}[T] = T^{(1)} \otimes I_{2}, \\
        \Tr[T^{(1)}] = 1,
    \end{cases}
\end{equation}
and Thm.~\ref{thm:energy_constrained_comb}, 
as $\HH_0 = \emptyset$ and $H_0=0$, the constraints in \eqref{eq:program_norm} can be explicitly formulated as
\begin{equation}
	\begin{aligned}
		v_{\text{primal}}(E,\hat{\theta}) = \max_{T} \quad & \sum_x \Tr[T \tilde{\JJ}(\hat{\theta})] \\
		\mathrm{s.t.} \quad	& \Tr_{3}[T] = T^{(1)} \otimes I_2, \\
        & \Tr[T^{(1)}] = 1, \\
        & \Tr[H T^{(1)}] \leq E, \\
        & \Tr_{1,3}[(H_{1,3} \otimes I_{0,2}) T] \preceq (H_{0,2})^T + EI_{0,2}.
	\end{aligned}
\end{equation}
With variables
\begin{gather}
    Y^{(1)}_c \in \operatorname{Herm}(\HH_0) = \mathbb{R},\, Y_c^{(2)} \in \operatorname{Herm}\left( \HH_{0} \otimes \HH_{1} \otimes \HH_{2} \right) = \operatorname{Herm}\left(\HH_{1} \otimes \HH_{2} \right), \\
    Y^{(1)}_e \in \operatorname{Herm}(\HH_0) = \mathbb{R},\, Y_e^{(2)} \in \operatorname{Herm}\left( \HH_0 \otimes \HH_2 \right) = \operatorname{Herm}(\HH_2),    
\end{gather}
where $\operatorname{Herm}(\HH)$ means the set of Hermitian operators on $\HH$, its dual problem can be written as
\begin{equation} \label{eq:dual}
    \begin{aligned}
        v_{\text{dual}}(E,\hat{\theta}) = \min_{Y_c^{(n)}, Y_e^{(n)}} \quad & Y_c^{(1)} + EY_e^{(1)} + \Tr\left[ Y_e^{(2)} \left( H^T + E I \right) \right] \\
        \text{s.t.} \quad & Y_c^{(1)} I - \Tr_{2}[Y_c^{(2)}] + Y_e^{(1)} H \succeq 0 \\
        & Y_c^{(2)} \otimes I_{3} + Y_e^{(2)} \otimes H_{1,3} \succeq \hat{\JJ}(\hat{\theta}) \\
        & Y_e^{(n)} \succeq 0, \, n=1,2
    \end{aligned} 
\end{equation}
By applying the duality to \eqref{eq:cost_to_primal}, we have
\begin{equation} \label{eq:cost_to_dual}
    \bar{c}_{\text{phase}}(E,d) = 2 - 2 \max_{\hat{\theta}} v_{\text{primal}}(E,\hat{\theta}) \geq 2 - 2 \max_{\hat{\theta}} v_{\text{dual}}(E,\hat{\theta}).
\end{equation}
This enables us to derive a lower bound on the average cost by considering the dual problem.

The dual problem \eqref{eq:dual} can be converted to
\begin{equation} \label{eq:dual_channel}
        \begin{aligned}
        v_{\text{dual}}(E,\hat{\theta}) = \min_{\alpha,\beta,V,Z} \quad & \alpha E + \beta + \alpha \Tr[H Z] \\
        \text{s.t.} \quad & V \succeq \frac{\ee^{\ii\hat{\theta}(x)} \JJ + \ee^{-\ii\hat{\theta}(x)} \JJ^\dagger}{2} - \alpha E_x I \otimes Z,\, \forall x, \\
        & \alpha H + \beta I \succeq \Tr_{2}[V],  \\
        & \Tr[Z] \leq 1, \\
        & Z \succeq  0, \alpha \geq 0,
    \end{aligned}
    \end{equation}
with $\alpha,\beta \in \mathbb{R}$ and $Z \in \operatorname{Herm}(\HH_2)$, as shown in the following lemma. Here, $E_x$ are energy levels of the Hamiltonian $H = \sum_x E_x \ket{x}\bra{x}$. With our setting for the phase estimation problem, we have $E_x = x$.

\begin{lemma}
    The dual problem \eqref{eq:dual} is equivalent to \eqref{eq:dual_channel}.
\end{lemma}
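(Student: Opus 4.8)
The plan is to prove the stated equivalence by exhibiting an explicit, objective-preserving correspondence between the feasible points of \eqref{eq:dual} and those of \eqref{eq:dual_channel}, which immediately gives $v_{\text{dual}}(E,\hat\theta)=v_{\text{dual,channel}}(E,\hat\theta)$. First I would use the triviality of $\HH^\In_1$ to collapse $Y_c^{(1)}$ and $Y_e^{(1)}$ to a scalar $\beta$ and a nonnegative number, respectively. The next preparatory step is to block-diagonalize the second constraint of \eqref{eq:dual}, call it (C2), over the measurement register $\HH^\out_2$. Since $\hat\JJ(\hat\theta)$ is supported block-diagonally on $\{\ket{x}\bra{x}\}$ and $H^\out_2=\sum_x E_x\ket{x}\bra{x}$ is diagonal in the same basis, sandwiching (C2) with $\bra{x}\cdot\ket{x}$ on $\HH^\out_2$ splits it into the family, indexed by $x$,
\[
Y_c^{(2)}+H^\out_1\otimes Y_e^{(2)}+E_x\, I^\out_1\otimes Y_e^{(2)}\succeq \tfrac{1}{2}\big(\ee^{\ii\hat\theta(x)}\JJ+\ee^{-\ii\hat\theta(x)}\JJ^\dagger\big),
\]
which is already the shape of the first constraint of \eqref{eq:dual_channel} once we set $V:=Y_c^{(2)}+H^\out_1\otimes Y_e^{(2)}$.

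The crux is the consolidation of the two energy multipliers. I would set $\alpha:=Y_e^{(1)}+\Tr[Y_e^{(2)}]$ and $Z:=Y_e^{(2)}/\alpha$ (with $Z$ arbitrary when $\alpha=0$), together with $\beta:=Y_c^{(1)}$ and the $V$ above. Then $\alpha Z=Y_e^{(2)}$ reproduces the $E_x$-dependent term exactly, and $\Tr Z=\Tr[Y_e^{(2)}]/\alpha\le 1$. The point I want to highlight is that the inequality $\Tr Z\le 1$ (rather than equality) is precisely what lets a single scalar $\alpha$ absorb both the step-one multiplier $Y_e^{(1)}$ and the operator multiplier $Y_e^{(2)}$: the slack $1-\Tr Z$ encodes $Y_e^{(1)}/\alpha$. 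With this dictionary, tracing out $\HH^\In_2$ gives $\Tr_{\HH^\In_2}[V]=\Tr_{\HH^\In_2}[Y_c^{(2)}]+\Tr[Y_e^{(2)}]\,H^\out_1$, and substituting the first constraint of \eqref{eq:dual}, call it (C1), turns it into $\alpha H+\beta I\succeq\Tr_{\HH^\In_2}[V]$, the second constraint of \eqref{eq:dual_channel}. A direct computation using $H^T=H$ (the phase Hamiltonian is diagonal) and $E_g=0$ shows $\alpha\Tr[HZ]=\Tr[H Y_e^{(2)}]$, so the two objectives coincide.

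I would then run the substitution backwards: given feasible $(\alpha,\beta,V,Z)$, set $Y_e^{(2)}=\alpha Z$, $Y_e^{(1)}=\alpha(1-\Tr Z)\ge 0$, $Y_c^{(1)}=\beta$, and $Y_c^{(2)}=V-\alpha\,H^\out_1\otimes Z$, which recovers a feasible point of \eqref{eq:dual} with identical cost; this makes the correspondence a bijection (away from the degenerate $\alpha=0$) and yields both inequalities at once. The step I expect to demand the most care is exactly this consolidation: the equivalence of (C1) with the channel's second constraint rests on the bookkeeping identity $\alpha\Tr Z+Y_e^{(1)}=\alpha$, and I must verify that all nonnegativity requirements ($Y_e^{(1)}\ge 0$, $Y_e^{(2)}\succeq 0$, $\alpha\ge 0$, $Z\succeq 0$, $\Tr Z\le 1$) are preserved in both directions, and separately handle the edge case $\alpha=0$, where $Y_e^{(1)}=0$ and $Y_e^{(2)}=0$ force the first constraint to reduce to $V\succeq\tfrac12(\ee^{\ii\hat\theta(x)}\JJ+\ee^{-\ii\hat\theta(x)}\JJ^\dagger)$ with $Z$ free. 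Everything else—the block decomposition over $\HH^\out_2$ and the final objective matching—is routine once the variable dictionary is fixed.
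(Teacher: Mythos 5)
Your proposal is correct, and the variable dictionary you build is exactly the paper's: block-diagonalize the second dual constraint over the measurement register $\HH^\out_2$ (legitimate because every operator involved is block-diagonal in $\{\ket{x}\}$ there), absorb $H^\out_1\otimes Y_e^{(2)}$ into $V$, consolidate the two multipliers via $\alpha=Y_e^{(1)}+\Tr[Y_e^{(2)}]$, and normalize $Z=Y_e^{(2)}/\alpha$ so that the slack in $\Tr Z\le 1$ encodes $Y_e^{(1)}/\alpha$ (with $E_g=0$ and $H^T=H$ making the objectives match). Where you differ is in how the constraint $\alpha H+\beta I\succeq\Tr_{\HH^\In_2}[V]$ is justified: the paper eliminates $Y_c^{(1)}$ and $Y_c^{(2)}$ by an optimality argument, in particular arguing that one may take $\Tr_{\HH^\In_2}[V]-\bigl(Y_e^{(1)}H+\Tr[Y_e^{(2)}](H+E_gI)\bigr)$ to be a multiple of the identity without loss of generality (enlarging $V$ preserves feasibility and the objective only sees the largest eigenvalue of that remainder), whereas you derive the same inequality directly by substituting the first comb constraint into $\Tr_{\HH^\In_2}[V]=\Tr_{\HH^\In_2}[Y_c^{(2)}]+\Tr[Y_e^{(2)}]H^\out_1$, and then run the explicit inverse map $\bigl(Y_e^{(2)},Y_e^{(1)},Y_c^{(2)}\bigr)=\bigl(\alpha Z,\,\alpha(1-\Tr Z),\,V-\alpha H^\out_1\otimes Z\bigr)$ to get both inequalities between the optimal values at once. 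Your two-way feasibility correspondence is somewhat cleaner and more self-contained — it sidesteps the paper's slightly informal ``adjust $V_0$ while keeping its largest eigenvalue unchanged'' normalization — at the cost of having to verify all sign and trace conditions in both directions and treat the degenerate $\alpha=0$ case separately, all of which you correctly flag. Both arguments are sound and yield the same equivalence.
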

\begin{proof}
    For any fixed $Y_e^{(n)}$, we find a corresponding optimal solution $Y_c^{(n)}$. 
    
    Since $Y_c^{(1)} \in \mathbb{R}$, with the first constraint in \eqref{eq:dual}, the optimal choice of $Y_c^{(1)}$ is
\begin{equation}\label{eq:Yc1}
Y_c^{(1)} = \lambda_{\max} \left(  \Tr_{2}[Y_c^{(2)}] - Y_e^{(1)} H \right),
\end{equation}
where $\lambda_{\max}(\cdot)$ refers to the maximal eigenvalue. 

As for $Y_c^{(2)}$, by \eqref{eq:tilde_J} and
\begin{equation}
H_{1,3} = H \otimes I + I \otimes H = \sum_x (H + E_x I) \otimes \ket{x}\bra{x},
\end{equation}
the second constraint in \eqref{eq:dual}
\begin{equation}
Y_c^{(2)} \otimes I_{3} + Y_e^{(2)} \otimes H_{1,3} \succeq \hat{\JJ}(\hat{\theta})
\end{equation}
is equivalent to
\begin{equation}
    Y_c^{(2)} \succeq \frac{\ee^{\ii\hat{\theta}(x)} \JJ + \ee^{-\ii\hat{\theta}(x)} \JJ^\dagger}{2} - (H + E_x I) \otimes Y_e^{(2)}, \; \forall x. 
\end{equation}
Let
\begin{equation}
V = Y_c^{(2)} + H \otimes Y_e^{(2)}.
\end{equation}
The constraint for $V$ is
\begin{equation}\label{eq:V_constraint}
V \succeq \frac{\ee^{\ii\hat{\theta}(x)} \JJ + \ee^{-\ii\hat{\theta}(x)} \JJ^\dagger}{2} - E_x I \otimes Y_e^{(2)},\, \forall x.
\end{equation}
By \eqref{eq:Yc1}, the value of $Y_c^{(1)}$ is also affected by $V$, which is determined by the following relation:
\begin{equation}
\begin{aligned}
    Y_c^{(1)} &= \lambda_{\max} \left(  \Tr_{2}[Y_c^{(2)}] - Y_e^{(1)} H \right) \\
    &= \lambda_{\max} \left(  \Tr_{2}[V] - \left(Y_e^{(1)}  + \Tr[Y_e^{(2)}]  \right) H \right) .
\end{aligned} 
\end{equation}
Suppose $\Tr_{2}[V] = V_0 +  \left(Y_e^{(1)} + \Tr[Y_e^{(2)}] \right) H $. Then $Y_c^{(1)} = \lambda_{\max}(V_0)$. This means that for a fixed $Y_c^{(1)}$, to adjust $V$, we can only adjust $V_0$ while keeping its largest eigenvalue unchanged. Therefore, to satisfy \eqref{eq:V_constraint}, $V_0 = \lambda_{\max}(V_0)I$ will always be the optimal choice, as it is the largest one with the Loewner order for a given largest eigenvalue. Thus, we can assume $\Tr_{2}[V] = \alpha H + \beta I$ for $\alpha = Y_e^{(1)} + \Tr[Y_e^{(2)}] \geq 0$ and $\beta = \lambda_{\max}(V_0) = Y_c^{(1)} \in \R$.
Therefore,
\begin{equation}
\begin{aligned}
    v_{\text{dual}}(E) 
    &= \min \left( Y_c^{(1)} + EY_e^{(1)} + \Tr\left[ Y_e^{(2)} \left( H^T + E I \right) \right] \right)\\
    &= \min \left( \beta + E \alpha + \Tr\left[Y_e^{(2)} H^T \right] \right) .
\end{aligned}
\end{equation}
Notice that $\Tr[Y_e^{(2)}]=0$ if $\alpha=0$ as $\Tr[Y_e^{(1)}],\Tr[Y_e^{(2)}] \geq 0$. Let 
\begin{equation}
Z = \begin{cases}
    (Y_e^{(2)})^T/\alpha , & \alpha >0,\\
    0, & \alpha = 0.
\end{cases}
\end{equation}
We have $\Tr\left[Y_e^{(2)} H^T \right] = \alpha\Tr[HZ]$. We then obtain the form of \eqref{eq:dual_channel}.
\end{proof}

Based on \eqref{eq:cost_to_dual} and \eqref{eq:dual_channel}, we construct a solution to derive a lower bound for the optimal average cost. With $\bra{0}H\ket{0} = 0$, setting $Z = \ket{0}\bra{0}$, we have
\begin{equation}
    v_{\text{dual}}(E,\hat{\theta}) \leq \begin{aligned}[t]
        \min_{\alpha,\beta,V} \quad & \alpha E + \beta \\
        \text{s.t.} \quad & V \succeq \frac{\ee^{\ii\hat{\theta}(x)} \JJ + \ee^{-\ii\hat{\theta}(x)} \JJ^\dagger}{2} - \alpha E_x I \otimes \ket{0}\bra{0},\, \forall x, \\
        & \alpha H + \beta I \succeq \Tr_{2}[V], \\
        & \alpha \geq 0.
    \end{aligned}
\end{equation}
For any fixed $\alpha,V$, the optimal value of $\beta$ will always be $\beta_{\min} = \lambda_{\max}(\Tr_{2}[V] - \alpha H)$. Thus,
\begin{equation} \label{eq:dual_bound}
    v_{\text{dual}}(E,\hat{\theta}) \leq \begin{aligned}[t]
        \min_{\alpha,V} \quad & \alpha (E + E_g) + \lambda_{\max}(\Tr_{2}[V] - \alpha H) \\
        \text{s.t.} \quad & V \succeq \frac{\ee^{\ii\hat{\theta}(x)} \JJ + \ee^{-\ii\hat{\theta}(x)} \JJ^\dagger}{2} - \alpha E_x I \otimes \ket{0}\bra{0},\, \forall x,  \\
        & \alpha \geq 0.
    \end{aligned}
\end{equation}
Here, the feasible set of $(\alpha,V)$ is independent of $E$. However, the optimal solution could depend on the energy value $E$. 

Replace the notations in \eqref{eq:dual_bound} by the specific values for the phase channel. The Choi operator of the phase channel is
\begin{equation}
C_{d,\theta} = \sum_{j,k=0}^{d-1} \ee^{\ii(j-k)\theta} \ket{j}\bra{k} \otimes \ket{j}\bra{k}.
\end{equation}
The task operator is
\begin{equation}
\JJ_d := \frac{1}{2\pi} \int_{\theta \in [0,2\pi]} \ee^{-\ii\theta} C_{d,\theta}^T \dd\theta =  \sum_{j=1}^{d-1} \ket{j-1,j-1}\bra{j,j}.
\end{equation}
The Hamiltonian is $H_d = \sum_{n=0}^{d-1} n\ket{n}\bra{n}$. The upper bound in \eqref{eq:dual_bound} becomes
 \begin{equation} \label{eq:dual_bound_simp}
    v_{\text{dual}}(E,\hat{\theta}) \leq \begin{aligned}[t]
        \min_{\alpha,V_d} \quad & \alpha E + \lambda_{\max}(\Tr_{2}[V_d] - \alpha H_d) \\
        \text{s.t.} \quad & V_d \succeq \frac{\ee^{\ii\hat{\theta}(n)} \JJ_d + \ee^{-\ii\hat{\theta}(n)} \JJ^\dagger_d}{2} - \alpha n I \otimes \ket{0}\bra{0},\, \forall n=0,\dots,d-1,  \\
        & \alpha \geq 0.
    \end{aligned}
\end{equation}

Since $I \otimes \ket{0}\bra{0} \succeq \ket{0}\bra{0} \otimes \ket{0}\bra{0}$, a solution $V_d = \sum_{j,k} V_d^{jk} \ket{j,j}\bra{k,k}$ satisfying
\begin{equation}
V_d \succeq \frac{\ee^{\ii\hat{\theta}(n)} \JJ_d + \ee^{-\ii\hat{\theta}(n)} \JJ^\dagger_d}{2} - \alpha n \ket{0}\bra{0} \otimes \ket{0}\bra{0},\, \forall n=0,\dots,d-1
\end{equation}
will also be a feasible solution to \eqref{eq:dual_bound_simp}. Note that $\ket{j} \mapsto \ket{j,j}$ is an embedding into the tensor product space, this condition is equivalent to
\begin{equation} \label{eq:tilde_V_bound}
\begin{aligned}
    \tilde{V}_d \succeq& \frac{\ee^{\ii\hat{\theta}(n)}\tilde{\JJ}_d + \ee^{-\ii\hat{\theta}(n)} \tilde{\JJ}^\dagger_d}{2} - \alpha n \ket{0}\bra{0} ,\, \forall n=0,\dots,d-1,
\end{aligned}
\end{equation}
where $\tilde{A}$ means the image of the inverse of the embedding map, i.e., 
\begin{equation}
\tilde{V}_d = \sum_{j,k=0}^{d-1} V_d^{jk} \ket{j}\bra{k},\, \tilde{\JJ}_d = \sum_{j=1}^{d-1} \ket{j-1}\bra{j}.
\end{equation}
In the matrix form, the right hand side of \eqref{eq:tilde_V_bound} is
\begin{equation}
\begin{aligned}
\begin{pmatrix}
    -\alpha n & \ee^{\ii\hat{\theta}(n)}/2 &  &  & \\
    \ee^{-\ii\hat{\theta}(n)}/2 & 0 & \ee^{\ii\hat{\theta}(n)}/2 &  & \\
     & \ee^{-\ii\hat{\theta}(n)}/2 & 0  & \ddots & \\
     & & \ddots & \ddots & \ee^{\ii\hat{\theta}(n)}/2 \\
     & & & \ee^{-\ii\hat{\theta}(n)}/2 & 0
\end{pmatrix} 
\in \C^{d \times d} .
\end{aligned}
\end{equation}
For the left hand side of \eqref{eq:tilde_V_bound}, based on our choice $V_d = \sum_{j,k} V_d^{jk} \ket{j,j}\bra{k,k}$, we further make an assumption $\Tr_{2}[V_d] = \sum_j V_d^{jj} \ket{j}\bra{j} = \alpha H_d + \beta I_d$ in \eqref{eq:dual_bound_simp}. Since $H_d = \sum_n n \ket{n}\bra{n}$, we have
\begin{equation}
V_d^{jj} = \alpha  j + \beta.
\end{equation}
Assume $\tilde{V}_d$ is also tridiagonal:
\[
\tilde{V}_d = \sum_{j,k=0}^{d-1} V_d^{jk} \ket{j}\bra{k} = 
\begin{pmatrix}
    V_d^{00} & V_d^{01} &  &  & \\
    V_d^{10} & V_d^{11} & V_d^{12} &  & \\
     & V_d^{21} & V_d^{22}  & \ddots & \\
     & & \ddots & \ddots & V_d^{d-2,d-1} \\
     & & & V_d^{d-1,d-2} & V_d^{d-1,d-1}
\end{pmatrix} .
\]
Then \eqref{eq:tilde_V_bound} becomes
\begin{equation}\label{eq:matrix_form_constraint}
\begin{pmatrix}
    n\alpha+\beta & V_d^{01}-\ee^{\ii\hat{\theta}(n)}/2 &  &  & \\
    V_d^{10} -\ee^{-\ii\hat{\theta}(n)}/2 & \alpha+\beta & \ddots &  & \\
     & \ddots & 2\alpha+\beta  & \ddots & \\
     & & \ddots & \ddots & V_d^{d-2,d-1}-\ee^{\ii\hat{\theta}(n)}/2 \\
     & & & V_d^{d-1,d-2} -\ee^{-\ii\hat{\theta}(n)}/2 & (d-1)\alpha+\beta
\end{pmatrix}  \succeq 0.
\end{equation}
If we choose $V_d^{j,j-1} = V^{j-1,j}_d = 0$, the norm 
\begin{equation}
\left|V_d^{j,j-1} - \frac{\ee^{-\ii\hat{\theta}(n)}}{2}\right| = \left|V_d^{j-1,j} - \frac{\ee^{\ii\hat{\theta}(n)}}{2}\right| = 1/2.
\end{equation}
Notice that the positivity of the left hand side of \eqref{eq:matrix_form_constraint} is independent of the phase of the terms $V_d^{j,j-1} - \ee^{-\ii\hat{\theta}(n)} / 2$ and $V_d^{j-1,j} - \ee^{\ii\hat{\theta}(n)} / 2$. It suffices to consider $\tilde{V}_d$ such that
\begin{equation}
\begin{pmatrix}
    n\alpha+\beta & -1/2 &  &  & \\
    -1/2 & \alpha+\beta & -1/2 &  & \\
     & -1/2 & 2\alpha+\beta  & \ddots & \\
     & & \ddots & \ddots & -1/2 \\
     & & & -1/2 & (d-1)\alpha+\beta
\end{pmatrix}  \succeq 0
\end{equation}
for $n=0,\dots,d-1$. In this way, by constructing the feasible solutions to \eqref{eq:dual_bound_simp}, we obtain another upper bound for $v_{\text{dual}}(E,\hat{\theta})$:
\begin{equation}
v_{\text{dual}}(E,\hat{\theta}) \leq \begin{aligned}[t]
        \min_{\alpha,\beta \geq 0} \quad & \alpha E + \beta \\
        \text{s.t.} \quad & \begin{pmatrix}
    \beta & -1/2 &  &  & \\
    -1/2 & \alpha+\beta & -1/2 &  & \\
     & -1/2 & 2\alpha+\beta  & \ddots & \\
     & & \ddots & \ddots & -1/2 \\
     & & & -1/2 & (d-1)\alpha+\beta
\end{pmatrix}  \succeq 0.
    \end{aligned}
\end{equation}
Let 
\begin{equation}
A_d = \begin{pmatrix}
    0 & 1/2 &  &  & \\
    1/2 & 0 & 1/2 &  & \\
     & 1/2 & 0  & \ddots & \\
     & & \ddots & \ddots & 1/2 \\
     & & & 1/2 & 0
\end{pmatrix} \in \C^{d \times d}.
\end{equation}
We can rewrite the upper bound as
\begin{equation}
v_{\text{dual}}(E,\hat{\theta}) \leq \min_{\alpha \geq 0} \lambda_{\max}((EI_d - H_d)\alpha - A_d).
\end{equation}
By \eqref{eq:cost_to_dual}, the average cost is then lower bounded by
\begin{equation}
\bar{c}_{\text{phase}}(E,d) \geq 2 - 2 \max_{\hat{\theta}} v_{\text{dual}}(E,\hat{\theta}) \geq 2 - 2 \min_{\alpha \geq 0} \lambda_{\max}((EI_d - H_d)\alpha - A_d).
\end{equation}
For convenience, we rewrite the bound by
\begin{equation}
    2 \min_{\alpha \geq 0} \lambda_{\max}((EI_d - H_d)\alpha - A_d) = \min_{\gamma \geq 0}  [E\gamma - \lambda_{\min}( H_d\gamma + 2A_d)]
\end{equation}
with $\gamma := 2 \alpha$.

\subsection{Lower bound for the average cost of the infinite-dimensional phase channel} \label{sec:proof_lower_bound_inf_dim}
To analyze the behavior of the lower bound 
\begin{equation}\label{eq:finite_d_lower_bound}
    \bar{c}_{\text{phase}}(E,d) \geq 2 - \min_{\gamma \geq 0}  [E\gamma - \lambda_{\min}( H_d\gamma + 2A_d)]
\end{equation}
for the average cost given in Lem.~\ref{lem:cost_lower_bound} with a high dimension for
\begin{equation}
A_d = \begin{pmatrix}
    0 & 1/2 &  &  & \\
    1/2 & 0 & 1/2 &  & \\
     & 1/2 & 0  & \ddots & \\
     & & \ddots & \ddots & 1/2 \\
     & & & 1/2 & 0
\end{pmatrix} \in \C^{d \times d},
\end{equation}
we determine the value of $\lambda_{\min}( H_d\gamma + 2A_d)$ when the dimension $d$ tends to infinity.

First, we claim that $\lambda_{\min}( H_d\gamma + 2A_d)$ will converge to a value $\hat{\lambda}_{\min}^\gamma$ in the range of $[-2,0)$. 

\begin{lemma}\label{lem:eigenvalue_convergence}
    For any given parameter $\gamma \geq 0$, as the dimension $d\rightarrow \infty$, the smallest eigenvalue $\lambda_{\min}( H_d\gamma + 2A_d) \in [-2,0] $ converges to a value $\hat{\lambda}_{\min}^\gamma \in [-2,0)$ dependent on $\gamma$.
\end{lemma}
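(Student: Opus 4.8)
The plan is to write $M_d := \gamma H_d + 2A_d$, a real symmetric tridiagonal (Jacobi) matrix whose diagonal entries are $\gamma n$ for $n=0,\dots,d-1$ and whose off-diagonal entries are all equal to $1$. I will not attempt to compute the limiting eigenvalue explicitly, since that finer analysis (involving the Airy function) is exactly what the subsequent theorem requires; here it suffices to exhibit the sequence $\{\lambda_{\min}(M_d)\}_d$ as monotone and bounded, and then to locate its limit in $[-2,0)$. Everything follows from the variational characterization of $\lambda_{\min}$ together with interlacing.

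First I would establish the two-sided bound for every finite $d$. For the lower bound, since $\gamma\ge 0$ and $H_d\succeq 0$ we have $M_d\succeq 2A_d$; applying Gershgorin's theorem to $A_d$ (each absolute row sum is at most $\tfrac12+\tfrac12=1$) gives $\|A_d\|\le 1$, hence $2A_d\succeq -2I$ and therefore $\lambda_{\min}(M_d)\ge -2$. For the upper bound, testing with the ground-state vector $\ket{0}$ gives $\bra{0}M_d\ket{0}=\gamma\cdot 0=0$, so $\lambda_{\min}(M_d)\le 0$. Together these give $\lambda_{\min}(M_d)\in[-2,0]$ as stated.

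The structural observation I would then exploit is that $M_d$ is \emph{exactly} the top-left $d\times d$ principal submatrix of $M_{d+1}$: passing from $d$ to $d+1$ only appends the larger diagonal entry $\gamma d$ and a pair of unit off-diagonals. By the Cauchy interlacing theorem this forces $\lambda_{\min}(M_{d+1})\le\lambda_{\min}(M_d)$, so the sequence is non-increasing in $d$. Being also bounded below by $-2$, it converges by the monotone convergence theorem to some limit $\hat\lambda_{\min}^\gamma\ge -2$ depending only on $\gamma$.

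Finally, to pin the limit strictly below $0$, I would again invoke interlacing, this time against the fixed $2\times 2$ corner $M_2=\left(\begin{smallmatrix}0&1\\1&\gamma\end{smallmatrix}\right)$, whose smallest eigenvalue is $(\gamma-\sqrt{\gamma^2+4})/2<0$ for every $\gamma\ge 0$. Monotonicity (equivalently, interlacing of $M_2$ inside $M_d$) gives $\hat\lambda_{\min}^\gamma\le\lambda_{\min}(M_2)<0$, which places the limit in $[-2,0)$; note the endpoint $-2$ is genuinely attained at $\gamma=0$, consistent with the lower bound. The only point demanding care is this strict negativity: one must ensure the eigenvalue does not drift up to $0$ as $d\to\infty$, and the $d$-independent corner bound does precisely that. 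Indeed the whole lemma is easy once one recognizes that it is the interlacing/monotonicity structure — not any explicit diagonalization — that is needed, so I expect no substantive obstacle beyond correctly orienting the interlacing inequalities.
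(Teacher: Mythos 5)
Your proof is correct and follows essentially the same route as the paper's: both arguments rest on Cauchy interlacing to get monotonicity in $d$, boundedness below by $-2$ via $M_d\succeq 2A_d\succeq -2I$, and strict negativity of the limit from the fixed $d=2$ corner. The only cosmetic differences are that you invoke Gershgorin for $\|A_d\|\le 1$ where the paper cites the explicit eigenvalues $2\cos(j\pi/(d+1))$ of the tridiagonal matrix, and you compute $\lambda_{\min}(M_2)=(\gamma-\sqrt{\gamma^2+4})/2$ explicitly where the paper merely asserts its negativity.
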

\begin{proof}
    When $\gamma = 0$, the eigenvalues of $2A_d$ are
    \begin{equation}
    2\cos\left(\frac{j\pi}{d+1}\right), \; j =1,\dots,d.
    \end{equation}
    Therefore, 
    \begin{equation}
    \lambda_{\min}(2A_d) = 2\cos\left(\frac{d\pi}{d+1}\right) \longrightarrow \hat{\lambda}_{\min}^0 = -2.
    \end{equation}

    When $\gamma > 0$, with $H_d \succeq 0$, we have 
    \begin{equation}
        \lambda_{\min}(H_d \gamma + 2A_d) \geq \lambda_{\min}(2A_d) > -2.
    \end{equation}
    Meanwhile, by Cauchy interlacing theorem, the value of $\lambda_{\min}(H_d \gamma + 2A_d)$ is non-increasing with $d$. Thus, it will converge to $\hat{\lambda}_{\min}^\gamma \geq -2$. Furthermore, since $\lambda_{\min}(H_d \gamma + 2A_d)$ is negative for $d=2$, the minimal eigenvalue will always be negative, and then $\hat{\lambda}_{\min}^\gamma < 0$.
\end{proof}

With the guarantee of the convergence, it is possible to determine the limitation of the minimum cost. Taking a limit of the dimension $d\rightarrow \infty$ in \eqref{eq:finite_d_lower_bound}, it turns out to be an optimization problem:
\begin{equation}
    2 - \min_{\gamma \geq 0} (E\gamma - \hat{\lambda}_{\min}^\gamma).
\end{equation}
To simplify the problem, we can only consider a subset for $\gamma$ without loss of generality, which is
\begin{equation}
\mathbf{S} = \left\{\gamma > 0 \middle| \frac{\hat{\lambda}_{\min}^\gamma}{\gamma} \notin \Z \right\}.
\end{equation}
\begin{lemma}
    \label{lem:gamma_not_int}
    The closure $\bar{\mathbf{{S}}}$ of the set $\mathbf{S}$ is $[0,+\infty)$, so we can assume $\hat{\lambda}_{\min}^\gamma/\gamma \notin \Z$ without loss of generality.
\end{lemma}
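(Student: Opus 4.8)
The plan is to show that the ``bad set'' $B := \{\gamma > 0 : \hat\lambda_{\min}^\gamma/\gamma \in \Z\}$ contains no open interval. Since $\mathbf{S} = (0,\infty)\setminus B$, this makes $\mathbf{S}$ dense in $(0,\infty)$, and a separate look at the limit $\gamma\to 0^+$ puts $0$ into its closure, giving $\overline{\mathbf{S}} = [0,\infty)$. The whole argument rests on two structural properties of the limiting ground energy $\gamma\mapsto\hat\lambda_{\min}^\gamma$, namely continuity and monotonicity, both of which I would inherit from the finite-dimensional operators $H_d\gamma + 2A_d$.

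First I would establish continuity. For each fixed $d$, the minimal eigenvalue $f_d(\gamma):=\lambda_{\min}(H_d\gamma+2A_d) = \min_{\|v\|=1}\bigl(\gamma\langle v,H_d v\rangle + 2\langle v,A_d v\rangle\bigr)$ is a pointwise minimum of functions that are affine in $\gamma$, hence concave in $\gamma$. By Lem.~\ref{lem:eigenvalue_convergence} the $f_d(\gamma)$ converge pointwise to $\hat\lambda_{\min}^\gamma$, and a pointwise limit of concave functions is again concave; being finite-valued on the open ray (indeed $\hat\lambda_{\min}^\gamma\in[-2,0)$), it is therefore automatically continuous on $(0,\infty)$. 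Consequently the ratio $r(\gamma):=\hat\lambda_{\min}^\gamma/\gamma$ is continuous on $(0,\infty)$. For monotonicity I would use $H_d\succeq 0$: for $\gamma_1<\gamma_2$ we have $H_d\gamma_2+2A_d \succeq H_d\gamma_1+2A_d$, so $f_d$ is non-decreasing in $\gamma$, and hence so is the limit $\hat\lambda_{\min}^\gamma$.

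The decisive step is then a short contradiction. Suppose $B$ contained an interval $(a,b)\subseteq(0,\infty)$. On $(a,b)$ the map $r$ is continuous and takes values in the discrete set $\Z$; since $(a,b)$ is connected, $r$ must be constant there, say $r\equiv -k$. Because $\hat\lambda_{\min}^\gamma\in[-2,0)$ and $\gamma>0$, this constant is a strictly negative integer, so $k\ge 1$, and thus $\hat\lambda_{\min}^\gamma=-k\gamma$ would be strictly decreasing on $(a,b)$. This contradicts the non-decreasing property established above. Hence $B$ has empty interior and $\mathbf{S}$ is dense in $(0,\infty)$. To include the endpoint, I would note that as $\gamma\to 0^+$ the numerator $\hat\lambda_{\min}^\gamma\to\hat\lambda_{\min}^0=-2$ while $\gamma\to 0^+$, so $r(\gamma)\to-\infty$; the level sets $\{r=-k\}$ are isolated points accumulating only at $0$, so $\mathbf{S}$ meets every neighborhood of $0$ and $0\in\overline{\mathbf{S}}$. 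Combining the two pieces yields $\overline{\mathbf{S}}=[0,\infty)$.

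I expect the only genuinely delicate point to be the continuity claim: one must check that the \emph{pointwise} (not uniform) limit of the concave functions $f_d$ remains finite and concave on the entire open ray, so that the automatic continuity of finite concave functions applies—and finiteness is exactly what Lem.~\ref{lem:eigenvalue_convergence} supplies. Everything else is elementary: monotonicity follows from $H_d\succeq 0$, the ``continuous $\Z$-valued function on a connected set is constant'' principle is standard, and the final contradiction is simply that a strictly decreasing linear function cannot coincide with a non-decreasing function on an interval.
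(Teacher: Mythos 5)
Your proof is correct and follows essentially the same route as the paper's: both arguments rest on continuity of $\gamma\mapsto\hat{\lambda}_{\min}^\gamma/\gamma$ together with the monotonicity of $\hat{\lambda}_{\min}^\gamma$ inherited from $H_d\succeq 0$, and both obtain the contradiction from a strictly decreasing linear function $-k\gamma$ having to agree with a non-decreasing one (the paper phrases this as each level set $\mathbf{S}_k$ being at most a singleton, you as the bad set containing no open interval; the two are interchangeable here). One point in your favor: the paper merely asserts the continuity of $\hat{\lambda}_{\min}^\gamma/\gamma$, whereas your observation that $\hat{\lambda}_{\min}^\gamma$ is a pointwise limit of the concave functions $\gamma\mapsto\lambda_{\min}(H_d\gamma+2A_d)$, hence a finite concave function on the open ray and therefore continuous, actually supplies the justification the paper leaves implicit.
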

\begin{proof}
    It suffices to show $(0,+\infty) \subset \bar{\mathbf{{S}}}$. Let
    \begin{equation}
    \mathbf{S}_k = \left\{\gamma > 0 \middle| \frac{\hat{\lambda}_{\min}^\gamma}{\gamma} = k \right\}
    \end{equation}
for $k \in \Z$. By Lem.~\ref{lem:eigenvalue_convergence}, it follows that $\hat{\lambda}_{\min}^{\gamma} < 0$, so that $\mathbf{S}_k = \emptyset$ for $k \geq 0$. Meanwhile, for $k < 0$, if the set $\mathbf{S}_k$ contains more than one point, then 
\begin{equation}
\hat{\lambda}_{\min}^\gamma = k \gamma, \, \gamma \in \mathbf{S}_k
\end{equation}
implies $\hat{\lambda}^\gamma_{\min}$ strictly decreases with $\gamma$ in $\mathbf{S}_k$. However, since $H_d \gamma + 2A_d \succeq H_d \gamma' + 2A_d$ for $\gamma \geq \gamma'$, the value of $\hat{\lambda}_{\min}^\gamma$ should be non-decreasing. Thus, the set $\mathbf{S}_k$ is either an empty set or a single-point set. If it is empty, there is no need for further consideration. If it contains one point, note that $(\bigcup_j \mathbf{S}_j) \bigcup \mathbf{S} = (0,+\infty)$, this means that the point in $\mathbf{S}_k$ should be in the closure 
\begin{equation}
\overline{(\bigcup_{j \neq k} \mathbf{S}_j) \bigcup \mathbf{S}} = \overline{(\bigcup_{j \neq k} \mathbf{S}_j)} \bigcup \bar{\mathbf{S}}.
\end{equation}
If this point is not in $\bar{\mathbf{S}}$, it will be on the boundary of $\bigcup_{j \neq k} \mathbf{S}_j$, which is impossible as $\hat{\lambda}^\gamma_{\min} / \gamma$ should be continuous for $\gamma$. Therefore, any point in $\mathbf{S}_k$ for an arbitrary $k \in \Z$ is also in the closure $\bar{\mathbf{S}}$.
\end{proof}

To obtain the eigenvalue of $H_d \gamma + 2A_d$, we consider its characteristic polynomial:
\begin{equation}
P_d (x) = |xI_d - (H_d\gamma + 2A_d)|= \det\begin{pmatrix}
    x & -1 &  &  & \\
    -1 & x - \gamma & -1 &  & \\
     & -1 & x - 2\gamma  & \ddots & \\
     & & \ddots & \ddots & -1 \\
     & & & -1 & x - (d-1)\gamma
\end{pmatrix}.
\end{equation}
We have
\begin{equation} \label{eq:recurrence_poly}
    P_d(x) = (x-(d-1)\gamma) P_{d-1}(x) - P_{d-2}(x),\, d \geq 2
\end{equation}
with $P_0(x) = 1, P_1(x) = x$. Based on Lem.~\ref{lem:gamma_not_int}, without loss of generality, we also assume $\gamma > 0$ and $x/\gamma \notin \Z$.

With our assumption, the recurrence relation in \eqref{eq:recurrence_poly} has a general solution
\begin{equation}
\begin{aligned}
    P_d(x) =& c_1(x) J_{x/\gamma - d}(2/\gamma) + c_2(x) Y_{x/\gamma -  d}(2/\gamma) \\
    =:& c_1(x) \hat{J}_d(x) + c_2(x) \hat{Y}_d(x),
\end{aligned}
\end{equation}
where
\begin{equation} \label{eq:bessel_J_def}
    J_{\nu}(z) = \sum_{k=0}^{\infty} \frac{(-1)^k}{k!\Gamma(k+\nu+1)}\left(\frac{z}{2}\right)^{2 k+\nu},
\end{equation} 
\begin{equation}\label{eq:bessel_Y_def}
    Y_\nu(z)=\frac{J_\nu(z) \cos (\nu \pi)-J_{-\nu}(z)}{\sin (\nu \pi)}
\end{equation}
are Bessel functions of the first kind and the second kind, respectively~\cite{watson1922treatise}. Both two kinds of Bessel functions satisfy the recurrence relation
\begin{equation}\label{eq:bessel_recurrence}
    \frac{2 \nu}{z} Z_\nu(z)=Z_{\nu-1}(z)+Z_{\nu+1}(z)
\end{equation}
with $Z_\nu(z)$ being either $J_\nu(z)$ or $Y_\nu(z)$. By the initial values, we can obtain
\begin{equation}
\begin{cases}
        c_1(x) = \frac{\hat{Y}_1(x) - x \hat{Y}_0(x)}{\hat{J}_0(x) \hat{Y}_1(x) - \hat{J}_1(x) \hat{Y}_0(x)}, \\
        c_2(x) = \frac{x\hat{J}_0(x) - \hat{J}_1(x)}{\hat{J}_0(x) \hat{Y}_1(x) - \hat{J}_1(x) \hat{Y}_0(x)}.
\end{cases}
\end{equation}
Thus, the minimal eigenvalue $\lambda_{\min}( H_d\gamma + 2A_d)$ can be characterized as the smallest real root of 
\begin{equation}
F_d(x) := (\hat{Y}_1(x) - x \hat{Y}_0(x)) \hat{J}_d(x) + (x \hat{J}_0(x) - \hat{J}_{1}(x)) \hat{Y}_{d}(x).
\end{equation}
This enables us to give a characterization of the limitation $\hat{\lambda}_{\min}^{\gamma}$ by regarding it as a root of some function.
\begin{lemma}
    \label{lem:root}
    For a fixed $\gamma > 0$, the limitation $\hat{\lambda}_{\min}^{\gamma} := \lim_{d\rightarrow \infty} \lambda_{\min}( H_d\gamma + 2A_d)$ is a root of $F^\gamma(x) = J_{-x/\gamma-1}(2/\gamma)$ in $[-2,0)$.
\end{lemma}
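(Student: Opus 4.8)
The plan is to recognize $\hat\lambda^\gamma_{\min}$ as the bottom of the spectrum of a semi-infinite Jacobi operator and to characterize the corresponding eigenvector through Bessel functions. Write $M_d := \gamma H_d + 2A_d$, the tridiagonal matrix with diagonal entries $n\gamma$ ($n=0,\dots,d-1$) and unit off-diagonal entries, and let $M_\infty$ be the operator on $\ell^2(\N)$ with diagonal $(n\gamma)_{n\ge 0}$ and unit off-diagonals. Since $M_\infty = \gamma\,\mathrm{diag}(n) + 2A_\infty$ is a bounded self-adjoint perturbation of a diagonal operator whose entries diverge, it is self-adjoint with compact resolvent, so its spectrum is discrete and its infimum $\lambda_0$ is attained by a genuine $\ell^2$ eigenvector. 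First I would show $\hat\lambda^\gamma_{\min}=\lambda_0$: by the variational principle $\lambda_{\min}(M_d)=\min\{\langle v,M_\infty v\rangle:\|v\|=1,\ \mathrm{supp}\,v\subseteq\{0,\dots,d-1\}\}$, which is non-increasing in $d$ (the Cauchy interlacing already invoked in Lem.~\ref{lem:eigenvalue_convergence}) and converges to $\inf_v \langle v,M_\infty v\rangle/\|v\|^2=\lambda_0$ because the finitely supported vectors form a form core for $M_\infty$.

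Next I would pin down the eigenvector. An eigenvector $w=(w_n)_{n\ge 0}$ of $M_\infty$ at $\lambda_0$ satisfies the three-term recurrence $w_{n+1}+(n\gamma-\lambda_0)w_n+w_{n-1}=0$ for $n\ge 0$ together with the boundary condition $w_{-1}=0$, which is exactly row $0$ of the eigenvalue equation. Using the Bessel recurrence $\tfrac{2\nu}{z}Z_\nu(z)=Z_{\nu-1}(z)+Z_{\nu+1}(z)$ with $z=2/\gamma$, a direct substitution shows that both $(-1)^n J_{n-\lambda_0/\gamma}(2/\gamma)$ and $(-1)^n Y_{n-\lambda_0/\gamma}(2/\gamma)$ solve this recurrence. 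From the large-order expansions $J_\nu(z)\sim (z/2)^\nu/\Gamma(\nu+1)$ and $Y_\nu(z)\sim -\Gamma(\nu)(2/z)^\nu/\pi$ as $\nu\to+\infty$, the first solution decays super-exponentially, hence lies in $\ell^2$, while the second grows; being linearly independent they span the solution space, so the $\ell^2$ solutions are precisely the multiples of $w^{\mathrm{dec}}_n:=(-1)^n J_{n-\lambda_0/\gamma}(2/\gamma)$.

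The conclusion then follows by matching boundary conditions: since $w\in\ell^2$ we have $w\propto w^{\mathrm{dec}}$, and comparing the eigenvalue equation at $n=0$ for $w$ (which is $w_1=\lambda_0 w_0$, i.e. $w_{-1}=0$) with the recurrence satisfied by $w^{\mathrm{dec}}$ forces $w^{\mathrm{dec}}_{-1}=(-1)^{-1}J_{-\lambda_0/\gamma-1}(2/\gamma)=0$, that is $J_{-\lambda_0/\gamma-1}(2/\gamma)=0$. Hence $\hat\lambda^\gamma_{\min}=\lambda_0$ is a root of $F^\gamma(x)=J_{-x/\gamma-1}(2/\gamma)$, and by Lem.~\ref{lem:eigenvalue_convergence} it lies in $[-2,0)$. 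As a consistency check with the explicit $F_d$ already derived, the same recurrence gives $\hat Y_1-x\hat Y_0=-Y_{x/\gamma+1}(2/\gamma)$ and $x\hat J_0-\hat J_1=J_{x/\gamma+1}(2/\gamma)$, so $F_d$ is the Bessel cross-product $J_{x/\gamma+1}(2/\gamma)Y_{x/\gamma-d}(2/\gamma)-J_{x/\gamma-d}(2/\gamma)Y_{x/\gamma+1}(2/\gamma)$, whose limiting zero locus is governed by exactly this decaying-solution condition.

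The main obstacle is analytic bookkeeping rather than algebra. One must carefully justify (i) that $M_\infty$ is self-adjoint with compact resolvent and that $\lambda_{\min}(M_d)\downarrow\lambda_0=\min\mathrm{spec}(M_\infty)$ with $\lambda_0$ attained, and (ii) the clean dichotomy between the decaying and growing Bessel solutions via uniform large-order asymptotics, so that ``$\ell^2$'' is genuinely equivalent to ``proportional to the $J$-solution.'' A minor technical point is that the two-solution representation presumes $\lambda_0/\gamma\notin\Z$; this is harmless, since Lem.~\ref{lem:gamma_not_int} lets us restrict to such $\gamma$ and then recover the remaining values by continuity of $\hat\lambda^\gamma_{\min}$ and of $J_{-x/\gamma-1}(2/\gamma)$ in $\gamma$.
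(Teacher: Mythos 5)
Your argument is correct, but it reaches the root condition by a genuinely different route than the paper. You identify $\hat\lambda^\gamma_{\min}$ with the bottom eigenvalue of the semi-infinite Jacobi operator $M_\infty=\gamma\,\mathrm{diag}(n)+2A_\infty$ (self-adjoint, compact resolvent, truncated eigenvalues decreasing to $\min\operatorname{spec}(M_\infty)$ via the form-core argument), and then read off the quantization condition from the requirement that the $\ell^2$ eigenvector be the decaying Bessel solution $(-1)^nJ_{n-\lambda_0/\gamma}(2/\gamma)$ with the $n=0$ boundary row forcing $J_{-\lambda_0/\gamma-1}(2/\gamma)=0$. The paper instead stays entirely at finite $d$: it solves the three-term recurrence for the characteristic polynomial $P_d(x)$ in terms of $J_{x/\gamma-d}$ and $Y_{x/\gamma-d}$, multiplies by $\sin(\pi x/\gamma)$ to split $F_d$ into a piece that vanishes uniformly on a small interval $\mathcal I_\delta$ around the limit and a piece proportional to $\hat J_d(x)$ that blows up uniformly there, and concludes that the coefficient of the divergent piece, which simplifies to $-J_{-x/\gamma-1}(2/\gamma)$, must vanish at the limit point. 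Your route buys a conceptually cleaner explanation (the condition is exactly the half-line boundary condition for the infinite Jacobi matrix), avoids the paper's delicate uniform large-$d$ estimates on $\mathcal I_\delta$ and its reliance on Lem.~\ref{lem:gamma_not_int} to keep $x/\gamma$ away from integers, and in fact gives slightly more, namely that $\hat\lambda^\gamma_{\min}$ is the \emph{smallest} root of $J_{-x/\gamma-1}(2/\gamma)$; the price is importing standard but nontrivial unbounded-operator facts (self-adjointness, discreteness of the spectrum, attainment of the infimum, and the clean decaying/growing dichotomy of the two Bessel solutions), which you correctly flag as the steps needing careful justification. One small remark: the linear independence of the $J$- and $Y$-solutions of the recurrence holds for all indices by the cross-product (Casoratian) identity $J_{\nu+1}Y_\nu-J_\nu Y_{\nu+1}=2/(\pi z)$, so the restriction $\lambda_0/\gamma\notin\Z$ is not actually needed for your dichotomy, though invoking Lem.~\ref{lem:gamma_not_int} plus continuity as you do is also fine.
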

\begin{proof}
    First, by Lem.~\ref{lem:eigenvalue_convergence}, the smallest eigenvalues and the limitation $\hat{\lambda}_{\min}^{\gamma}$ are both in the interval $[-2,0]$. Thus, we only need to consider the value of the function $F_d(x)$ for $x \in [-2,0]$.
    
    By Lem.~\ref{lem:gamma_not_int}, we assume $\hat{\lambda}^\gamma_{\min}/\gamma \notin \Z$ without loss of generality. Suppose the distance between $\hat{\lambda}^\gamma_{\min}/\gamma$ and the nearest integer is a positive constant $\delta > 0$. Since we only concern the asymptotic behavior, it is unnecessary to consider the whole interval $[-2,0]$. It suffices to only consider
    \begin{equation}
    x \in \mathcal{I}_\delta := \left(\hat{\lambda}^\gamma_{\min} - \frac{\delta}{2}\gamma, \hat{\lambda}^\gamma_{\min} + \frac{\delta}{2}\gamma\right) \cap [-2,0].
    \end{equation}
For any $x \in \mathcal{I}_\delta$, the distance between $x/\gamma$ and the nearest integer is at least $\delta / 2 > 0$.
    
    For $x \in \mathcal{I}_\delta$, by the definition of Bessel function in \eqref{eq:bessel_Y_def}, we have
\begin{equation}
\begin{aligned}
    \hat{Y}_d(x) := Y_{x/\gamma - d}(2/\gamma) =& \frac{J_{x/\gamma - d}(2/\gamma) \cos((x/\gamma - d)\pi) - J_{-(x/\gamma - d)}(2/\gamma)}{\sin((x/\gamma - d)\pi)} \\
    =& \frac{(-1)^d J_{x/\gamma - d}(2/\gamma) \cos(\pi x/\gamma) - J_{-(x/\gamma - d)}(2/\gamma)}{(-1)^d \sin(\pi x/\gamma)}.
\end{aligned}
\end{equation}
For convenience, denote 
\begin{equation}
\lambda_{\min}^{\gamma,d} := \lambda_{\min}( H_d\gamma + 2A_d).
\end{equation}
As discussed before, we have $\lambda^{\gamma,d}_{\min}$ is a root of $F_d(x)$. By the above equation for $\hat{Y}_d(x)$, it will also be a root of
\begin{equation}\label{eq:root_function}
    \sin(\pi x/\gamma) F_d(x) = G_d(x) + (-1)^{d+1} (x \hat{J}_0(x) - \hat{J}_{1}(x))J_{-(x/\gamma - d)}(2/\gamma),
\end{equation}
where
\begin{equation}
G_d(x) := \left[ (\hat{Y}_1(x) - x \hat{Y}_0(x)) \sin(\pi x/\gamma) + (x \hat{J}_0(x) - \hat{J}_{1}(x)) \cos(\pi x/\gamma) \right] \hat{J}_{d}(x).
\end{equation}
We then consider the asymptotic behavior of the above functions when $d \rightarrow \infty$.

For the second term on the right hand side of \eqref{eq:root_function}, Note that the factor $x\hat{J}_0(x)-\hat{J}_1(x)$ is independent of $d$, so only the scaling of $(-1)^{d+1} J_{-(x/\gamma-d)}(2/\gamma)$ matters. By \eqref{eq:bessel_J_def}, we have
\begin{equation}
\begin{aligned}
    J_{-(x/\gamma-d)} (2/\gamma)
    =& \sum_{k=0}^{\infty} \frac{(-1)^k}{k! \Gamma(k + d - x/\gamma + 1)} \gamma^{x /\gamma - d - 2k} \\
    =& \frac{\gamma^{x/\gamma - d}}{\Gamma(d-x/\gamma+1)} \left[1 + \sum_{k=1}^{\infty} \frac{(-1)^k}{k! \prod_{l=1}^k(d - x/\gamma  + l)} \gamma^{-2k} \right].
\end{aligned}
\end{equation}
Thus, for $x \in \mathcal{I}_\delta$, when $d$ is sufficiently large, with $\gamma_- := \min\{\gamma,1\}$, it follows that
\begin{equation}
    \begin{aligned}
        |J_{-(x/\gamma-d)} (2/\gamma)| 
        \leq& \frac{\gamma_{-}^{-2/\gamma -d}}{\Gamma(d+1)} \left[1 + \sum_{k=1}^{\infty} \frac{1}{k!} \gamma^{-2k} \right] \\
        =& \frac{\gamma_{-}^{-2/\gamma -d}}{\Gamma(d+1)} \ee^{\gamma^{-2}}.
    \end{aligned}
\end{equation}
Then by Stirling's formula, we can obtain that the second term in \eqref{eq:root_function} will uniformly converge to zero for $x\in \mathcal{I}_\delta$ as $d \rightarrow \infty$.

In this way, we can only pay attention to the first term $G_d(x)$ in \eqref{eq:root_function}. Similarly, the only factor in $G_d(x)$ dependent on $d$ is 
\begin{equation}
\begin{aligned}
    \hat{J}_d(x) := J_{x/\gamma - d}(2/\gamma) 
    =& \sum_{k=0}^{\infty} \frac{(-1)^k}{k! \Gamma(k + x/\gamma - d + 1)} \gamma^{d - x /\gamma - 2k} \\
    =& \frac{\gamma^{d-x/\gamma}}{\Gamma(x/\gamma - d + 1)} \left[1 + \sum_{k=1}^{\infty} \frac{(-1)^k}{k! \prod_{l=1}^k(x/\gamma - d + l)} \gamma^{-2k} \right]. 
\end{aligned}
\end{equation}
Recall that the distance between $x/\gamma$ and the integers is at least $\delta_1 := \delta/2 > 0$ based on our assumption. Hence, for $x\in\mathcal{I}_\delta$, with $\gamma_+ := \max\{\gamma,1\}$, we have
\begin{equation}
\begin{aligned}
    |\hat{J}_d(x)| 
    \geq& \frac{\gamma_{+}^{d+2/\gamma}}{|\Gamma(x/\gamma - d + 1)|} \left[1 - \sum_{k=1}^{\infty} \frac{1}{k! \prod_{l=1}^k|x/\gamma - d + l|} \gamma^{-2k} \right] \\
    =& \frac{\gamma_{+}^{d+2/\gamma}}{|\Gamma(x/\gamma - d + 1)|} \left[1 - \frac{1}{|x/\gamma - d + 1|} \sum_{k=1}^{\infty} \frac{1}{k! \prod_{l=2}^k|x/\gamma - d + l|} \gamma^{-2k} \right] \\
    \geq& \frac{\gamma_{+}^{d+2/\gamma}}{|\Gamma(x/\gamma - d + 1)|} \left[1 - \frac{1}{|x/\gamma - d + 1|} \sum_{k=1}^{\infty} \frac{1}{k! \delta_1^{k-1}} \gamma^{-2k} \right] \\
    =& \frac{\gamma_{+}^{d+2/\gamma}}{|\Gamma(x/\gamma - d + 1)|} \left[1 - \frac{\delta_1}{|x/\gamma - d + 1|} (\ee^{\delta_1^{-1}\gamma^{-2}}-1) \right] .
\end{aligned}
\end{equation}
When $d$ is sufficiently large, it holds that $|x/\gamma-d+1| = d - x/\gamma - 1 \geq d - 1$ as $x \leq 0$, then
\begin{equation}
\begin{aligned}
    |\hat{J}_d(x)| 
    \geq \frac{\gamma_{+}^{d+2/\gamma}}{|\Gamma(x/\gamma - d + 1)|} \left[1 - \frac{\delta_1}{d- 1} (\ee^{\delta_1^{-1}\gamma^{-2}}-1) \right] .
\end{aligned}
\end{equation}
By the property $\Gamma(z)\Gamma(1-z) = \frac{\pi}{\sin(\pi z)}$ of Gamma function for $z \notin \Z$, noticing that it can be assumed that $x/\gamma \notin \mathbb{Z}$, we can obtain
\begin{equation}
\begin{aligned}
    |\hat{J}_d(x)| 
    \geq& \frac{|\sin(\pi x/\gamma)|}{\pi} \gamma_{+}^{d+2/\gamma}\Gamma(d - x/\gamma) \left[1 - \frac{\delta_1}{d- 1} (\ee^{\delta_1^{-1}\gamma^{-2}}-1) \right] \\
    \geq& \frac{|\sin(\pi \delta_1)|}{\pi} \gamma_{+}^{d+2/\gamma}\Gamma(d) \left[1 - \frac{\delta_1}{d- 1} (\ee^{\delta_1^{-1}\gamma^{-2}}-1) \right].
\end{aligned}
\end{equation}
By Stirling's formula, it follows that $|\hat{J}_d(x)|$ will uniformly tends to be infinite for $x\in \mathcal{I}_\delta$ as $d \rightarrow \infty$. 

Remind that the left hand side of \eqref{eq:root_function} is zero for $x = \lambda^{\gamma,d}_{\min}$.  By the uniform convergence of the second term in the right hand side, we have
\begin{equation}
G_d(\lambda_{\min}^{\gamma,d}) \rightarrow 0.
\end{equation}
However, since $|\hat{J}_d(x)|$ uniformly tends to infinity, to prevent $G_d(\lambda_{\min}^{\gamma,d})$ from being divergent, the only way is to require the constant factor in front of $\hat{J}_d(x)$ vanishes, that is
\begin{equation}
\lim_{d\rightarrow \infty} \left.\left[ (\hat{Y}_1(x) - x \hat{Y}_0(x)) \sin(\pi x/\gamma) + (x \hat{J}_0(x) - \hat{J}_{1}(x)) \cos(\pi x/\gamma) \right]\right|_{x=\lambda^{\gamma,d}_{\min}} = 0.
\end{equation}
Substituting $\hat{Y}_d(x) := Y_{x/\gamma - d}(2/\gamma)$ and $\hat{J}_d(x) := J_{x/\gamma - d}(2/\gamma)$, then by the definition of $Y_\nu(z)$ in \eqref{eq:bessel_Y_def}, this factor can be simplified as
\begin{equation}
\begin{aligned}
    & (\hat{Y}_1(x) - x \hat{Y}_0(x)) \sin(\pi x/\gamma) + (x \hat{J}_0(x) - \hat{J}_{1}(x)) \cos(\pi x/\gamma) \\
    =& (Y_{x/\gamma-1}(2/\gamma) - x Y_{x/\gamma}(2/\gamma)) \sin(\pi x/\gamma) + (x J_{x/\gamma}(2/\gamma) - J_{x/\gamma-1}(2/\gamma)) \cos(\pi x/\gamma) \\
    =& x J_{-x/\gamma}(2/\gamma) + J_{-x/\gamma+1}(2/\gamma).
\end{aligned}
\end{equation}
Then by \eqref{eq:bessel_recurrence}, we obtain
\begin{equation}
\begin{aligned}
     (\hat{Y}_1(x) - x \hat{Y}_0(x)) \sin(\pi x/\gamma) + (x \hat{J}_0(x) - \hat{J}_{1}(x)) \cos(\pi x/\gamma) 
    = -J_{-x/\gamma-1}(2/\gamma).
\end{aligned}
\end{equation}
By its continuity, the limitation $\hat{\lambda}_{\min}^{\gamma} = \lim _{d\rightarrow \infty} \lambda^{\gamma,d}_{\min}$ must be a root of it.
\end{proof}
If the smallest root of $F^\gamma(x) = J_{-x/\gamma -1}(2/\gamma)$ in $[-2,0)$ is $x_0$, by Lem.~\ref{lem:root}, we have $\hat{\lambda}^\gamma_{\min} \geq x_0$. Then, we can obtain
\begin{equation}
\min_\gamma (E\gamma - \hat{\lambda}^\gamma_{\min}) \leq \min_{\gamma} \max_{x_0} (E\gamma - x_0)
\end{equation}
with maximization over $x_0$ satisfying $F^\gamma(x_0) = 0$.  Let $\nu_0 = -x_0 /\gamma-1 > -1$.  Denote the positive roots of Bessel functions of the first kind $J_{\nu}(z)$ as
\begin{equation}
0 < j_{\nu,1} < j_{\nu,2} < \cdots < j_{\nu,k} < \cdots.
\end{equation}
Since $\gamma > 0$ and $J_{\nu_0}(2/\gamma)=0$, it should hold that $\gamma = 2/j_{\nu_0,k}$ and $x_0 = -2(\nu_0+1) / j_{\nu_0,k}$ for some $k$. Thus,
\begin{equation}
\min_{\gamma} \max_{x_0} (E\gamma - x_0) = \min_\gamma \max_{\nu_0} (E+\nu_0+1)\gamma
\end{equation}
with maximization over $\nu_0 > -1$ satisfying $\gamma = 2/j_{\nu_0,k}$ for some $k$. To attain the maximum with respect to $\nu_0$, the optimal choice of $\nu_0$ is the largest one among all the feasible values. Since the zero $j_{\nu,k}$ of Bessel function strictly increasing with respect to $\nu$ and $k$ for $\nu > -1$~\cite{ifantisDifferentialInequalityPositive1992}, for the optimal $\nu_0$, it follows that $\gamma = 2 / j_{\nu_0,1}$ and such an optimal $\nu_0$ is unique with a given $\gamma$. Moreover, in this way, we can construct a bijection between each $\gamma$ and the corresponding optimal $\nu_0$. Therefore,
\begin{equation}
\min_\gamma \max_{\nu_0} (E+\nu_0+1)\gamma = \min_{\nu > -1}  \frac{2(E+\nu+1)}{j_{\nu,1}}.
\end{equation}
We have~\cite{quBestPossibleUpper1999}
\begin{equation}
j_{\nu,1} \geq \begin{cases}
    \nu-\frac{a_1}{2^{1 / 3}} \nu^{1 / 3}, & \nu \geq 0, \\
    0, & -1 < \nu < 0,
\end{cases}
\end{equation}
where $a_1 \approx -2.33811$ is the first negative root of the Airy function $\operatorname{Ai}(x)$. With this lower bound for the root of the Bessel function, we can obtain
\begin{equation}
    \min_{\nu > -1}  \frac{2(E+\nu+1)}{j_{\nu,1}} \leq \min_{\nu \geq 0}  \frac{2(E+\nu+1)}{\nu-\frac{a_1}{2^{1 / 3}} \nu^{1 / 3}}.
\end{equation}
For convenience, let $a = -a_1/2^{1/3} \approx 1.85576$. The minimum of the right hand side is attained when
\begin{equation}
2a\nu - 3(E+1)\nu^{2/3} - a(E+1) = 0.
\end{equation}
Although the above equation has an analytic solution, its form will be too complex. Instead, we can only pick a suboptimal $\nu$ to derive a bound with a simple form. Since the scaling of the ultimate precision with respect to the energy $E$ is more concerned, by considering the asymptotic behavior of the solution of the above equation as $E \rightarrow \infty$, we can obtain a suboptimal choice
\begin{equation}
\hat{\nu} = \left(\frac{3E+3}{2a}\right)^3
\end{equation}
that approximates the optimal solution when $E$ is large. Then, the average cost will be lower bounded by
\begin{equation}
\begin{aligned}
    2 - \min_{\nu \geq 0} \frac{2(E+\nu+1)}{\nu + a \nu^{1 / 3}} 
    \geq& 2 - \frac{2(E+\hat{\nu}+1)}{\hat{\nu} + a \hat{\nu}^{1 / 3}} \\
    =& 2 \frac{(a \hat{\nu}^{1/3} - E - 1)}{\hat{\nu} + a \hat{\nu}^{1 / 3}} \\
    =& \frac{E+1}{\left(\frac{3E+3}{2a}\right)^3 + \frac{3E+3}{2}} \\
    \sim& \frac{8a^3}{27} \frac{1}{E^2}.
\end{aligned}
\end{equation}

\subsection{Attainability of the precision scaling with respect to energy} \label{sec:proof_attainability}
Having derived a lower bound scaling as $1/E^2$, we show this can be saturated with a constructed strategy in this part. 

For a $d$-dimensional phase channel, we use a probe state
\begin{equation}
\ket{\psi_d} = \sqrt{\frac{2}{d}} \sum_{j=0}^{d-1} \sin\left(\frac{\pi j}{d}\right) \ket{j}.
\end{equation}
After the phase channel, we take a measurement on the Fourier basis. The corresponding quantum comb of this strategy is 
\begin{equation}
T = \ket{\psi_d}\bra{\psi_d} \otimes \left[(I \otimes F)\ket{\Omega}\bra{\Omega}(I \otimes F^\dagger)\right],
\end{equation}
where $F$ is quantum Fourier transform and $\ket{\Omega}\bra{\Omega} = \sum_{j=0}^{d-1} \ket{jj}\bra{jj}$ is the unnormalized maximally entangled state. The quantum instrument associated with measurement outcome $x$ is
\begin{equation}
\begin{aligned}
    T_x =& \ket{\psi_d}\bra{\psi_d} \otimes \left[(I \otimes \bra{x}F)\ket{\Omega}\bra{\Omega}(I \otimes F^\dagger\ket{x})\right] \\
    =& \ket{\psi_d}\bra{\psi_d} \otimes \frac{1}{d} \sum_{j,k=0}^{d-1} \ee^{-\frac{2\pi\ii x(j-k)}{d}} \ket{j}\bra{k}.
\end{aligned}
\end{equation}
With the Choi operator of the phase channel
\begin{equation}
C_{d,\theta} = \sum_{j,k=0}^{d-1} \ee^{\ii(j-k)\theta} \ket{j}\bra{k} \otimes \ket{j}\bra{k},
\end{equation}
we have 
\begin{equation}
\JJ_d := \frac{1}{2\pi} \int_\theta \ee^{-\ii\theta} C_{d,\theta}^T \dd\theta = \sum_{j=1}^{d-1} \ket{j-1,j-1}\bra{j,j}.
\end{equation}
The optimal average cost of this strategy is
\begin{equation}\label{eq:high_dim_optimal_cost}
    \begin{aligned}
        \bar{c}(T) 
        =& 2 - 2\sum_{x=0}^{d-1} |\Tr[T_x \JJ_d]| \\
        =& 2 - 2\sum_{x=0}^{d-1} \left| \sum_{j=1}^{d-1} \frac{2}{d^2} \sin\left(\frac{\pi(j-1)}{d}\right) \sin\left(\frac{\pi j}{d}\right) \ee^{-\frac{2\pi\ii x}{d}} \right| \\
        =& 2 - \frac{4}{d} \left| \sum_{j=1}^{d-1} \sin\left(\frac{\pi(j-1)}{d}\right) \sin\left(\frac{\pi j}{d}\right) \right| \\
        =& 2- 2 \cos(\pi / d) \sim \pi^2 / d^2.
    \end{aligned}
\end{equation}

Meanwhile, we determine the energy consumption of this strategy. The preparation of $\ket{\psi_d}$ needs an energy cost 
\begin{equation}\label{eq:high_dim_state_prepare_energy}
    \Tr[H_d\ket{\psi_d}\bra{\psi_d}] = \frac{2}{d} \sum_{j=0}^{d-1} j \sin^2\left(\frac{\pi j}{d}\right) = d/2.
\end{equation}
To perform the final measurement on the Fourier basis, by Thm.~\ref{thm:energy_constrained_comb}, 
it requires the energy given as the maximal eigenvalue of
\begin{equation}\label{eq:high_dim_measurement_energy}
\begin{aligned}
    & \Tr_2[(I \otimes H_d) (I \otimes F)\ket{\Omega}\bra{\Omega}(I \otimes F^\dagger)] - H_d^T \\
    =& \frac{1}{d}\sum_{j,k=0}^{d-1} \sum_{n=0}^{d-1} n\ee^{\frac{2\pi\ii(j-k)n}{d}} \ket{j}\bra{k} - H_d .
\end{aligned}
\end{equation}
Since $H_d = \sum_{n=0}^{d-1}n\ket{n}\bra{n} \succeq 0$, the maximal eigenvalue of \eqref{eq:high_dim_measurement_energy} will not surpass the maximal eigenvalue of
\begin{equation}
    \frac{1}{d}\sum_{j,k=0}^{d-1} \sum_{n=0}^{d-1} n\ee^{\frac{2\pi\ii(j-k)n}{d}} \ket{j}\bra{k},
\end{equation}
which can be determined as the maximal eigenvalue of
\begin{equation}
\begin{aligned}
    & F\left[ \frac{1}{d}\sum_{j,k=0}^{d-1} \sum_{n=0}^{d-1} n\ee^{\frac{2\pi\ii(j-k)n}{d}} \ket{j}\bra{k} \right] F^\dagger \\
    =& \frac{1}{d^2} \sum_{n,m,m'=0}^{d-1} \sum_{j,k=0}^{d-1} n \ee^{\frac{2\pi\ii(m+n)j}{d}} \ee^{-\frac{2\pi\ii(m'+n)k}{d}} \ket{m}\bra{m'} \\
    =& \sum_{n=1}^{d-1} n \ket{d-n}\bra{d-n}.
\end{aligned}
\end{equation}
Thus, the energy consumption of the final measurement is at most $d-1$. Combining this with the energy consumption of the state preparation given by \eqref{eq:high_dim_state_prepare_energy}, the total energy consumption is at most $3d/2-1 \sim 3d/2$.

Therefore, if we choose $d \sim 2E/3$, by \eqref{eq:high_dim_optimal_cost}, the constructed strategy will attain an average cost scaling as $1/E^2$.

\section{Different Battery Models for Causal Superposition Processes}\label{sec:separation_battery}
\subsection{Spacetime-sharing Battery}
In addition to the spacetime-individual battery model discussed in the main text, one may also consider a model where energy is transferable between spacetime branches.

When the energy flow between spacetime is allowed, the batteries from all branches together can be regarded as a common global (or local) battery with an initial energy $E$, which is then shared across the entire spacetime superposition. The \emph{spacetime-shared battery} model is to support the entire process (i.e., the shadow part in Fig.~\ref{fig:causal_superposition_circuit}) 
with a single battery. For the causal order $\tau$, the component $T^\tau$ related to $\tau$ in the causal superposition is implemented by a comb $\iota_\tau(T^\tau)$ in $\Comb\left[\left(\mathcal{H}_0, \mathcal{H}_1 \right), \dots, \left(\mathcal{H}_{2N-2}, \mathcal{H}_{2N-1}\right)\right]$. The energy constraint is thus
\begin{equation}\label{eq:spacetime_shared_battery}
E_{\text{sh,gl/loc}}(T) := E_{\text{gl/loc}}\left( \sum_{\tau} p_\tau \iota_\tau(T^\tau) \right) \leq E.
\end{equation}
Here $\iota_\tau$ denotes an isomorphic mapping from the comb set $\Comb[(\HH_0, \HH_{2\tau(1)-1}), \dots, (\HH_{2\tau(N-1)},\HH_{2N-1})]$ to the comb set $\Comb\left[\left(\mathcal{H}_0, \mathcal{H}_1 \right), \dots, \left(\mathcal{H}_{2N-2}, \mathcal{H}_{2N-1}\right)\right]$. This isomorphism is induced by the system permutation on the underlying spaces, determined by $\HH_{2\tau(n)-1} \to \HH_{2n-1}$ and $\HH_{2\tau(n)} \to \HH_{2n}$.

\subsection{Separation between Battery Models}
We examine several concrete examples to illustrate the distinctions between the battery models. We first provide examples to show the separation in the energy consumption between the models for a given process. We then demonstrate a precision separation in energy-constrained metrological tasks, which arises from the differences in energy efficiency.

For generality, we primarily consider causal superposition processes here. Although the example provided for the separation between the global battery and the local battery models is based on a definite causal order process, the discussion applies to both definite causal order processes and causal superposition processes, as definite causal orders can be viewed as special cases of causal superpositions that involve only a single causal order. The battery models for causal superposition processes are shown in Fig.~\ref{fig:causal_battery}. 

The separation comes from two factors: whether energy is transferable between steps (global vs. local) and between the branches of spacetime (spacetime-sharing vs. spacetime-individual). Specifically, the local battery model allocates energy across the steps before the process begins and forbids the transfer between the steps during the process. In contrast, the global battery model permits the transfer between the steps during the process, eliminating the need for advance allocation. Additionally, when the spacetime branches, the spacetime-individual battery branches as well, whereas the spacetime-sharing battery does not. 

\begin{figure}[h]
    \centering
    \includegraphics[width=0.65\linewidth]{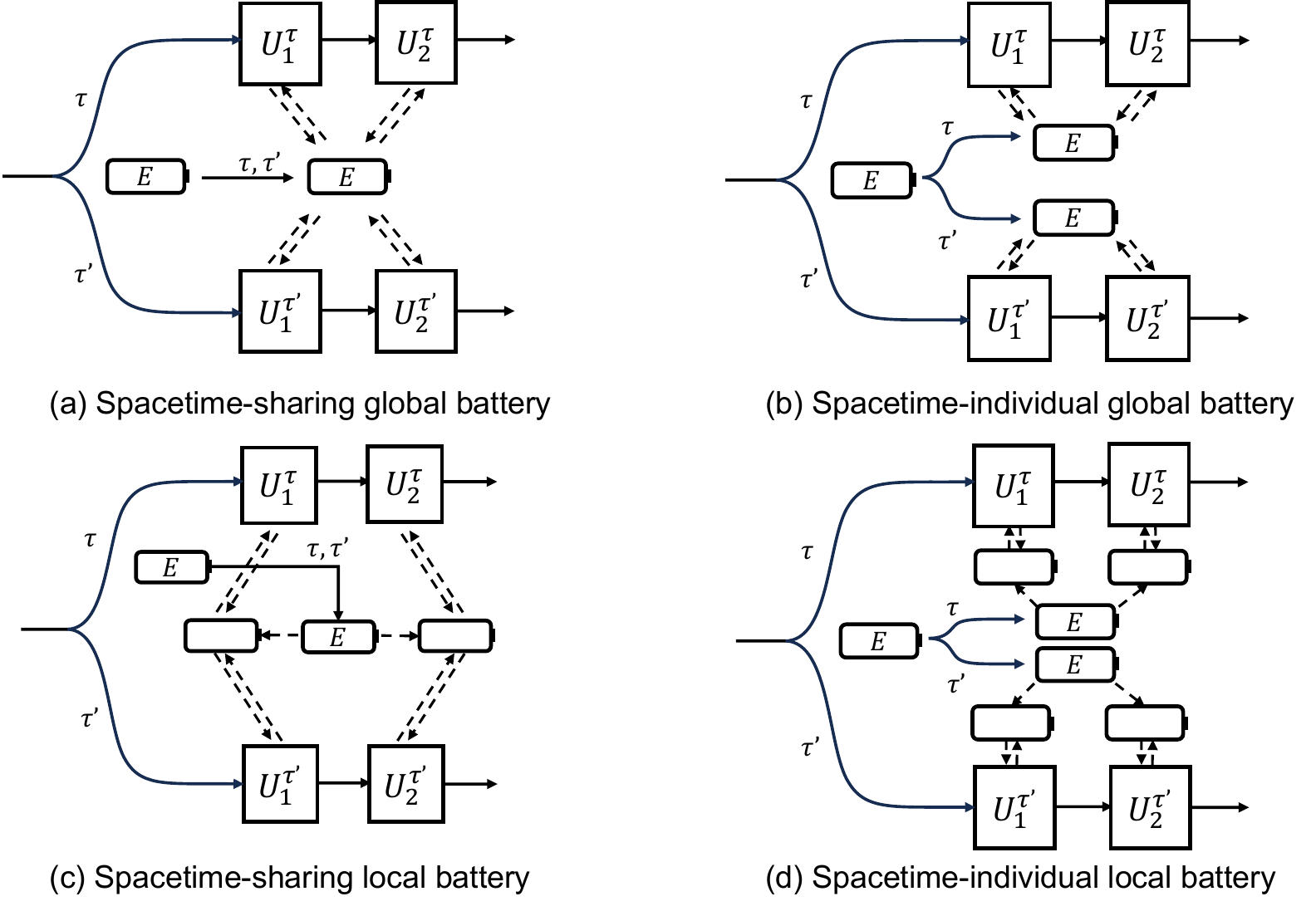}
    \caption{The possible battery models for the causal superposition. The solid arrows represent the time flow in the spacetime corresponding to the causal orders $\tau,\tau'$. The dashed arrows represent the directions of the energy flow.}
    \label{fig:causal_battery}
\end{figure}


\subsubsection{Global vs. Local}
The global battery offers an advantage by enabling adaptive energy distribution between different steps, transferring energy from less consuming steps to those with higher consumption during the process. It achieves energy savings for a process by reducing waste and maximizing the utility of the supplied energy. Since this separation emerges in definite causal order processes, we mainly focus on definite causal order here for convenience.

However, when the maximum energy cost can be attained for all the steps at the same time, the global battery provides no advantage, as there is no extra energy to be transferred. This implies that in a Markovian process, the global battery cannot outperform the local one. This is because, for a Markovian process, each step is independent of the inputs to the other steps, which enables us to choose the most energy-consuming input for each step simultaneously.

Therefore, we consider a two-step non-Markovian process. The process is controlled by a qubit with an arbitrarily given initial state. We suppose the Hamiltonian of the control qubit is degenerate. When the control qubit is $\ket{0}_c$, the first step is a Hadamard gate $\mathbf{H}$ and the second step is an identity operation $I$. When the control qubit is $\ket{1}_c$, the first step is $I$ and the second step is $\mathbf{H}$.

In this process, since the position where the Hadamard gate $\mathbf{H}$ acts cannot be known in advance, the only feasible approach for the local battery is to supply the energy $E_\mathbf{H}$ required by $\mathbf{H}$ in both steps, resulting in a total energy consumption of $2E_\mathbf{H}$. In contrast, if a global battery is available to transfer energy, then only $E_\mathbf{H}$ energy is sufficient, because only one $\mathbf{H}$ is applied in each component of the control ancilla.

This can be formally verified by the characterization of energy consumption in Thm.~\ref{thm:energy_constrained_comb} in the main text. Let $C_\mathbf{H} = (I \otimes \mathbf{H})\ket{\Omega}\bra{\Omega}(I \otimes \mathbf{H}^\dagger)$ be the Choi operator of a single Hadamard gate, where $\ket{\Omega} = \ket{00} + \ket{11}$ refers to the unnormalized maximally entangled state. If the system Hamiltonian is $H=\ket{1}\bra{1}$, the cost of a single Hadamard gate is 
\begin{equation}\label{eq:cost_of_Hadamard}
    E_{\mathbf{H}} = \lambda_{\max}(\Tr_1[(I \otimes H)C_{\mathbf{H}}] - H^T) = \sqrt{2}/2.
\end{equation}
The quantum comb (i.e., the Choi operator) of the constructed process is
\begin{equation}
    C_1 = \sum_{j,k=0}^1 \ket{j}_c\ket{j}_{c'}\bra{k}_c\bra{k}_{c'} \otimes \left[U_j (\ket{\Omega}\bra{\Omega})^{\otimes 2} U_k^\dagger \right],
\end{equation}
with $U_0 = I \otimes \mathbf{H} \otimes I \otimes I$, $U_1 = I \otimes I \otimes I \otimes \mathbf{H}$.
Within the local battery model, the energy consumption of $C_1$ is
\begin{equation}
    \begin{aligned}
        & E_{\text{loc}}(C_1) \\
        =& \max\{ \lambda_{\max}(\Tr_{c',1,3}[(I_{c,c'} \otimes I \otimes H \otimes I \otimes I)C_1] - I_c \otimes H^T \otimes I), 0\}  \\
        &\quad +\max\{ \lambda_{\max}(\Tr_{c',1,3}[(I_{c,c'} \otimes I \otimes I \otimes I \otimes H)C_1] - I_c \otimes I \otimes H^T), 0\} \\
        =& \sqrt{2} \\
        =& 2E_{\mathbf{H}}.
    \end{aligned}
\end{equation}
Within the global battery model, the energy consumption of $C_1$ is
\begin{equation}
    \begin{aligned}
        & E_{\text{gl}}(C_1) \\
        =& \max\{ \lambda_{\max}(\Tr_{c',1,3}[(I_{c,c'} \otimes I \otimes H \otimes I \otimes I)C_1] - I_c \otimes H^T \otimes I),  \\
        &\quad  \lambda_{\max}(\Tr_{c',1,3}[(I_{c,c'} \otimes I \otimes H \otimes I \otimes I + I_{c,c'} \otimes I \otimes I \otimes I \otimes H)C_1] - I_c \otimes H^T \otimes I- I_c \otimes I \otimes H^T)\} \\
        =& \sqrt{2}/2 \\
        =& E_{\mathbf{H}}.
    \end{aligned}
\end{equation}

Applying the models to metrology, we can obtain an energy-efficiency advantage for the global battery. For instance, it can be illustrated by the example presented in Fig.~\ref{fig:hierarchy} in the main text.

\subsubsection{Spacetime-sharing vs. Spacetime-individual}
The spacetime-sharing battery is stronger than the spacetime-individual battery, as it allows the energy to flow between the branches of spacetime. As a concrete example, consider a single-step operation in a superposition of two causal orders, where the state of the order-control system is a uniform superposition qubit $\frac{\ket{0}_{\text{oc}} + \ket{1}_{\text{oc}}}{\sqrt{2}}$. In the first causal order $\tau$, a Hadamard gate $\mathbf{H}$ is applied, while in the second order $\tau'$, the identity operation $I$ is applied. This operation is
\begin{equation}
    \mathbf{H} \otimes \ket{0}\bra{0}_{\text{oc}} + I \otimes \ket{1}\bra{1}_{\text{oc}}.
\end{equation}
As this is a single-step process, there is no difference between the global battery and the local battery. Therefore, we assume the discussed battery is global for convenience. For a spacetime-individual battery, the energy flow between the spacetime branches is prohibited, and thus the energy required for this operation equals that of a single Hadamard gate. This is because the spacetime of order $\tau$ requires the amount of energy $E_\mathbf{H}$ to implement $\mathbf{H}$, and the energy in $\tau'$ remains inaccessible for $\tau$ even if the identity operation consumes no energy. However, for the spacetime-sharing battery, the energy from $\tau'$ can be transferred to $\tau$, reducing the total energy requirement by half to $E_\mathbf{H}/2$.

Formally, with the fixed state $\frac{\ket{0}_{\text{oc}} + \ket{1}_{\text{oc}}}{\sqrt{2}}$ of the order-control system, tracing out the order-control system in the Choi operator $C_2$ of the constructed process, we obtain an ensemble given by $C^\tau_2 = C_{\mathbf{H}}$ and $C^{\tau'}_2 = \ket{\Omega}\bra{\Omega}$ with $p_\tau=p_{\tau'}=1/2$. For a spacetime-individual battery, the energy consumption is
\begin{equation}
    E_{\text{ind,gl}}(C_2) = \max\{E_{\text{gl}}(C^\tau_2), E_{\text{gl}}(C^{\tau'}_2) \} = E_{\mathbf{H}}.
\end{equation}
For a spacetime-sharing battery, the energy consumption is
\begin{equation}
    E_{\text{sh,gl}}(C_2) = \frac{1}{2} E_{\text{gl}}(C^\tau_2) + \frac{1}{2}E_{\text{gl}}(C^{\tau'}_2)  = E_{\mathbf{H}}/2.
\end{equation}

To see their separation in metrology, we consider an example where two channels are $\mathrm{e}^{-\mathrm{i}\theta Z/2}$ and $\mathrm{e}^{\mathrm{i}\theta X/2}$, both preceded by a bit-flip channel $\rho \to \frac{1}{2}\rho + \frac{1}{2}X\rho X$. With the Hamiltonian $H=\ket{1}\bra{1}$, if the different causal orders are performed in a uniform superposition, then the optimal precision with respect to the provided energy $E$ is depicted in Fig.~\ref{fig:ind_vs_sh}.
\begin{figure}
    \centering
    \includegraphics[width=0.6\linewidth]{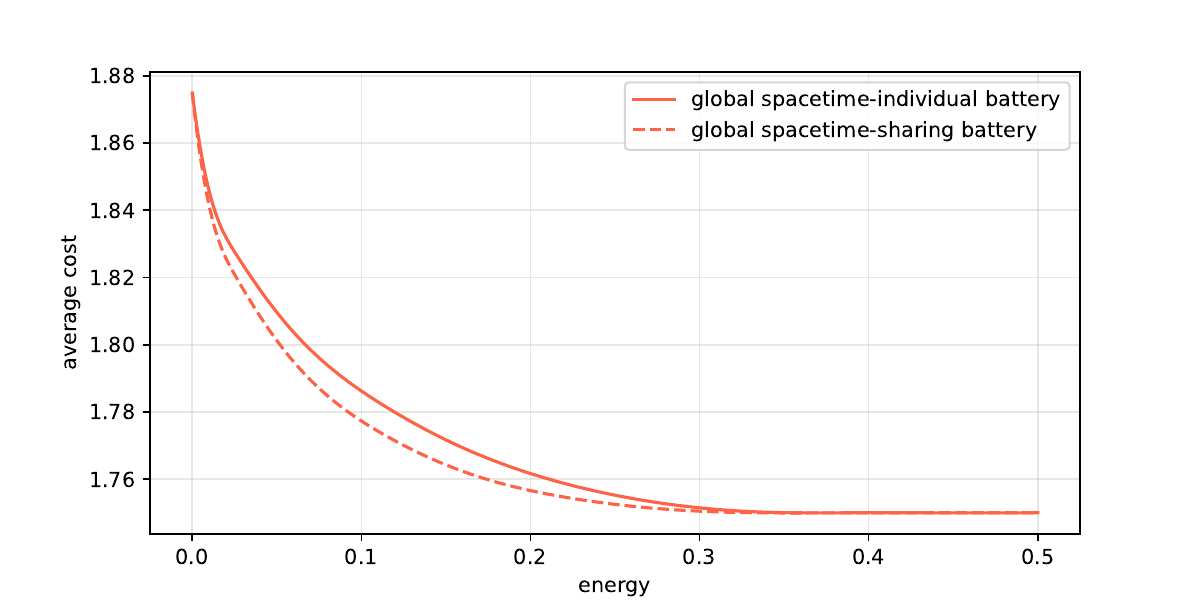}
    \caption{The advantage of the spacetime-sharing battery over the spacetime-individual battery.}
    \label{fig:ind_vs_sh}
\end{figure}

\section{Numerical Simulation for Fisher Information} \label{sec:Fisher_information}
In this section, we carry out a numerical simulation for the energy-constrained Fisher information for the qubit phase channel, demonstrating the degradation of the optimal variance in the low-energy regime.

We first derive a numerical method to optimize the Fisher information under energy constraints in the qubit case. Recall that, in the energy-constrained framework, to constrain the energy consumption of the final measurement as well, we have delegated the energy-consuming operations to the probe process $T$ and fixed the final measurement as a projection in the energy eigenbasis. In other words, involving the final measurement, the probe $T$ should be diagonal on the final output space, i.e., $T = \sum_x T_x \otimes \ket{x}\bra{x}$ with the energy eigenbasis $\{\ket{x}\}_x$ and the quantum instrument $\{T_x\}_x$. If we apply the probe process $T$ to a parameterized process $C_\theta$, the final output state $\rho_\theta := T * C_\theta$ of the combined process is actually the post-measurement state, which is a diagonal state $\rho_\theta = \sum_x p(x|\theta) \ket{x}\bra{x}$ with the measurement outcome distribution $\{p(x|\theta)\}_x$. Therefore, the quantum Fisher information $\mathcal{I}(\rho_\theta)$ of $\rho_\theta$ equals the classical Fisher information of the outcome distribution given by
\begin{equation}
\mathcal{I}_c(\{p(x|\theta)\}_x) = \sum_x \frac{[\partial p(x|\theta)/\partial\theta]^2}{p(x|\theta)}.
\end{equation}
For the qubit case, the measurement outcome $x$ is in $\{0,1\}$. It follows that $p(0|\theta) + p(1|\theta) = 1$ and $\frac{\partial p(0|\theta)}{\partial \theta} + \frac{\partial p(1|\theta)}{\partial \theta} = 0$. Thus,
\begin{equation}
\mathcal{I}_c(\{p(x|\theta)\}_x) = \left( \frac{\partial p(0|\theta)}{\partial \theta} \right)^2 \left( \frac{1}{p(0|\theta)} + \frac{1}{p(1|\theta)} \right).
\end{equation}
With $p(x|\theta) = T_x * C_\theta$, by the linearity of the link product, we have $\frac{\partial p(x|\theta)}{\partial \theta} = T_x * \frac{d C_\theta}{d \theta}$. Defining $p_0 := p(0|\theta)$, the maximization of $\mathcal{I}(\rho_\theta) = \mathcal{I}_c(\{p(x|\theta)\}_x)$ can be reformulated as
\begin{equation}
\begin{aligned}
    \max_{0 < p_0 < 1} \max_{T} \quad &\left( T_0 * \frac{d C_\theta}{d \theta} \right)^2 \left( \frac{1}{p_0} + \frac{1}{1-p_0} \right) \\
    \text{s.t.} \quad & T \in \Comb^{\leq E}_{\text{gl/loc}}, \\
    & T_0 * C_\theta = p_0.
\end{aligned}
\end{equation}
If the value of $p_0$ is fixed, this can be decomposed into two subproblems with objectives $\pm (T_0 * \frac{d C_\theta}{d \theta})$, which can be solved by semidefinite programming. Therefore, we can obtain a (nearly) optimal choice by iterating over different values of $p_0$ and solving the corresponding subproblems efficiently with semidefinite programming.

\begin{figure}[htbp]
    \centering
    \includegraphics[width=0.45\linewidth]{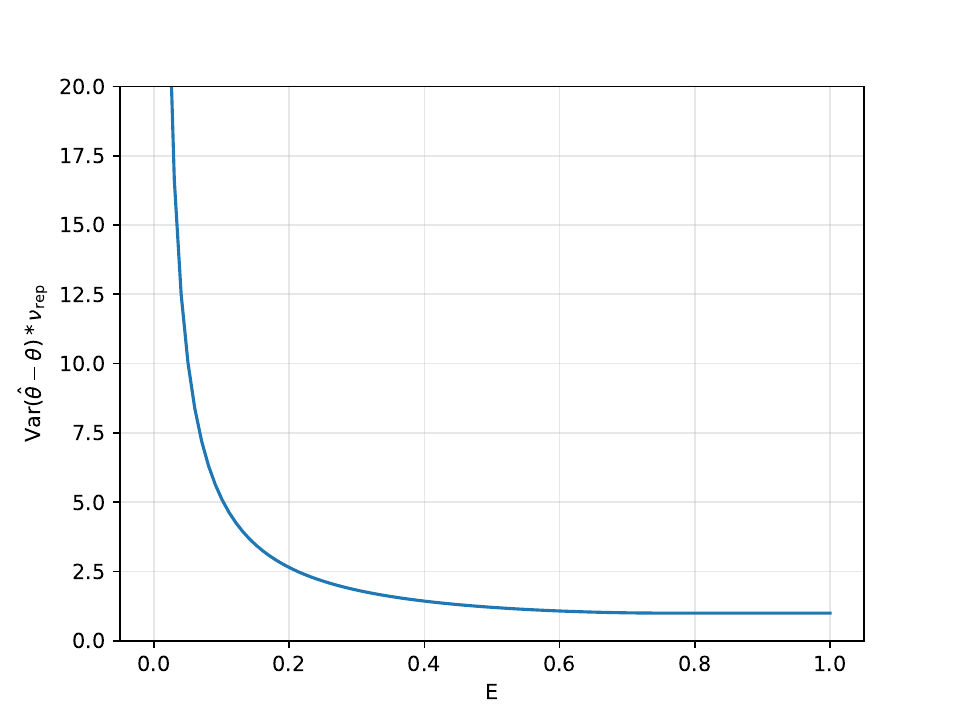}
    \caption{The reciprocal of the Fisher information (equals to $\Var(\hat{\theta}-\theta)\cdot\nu_{\rm rep}$ when the number of repetitions $\nu_{\text{rep}} \gg 1$ according to the single-parameter Cram\`{e}r-Rao bound) for the qubit phase estimation with an energy budget $E$. The probe process $T$ is in the $E$-globally constrained comb set $\Comb^{\leq E}_{\text{gl}}$.}
    \label{fig:fisher_information}
\end{figure}

The numerical simulation results for qubit phase estimation with $\theta = \pi$ are shown in Fig.~\ref{fig:fisher_information}. The optimal variance increases rapidly when the energy $E$ is insufficient, reflecting the limitation of the unbiased estimator under energy-constrained conditions.

However, for a general qudit system, the maximization of the Fisher information is difficult to solve directly via convex optimization methods due to the non-concavity of the Fisher information.


\end{document}